\theoremstyle{definition}
\newtheorem{example}{Example}[section]
\newtheorem{theorem}{Theorem}[section]
\newtheorem{lemma}[theorem]{Lemma}
\newtheorem{definition}{Definition}[section]
\DeclareMathAlphabet{\mathqhv}{OT1}{qhv}{m}{n}
\newcommand{\sa}{$GS\!A^{i}$}
\newcommand{\same}{GS\!A^{i}}
\newcommand{\gsa}{$S\!A_{\mathcal{F}^{i}}$}
\newcommand{\gsame}{S\!A_{\mathcal{F}^{i}}}
\newcommand{\exppset}{$\mathcal{F}^{i}_{exp}$}
\newcommand{\exppsetme}{\mathcal{F}^{i}_{exp}}
\newcommand{\unformat}[2]{\sisetup{scientific-notation = false, round-precision=#1}\num{#2}}
\journal{Special Issue I\&C}
\begin{document}

\newcommand{\ill}[1]{\texttt{ill#1}}
\newcommand{\hg}[1]{\texttt{hg#1}}
\newcommand{\hifi}{\texttt{pbhf}}
\newcommand{\ecoli}{\texttt{ecg31k}}
\begin{frontmatter}



\title{Efficient Construction of the BWT for Repetitive Text using String Compression} 


\author[fin,cebib]{Diego D\'iaz-Dom\'inguez}
\author[chl,cebib]{Gonzalo Navarro}

\affiliation[fin]{organization={Department of Computer Science, University of Helsinki},
            country={Finland}
}
\affiliation[chl]{organization={Department of Computer Science, University of Chile},
            country={Chile}
}
\affiliation[cebib]{organization={CeBiB --- Center for Biotechnology and Bioengineering}, country={Chile}}

\begin{abstract}
We present a new semi-external algorithm that builds the Burrows--Wheeler transform variant of Bauer et al.\ (a.k.a., BCR BWT) in linear expected time. Our method uses compression techniques to reduce computational costs when the input is massive and repetitive. Concretely, we build on induced suffix sorting (ISS) and resort to run-length and grammar compression to maintain our intermediate results in compact form. Our compression format not only saves space but also speeds up the required computations. Our experiments show important space and computation time savings when the text is repetitive. In moderate-size collections of real human genome assemblies (14.2 GB - 75.05 GB), our memory peak is, on average, 1.7x smaller than the peak of the state-of-the-art BCR BWT construction algorithm (\texttt{ropebwt2}), while running 5x faster. Our current implementation was also able to compute the BCR BWT of 400 real human genome assemblies (1.2 TB) in 41.21 hours using 118.83 GB of working memory (around 10\% of the input size). Interestingly, the results we report in the 1.2 TB file are dominated by the difficulties of scanning huge files under memory constraints (specifically, I/O operations). This fact indicates we can perform much better with a more careful implementation of our method, thus scaling to even bigger sizes efficiently. 
\end{abstract}


\begin{highlights}
\item We introduce a new algorithm to build the Burrows-Wheeler Transform on massive and highly repetitive text collections.
\item Our method builds on Induced Suffix Sorting and uses grammar compression to maintain the intermediate results in compressed form.
\item Our experiments demonstrate that our particular format saves significant space and computation time. 
\end{highlights}

\begin{keyword}
BWT \sep string compression \sep repetitive text


\end{keyword}

\end{frontmatter}


\section{Introduction} \label{sec:itr}

The Burrows--Wheeler transform (BWT)~\cite{bw94} is a reversible string transformation that reorders the symbols of a text $T$ according to the lexicographical ranks of its suffixes. The features of this transform have turned it into a critical component for text compression and indexing \cite{Ohl13,makinen2015genome}. In addition to being reversible, the reordering reduces the number of equal-symbol runs in $T$, thus improving the compressibility. The BWT is also the main component of the so-called FM-index~\cite{fe00op,g2018op}, a self-index that supports pattern matching in time proportional to the pattern length. Briefly, the FM-index encodes $T$ as its BWT and then uses its combinatorial properties~\cite{ga17wh} to look for patterns in the text efficiently. Popular bioinformatic tools~\cite{la09ul,li2010fast} rely on the FM-index to process data, as collections in this area are typically massive and repetitive, and the patterns to search for are short.

The FM-index is an important breakthrough as it dramatically reduces space usage compared to the classical suffix tree~\cite{we73li} and suffix array~\cite{MM93}. However, it still uses space proportional to $T$ (i.e., succinct), making it impractical for massive input. This problem is relevant as massive collections are nowadays standard in many fields. A fortunate coincidence is that massive collections are usually highly repetitive too, and in that case, the number of equal-symbol runs in the BWT (denoted $r$ in the literature) is considerably smaller than the text size. Gagie et al.~\cite{g2018op} exploited this property to design the $r$-index, a compressed self-index that requires $O(r)$ bits of space and still supports efficient pattern matching.

The $r$-index is a promising solution to compress and index massive collections, but it lacks efficient construction algorithms that scale well with the input size. This limitation, of course, hampers its adoption in practical applications. One of the most important challenges (although not the only one) is how to obtain the BWT of $T$. Several algorithms in the literature produce the BWT in linear time~\cite{ok2009li,bauer13lw,lou20gsuf,egidi19ext,bon20com}. Nevertheless, the computational resources their implementations require with large inputs are still too high. This limitation is particularly evident in Genomics, where the data can easily reach terabytes~\cite{st15big}. 

Some authors~\cite{kem19str,b2019pr,kemp19op,kemp20res,bou21com} have tackled the problem of computing big BWTs by exploiting the repetitiveness of the input. Their approach consists of extracting a set of representative strings from the text, performing calculations on them, and then extrapolating the results to the copies of those strings. For instance, the methods of Boucher et al.~\cite{b2019pr,bou21com} based on prefix-free parsing (PFP) use Karp--Rabin fingerprints~\cite{ka87ef} to create a dictionary of prefix-free phrases from $T$. They then create a parse by replacing the phrases in $T$ with metasymbols, and finally construct the BWT using the dictionary and the parse. Similarly, Kempa et al.~\cite{kem19str} consider a subset of positions in $T$ they call a string synchronizing set, from which they compute a partial BWT they then extrapolate to the whole text.

Although these repetition-aware techniques are promising, some are at a theoretical stage~\cite{kem19str, kemp19op, kemp20res}, while the rest~\cite{b2019pr,bou21com} have been empirically tested only under controlled settings, and their results depend on parameters that are not simple to tune. Thus, it is difficult to assess their performance under real circumstances. 

Recently, Nunes et al.~\cite{n2018gr} proposed a method called GCIS that adapts the concept of \emph{induced suffix sorting} (ISS) for compression. Their ideas are closely related to the linear-time BWT algorithm of Okanohara et al.~\cite{ok2009li}. Briefly, Okanohara et al.~cut the text into phrases using ISS, assigning symbols to the phrases, and then replacing the phrases with their symbols. They apply this procedure recursively until all the text symbols are unique. Then, when they return from the recursions, they induce an intermediate $BWT^{i}$ for the text of every recursion $i$ using the previous $BWT^{i+1}$. 
The connection between these two methods is that GCIS captures in the grammar precisely the information that Okanohara et al.~use to compute the BWT. Additionally, D\'iaz-Dom\'inguez et al.~\cite{diaz21gram} recently demonstrated that ISS-based compressors such as GCIS require much less computational resources than state-of-the-art methods like RePair~\cite{lar00off} to encode the data while maintaining high compression ratios. The simple construction of ISS makes it an attractive alternative to processing high volumes of text. In particular, combining the ideas of Okanohara et al. with ISS-based compression is a promising alternative for computing big BWTs.

\textbf{Our contribution.} {\em Induced suffix sorting} (ISS)~\cite{ko2005space} has proved useful for compression~\cite{n2018gr, diaz21gram} and for constructing the BWT~\cite{ok2009li}. In this work, we show that compression can be incorporated into the internal stages of the BWT computation in a way that saves both working space and time. Okanohara et al.~\cite{ok2009li} use ISS to construct the BWT as follows: they build the texts $T^{1}, T^{2}, \ldots, T^{h}$, with $h=O(\log n)$, by applying recursive rounds of parsing that cut each $T^{i}$ into phrases and replace the phrases by new symbols. Then, when they return from the recursions, they induce the BWT of every $T^{i}$ from the BWT of the previous text $T^{i+1}$, generating the final BWT when they reach the first recursion level again. We use a technique similar to grammar compression to store the sets of phrases generated in the rounds of parsing, and run-length compression for the intermediate BWTs. This approach is shown not only to save the space required for those intermediate results but, importantly, the format we choose speeds up the computation of the final BWT as we return from the recursion because the factorizations that help save space also save redundant computations. Unlike Okanohara et al., we receive as input a string collection and output its BCR BWT~\cite{bauer13lw}, a variant for string collections. The reason is that massive datasets usually contain multiple strings, in which case the BCR BWT variant is simpler to construct.

Early versions of this work appeared in {\em Proc.~DCC'21}~\cite{diaz21gram} and {\em Proc. CPM'22}~\cite{DNcpm22.2}. In this extended article, we explain how to produce a smaller set of phrases in each recursion of parsing than the one we obtain by applying the regular ISS procedure. We aim to keep working memory and CPU consumption low, even for not-so-repetitive collections. Our technique to reduce the set of phrases is simple enough, so the extra time we spend in this step increases the overall performance. Additionally, we explain how to extend our ideas to compute the smallest BCR BWT (in terms of the number of runs) one can obtain by reordering the strings of the text. Finally, we empirically assessed our techniques in massive datasets.

Our experiments show that when the input is a collection of human genomes (a repetitive dataset), we are, on average, 5x faster than \texttt{ropebwt2}~\cite{li2010fast}, one of the most efficient implementations of the BCR BWT algorithm. In the same datasets, we also outperform the PFP-based methods \texttt{pfp-ebwt}~\cite{bou21com} and \texttt{r-pfpbwt}~\cite{ol23rec}, being on average 2.8x faster than them while using much less memory. We also report the construction of the BCR BWT for a collection of 1.2 TB using an amount of working memory that did not exceed 10\% of the input size, and a running time of less than 42 hours. Under not-so-repetitive scenarios (short Illumina reads), we are the second fastest tool and the second most space-efficient on average, being outperformed only by \texttt{ropebwt2} and \texttt{BCR\_LCP\_GSA}~\cite{bauer13lw}, respectively.
\section{Related Concepts}

\subsection{Grammar and Run-length Compression}\label{ssec:comp}

\emph{Grammar compression}~\cite{Kieffer2000} consists of encoding a text $T$ as a small context-free grammar $\mathcal{G}$ that produces only $T$. Formally, a grammar is a tuple $(V,\Sigma, \mathcal{R}, \mathqhv{S})$, where $V$ is the set of nonterminals, $\Sigma$ is the set of terminals, $\mathcal{R}$ is the set of replacement rules and $\mathqhv{S} \in V$ is the start symbol. The right-hand side of $\mathqhv{S} \rightarrow C \in \mathcal{R}$ is referred to as the compressed form of $T$. The size of $\mathcal{G}$ is usually measured in terms of the number of rules, the sum of the lengths of the right-hand sides of $\mathcal{R}$, and the length of the compressed string.

\emph{Run-length compression} encodes the equal-symbol~runs of maximal length in $T$ as a sequence $(c_1, \ell_1),(c_2,\ell_2),\ldots,(c_{n'}, \ell_{n'})$ of $n'\leq n$ pairs, where every $(c_i, \ell_i)$, with $i \in [1, n']$, stores the symbol $c_i \in \Sigma$ of the $ith$ run and its length $\ell_{i} \geq 1$. For instance, let $T[j..j']=cccc$ be a substring with four consecutive copies of $c$, where $T[j-1]\neq a$ and $T[j'+1]\neq c$. Then $T[j..j]$ compresses to $(c,4)$.

\subsection{The Suffix Array}\label{ssec:sa}

The \emph{suffix array}~\cite{MM93} of a string $T[1..n] \in \Sigma^{*}$ is a permutation $S\!A[1..n]$ that enumerates the suffixes $T[j..n]$ of $T$ in increasing lexicographic order, $T[S\!A[j]..n] < T[S\!A[j+1]..n]$, for $j \in [1..n-1]$. It is customary to divide $S\!A$ into $\sigma$ \emph{buckets}. Specifically, a bucket $c \in \Sigma$ is a contiguous range $S\!A[j_c..j_{c+1}-1]$ storing the text positions of the suffixes of $T$ prefixed by $c$.

The \emph{generalized suffix array}~\cite{shi1996} is a variant of $S\!A$ that enumerate the suffixes of a string collection $\mathcal{T}=\{T_{1},T_{2},\ldots,T_{k}\}$ over the alphabet $\Sigma$. Let $T=T_1\texttt{\$}_{1}T_{2}\texttt{\$}_{2}\cdots T_{k}\texttt{\$}_{k}$ be a string over the alphabet $\{\texttt{\$}_1, \texttt{\$}_2, \ldots, \texttt{\$}_k\} \cup \Sigma$ storing the concatenation of $\mathcal{T}$ such that each $T_x \in \mathcal{T}$ ends in $T$ with a unique sentinel $\texttt{\$}_{x}$. The values of the $k$ distinct sentinels $\texttt{\$}_{1}, \ldots, \texttt{\$}_{k} \notin \Sigma$ are chosen arbitrarily but are smaller than any symbol in $\Sigma$. The generalized suffix array of $\mathcal{T}$ is then a vector $GS\!A[1..n=|T|]$ equal to the suffix array of $T$. Put simply, $GS\!A$ sorts the suffixes in lexicographical order, breaking ties for equal suffixes according to the order of the strings in $\mathcal{T}$ that the sentinels induce. 
Still, in practice, a construction algorithm for $GS\!A$ does not require keeping an explicit set of $k$ sentinels. One can get the same result by concatenating the elements of $\mathcal{T}$ in $T=T_1\texttt{\$}T_{2}\texttt{\$}\cdots T_{k}\texttt{\$}$ using the same symbol $\texttt{\$}$ as a boundary between strings, sorting the suffixes of $\mathcal{T}$ (substrings in $T$) lexicographically, and breaking ties for equal suffixes according to an arbitrary rule. Figure~\ref{fig:bct_bwt_examp} shows an example of $GS\!A$.

\subsection{The Burrows--Wheeler Transform}\label{ssec:bwt}

Let $T[1..n]$ be a string over the alphabet $\Sigma \cup \{\texttt{\$}\}$ where $\texttt{\$}$ is smaller than any symbol in $\Sigma$ and only occurs in $T[n]$.  The \emph{Burrows--Wheeler transform} (BWT)~\cite{bw94} of $T$ is a reversible string transformation that stores in $BWT[j]$ the symbol that precedes the $jth$ suffix of $T$ in lexicographical order, i.e., $BWT[j] = T[S\!A[j]-1]$ (assuming $T[0]=T[n]=\texttt{\$}$).

The mechanism to revert the transformation is the so-called $\mathsf{LF}$ mapping. Given an input position $BWT[j]$ that maps a symbol $T[u]$, $\mathsf{LF}(j) = j'$ returns the index $j'$ such that $BWT[j']=T[u-1]$ maps the preceding symbol of $T[u]$. Thus, spelling $T$ reduces to continuously applying $\textsf{LF}$ from $BWT[1]$, the symbol to the left of $T[n]=\texttt{\$}$, until reaching $BWT[j]=\texttt{\$}$. 

The BCR BWT~\cite{bauer13lw} is a reversible transformation that reorders the symbols of $\mathcal{T}=\{T_{1},T_{2},\ldots,T_{k}\}$. Consider again $T=T_1\texttt{\$}_1T_{2}\texttt{\$}_2\cdots T_{k}\texttt{\$}_k$, the sequence of length $n=|T|$ storing the concatenation of the strings in $\mathcal{T}$ separated by unique sentinels $\texttt{\$}_1 \ldots \texttt{\$}_k \notin \Sigma$. Additionally, let us define the vector $GS\!A[1..n]$ for $\mathcal{T}$ using the order $\texttt{\$}_1<\texttt{\$}_2\ldots <\texttt{\$}_k$. The BCR BWT of $\mathcal{T}$ is a vector $BWT_{bcr}[1..n]$ storing in $BWT_{bcr}[j]$ the symbol $T[GS\!A[j]-1]$. It is worth mentioning that when $T[GS\!A[j]]=T_{x}[1]$ maps to the leftmost symbol of a string $T_x \in \mathcal{T}$, $BWT_{bcr}[j]=\texttt{\$}_{x}$ is (theoretically) the sentinel at the end of $T_{x}$ in $T$. Spelling strings in $\mathcal{T}$ from $BWT_{bcr}$ works similarly to the procedure in the standard BWT. Successive rounds of $\mathsf{LF}$ operations starting from $BWT_{bcr}[1]$ and finishing when the symbol in $BWT_{bcr}[j']$ is a sentinel spells $T_{1}$ from right to left. The same procedure but starting from $BWT_{bcr}[2]$ spells $T_2$, and so on. Figure~\ref{fig:bct_bwt_examp} depicts an example of $BWT_{bcr}$.

A common measure of compression for a text is the number of equal-symbol runs in its BWT (denoted $r$ in the literature), but when the text is a collection, the value $r$ associated with its BCR BWT varies depending on the order of the special symbols $\texttt{\$}_1,\ldots,\texttt{\$}_k$. Thus, the optimal BCR BWT $(BWT_{opt})$ is the transform built with the sentinel ordering that minimizes $r$. Bentley et al.~\cite{ben20ont} proposed a linear-time procedure (referred to here as \textsf{CAO}\footnote{The word stands for \emph{Constraint Alphabet Ordering}, the original name Bentley et al. gave to the problem they were studying.}) that receives as input $BWT_{bcr}$ and produces $BWT_{opt}$. They consider the partition $(s_1, e_1), (s_2, e_2), \ldots, (s_x, e_x)$ of $BWT_{bcr}$ induced by equal suffixes of $\mathcal{T}$.~That is, every block $BWT_{bcr}[s_u..e_u]$, with $u \in [1..x]$, stores the left-context symbols of different suffixes of $\mathcal{T}$ that spell the same sequence. They regard the partition as a vector $\mathcal{A}$ where every $uth$ element is a tuple collapsing the symbols of $BWT_{bcr}[s_u..e_u]$ by their values. Thus, the $uth$ tuple is a sequence $\mathcal{A}[u] = (c_{1}, \ell_{1}),\ldots,(c_{b}, \ell_{b})$ of $1\leq b \leq |\Sigma \cup \{\texttt{\$}\}|$ pairs where $(c_{p}, \ell_{p})$, with $p \in [1,b]$, groups the $\ell_{p}$ occurrences of symbol $c_{p} \in \Sigma$ within $BWT_{bcr}[s_u..e_u]$.

\begin{figure}[!t]
\centering
\resizebox{0.9\textwidth}{!}{%
\begin{tikzpicture}[>=stealth,thick,baseline]

\matrix (m1) [matrix of nodes, ampersand replacement=\|,
row 1/.style={nodes={font=\ttfamily}},
row 2/.style={nodes={gray!70, font=\footnotesize\bfseries}},
     ] at (0,0)  { 
 $T$\| $=$ \| a \| a \| c \| t \| $\texttt{\$}_1$ \| a \| c \| c \| t \| $\texttt{\$}_2$ \| c  \| a  \| c  \| t  \| $\texttt{\$}_3$\\
    \|     \| $1$ \| $2$ \| $3$ \| $4$ \| $5$  \| $6$ \| $7$ \| $8$ \| $9$ \| $10$ \| $11$ \| $12$ \| $13$ \| $14$ \| $15$\\
};

\matrix (m2) [matrix of nodes, ampersand replacement=\|, below= 2mm of m1,
row 1/.style={nodes={gray!70, rotate=90, font=\footnotesize\ttfamily, anchor=base west}},
row 4/.style={nodes={font=\ttfamily}},
]  { 
 \| \| $\$_{1}$\| $\$_{2}$ \| $\$_{3}$ \| aact$\$_{1}$ \| acct$\$_{3}$\| act$\$_{1}$ \| act$\$_{3}$ \| cact$\$_{3}$  \| cct$\$_{2}$ \| ct$\$_{1}$ \| ct$\$_{2}$ \| ct$\$_{3}$  \| t$\$_{1}$ \| t$\$_{2}$ \| t$\$_{3}$ \\
$GS\!A$\| $=$\| 5 \| 10 \| 15 \| 1 \| 6 \| 2 \| 12 \| 11 \| 7 \| 3 \| 8 \| 13 \| 4 \| 9 \| 14\\
\||[white]|aaa\\ 
$BWT_{bcr}$\| $=$\| t \| t \| t \| \$\textsubscript{1} \| \$\textsubscript{2} \| a \| c \| \$\textsubscript{2} \| a \| a \| c \| a \| c \| c \| c \\
};

\draw ($(m2-4-3.base west)+(0,0.4)$) rectangle ($(m2-4-5.base east)-(0,0.22)$);
\draw ($(m2-4-5.base east)+(0,0.4)$) rectangle ($(m2-4-6.base east)-(0,0.22)$);
\draw ($(m2-4-6.base east)+(0,0.4)$) rectangle ($(m2-4-7.base east)-(0,0.22)$);
\draw ($(m2-4-7.base east)+(0,0.4)$) rectangle ($(m2-4-9.base east)-(0,0.22)$);
\draw ($(m2-4-9.base east)+(0,0.4)$) rectangle ($(m2-4-10.base east)-(0,0.22)$);
\draw ($(m2-4-10.base east)+(0,0.4)$) rectangle ($(m2-4-11.base east)-(0,0.22)$);
\draw ($(m2-4-11.base east)+(0,0.4)$) rectangle ($(m2-4-14.base east)-(0,0.22)$);
\draw ($(m2-4-14.base east)+(0,0.4)$) rectangle ($(m2-4-17.base east)-(0,0.22)$);

\matrix (m3) [matrix of nodes, ampersand replacement=\|, below= 2mm of m2,
]  { 
$\mathcal{A}$\| $=$\| $[(\texttt{t}, 3)]$ \| $[(\texttt{\$}, 1)]$  \| $[(\texttt{\$}, 1)]$ \| $[(\texttt{a}, 1), (\texttt{c}, 1)]$ \| $[(\texttt{\$}, 1)]$ \| $[(\texttt{a}, 1)]$ \| $[(\texttt{c}, 1), (\texttt{a}, 2)]$ \| $[(\texttt{c}, 3)]$\\
\| \| $1$ \| $2$  \| $3$ \| $4$ \| $5$ \| $6$ \| $7$ \| $8$\\
};
\end{tikzpicture}
}
\caption{The generalized suffix array and the BCR BWT for the collection $\mathcal{T}=\{\texttt{acct}, \texttt{acct}, \texttt{cact}\}$. The string $T$ is the concatenation of $\mathcal{T}$ separated by sentinel symbols. The boxes in $BWT_{bcr}$ represent the partition induced by equal suffixes of $\mathcal{T}$. The numbers below $\mathcal{A}$ map the blocks in the partition of $BWT_{bcr}$ to tuples in $\mathcal{A}$.}
\label{fig:bct_bwt_examp}
\end{figure}

The key observation to produce $BWT_{opt}$ is that reordering the symbol within each block $BWT_{bcr}[s_x..e_x]$ \emph{only} affects the order in which one spells strings in $\mathcal{T}$ from the transform. For example, permuting symbols within $BWT[s_x..e_x]$ might produce that the string $T_x$ the BCR BWT spells from $BWT_{bcr}[1]$ via $\mathsf{LF}$ operations is no longer $T_{1}$. However, $T_{x}$ is still a member of $\mathcal{T}$. Bentley et al. noticed that one could minimize $r$ by sorting the pairs of every tuple $\mathcal{A}[j]$,  maximizing the number of matches between adjacent tuples. A match between adjacent tuples occurs when the symbol in the rightmost pair of $\mathcal{A}[j]$ equals the symbol of the leftmost pair of $\mathcal{A}[j+1]$.

\begin{example}
Consider the $BWT_{bcr}$ of Figure~\ref{fig:bct_bwt_examp}, which has $r=9$ equal-symbol runs, and its associated vector $\mathcal{A}$ (also in Figure~\ref{fig:bct_bwt_examp}). By rearranging the tuple $\mathcal{A}[7]= [(\texttt{c},1), (\texttt{a}, 2)]$ to $[(\texttt{a},2), (\texttt{c},1)]$, the symbol of $(\texttt{a},2)$ matches the symbol in $\mathcal{A}[6]=[(\texttt{a},1)]$. On the other hand, the symbol of $(\texttt{c},1)$ matches the symbol in $\mathcal{A}[8]=[(\texttt{c},3)]$. The collapse of $\mathcal{A}$ then produces the run-length-compressed vector $BWT_{opt} = (\texttt{t},3)\ (\texttt{\$},2)\ (\texttt{a},1)\ (\texttt{c},1)\ (\texttt{\$},1)\ (\texttt{a}, 3)\ (\texttt{c}, 4)$, which has $r=7$ equal-symbol runs. 
\end{example}

\textsf{CAO} finds a sorting for $\mathcal{A}$ that maximizes the number of adjacent matches in linear time. Nevertheless, it does not compute $\mathcal{A}$; it receives it as input. In this regard, Bentley et al. do not give many details on how to compute this vector efficiently. Recently, Cenzato et al.~\cite{cen22com} pointed out that it is possible to use the algorithm \textsf{SA-IS} (Section~\ref{ssec:iss}) to get a bit vector marking the blocks in the partition of $BWT_{bcr}$ induced by equal suffixes (the SAP-array~\cite{cox2012large}), from which one can easily obtain $\mathcal{A}$. 

\subsection{Induced Suffix Sorting}\label{ssec:iss}

\emph{Induced suffix sorting} (ISS)~\cite{ko2005space} computes the lexicographical ranks of a subset of suffixes in $T$ and uses the result to induce the order of the rest. This method is the underlying procedure in several algorithms that build the suffix array~\cite{n2009li,n2013pr,l2017in, karkk17eng} and the BWT~\cite{ok2009li,b2019pr} in linear time. The ISS idea introduced by the suffix array algorithm \textsf{SA-IS} of Nong et al.~\cite{n2009li} is of interest to this work. The authors give the following definitions:

\begin{definition}\label{txt:def:ltype}
A symbol $T[j]$ is called L-type if $T[j]>T[j+1]$ or if $T[j]=T[j+1]$ and $T[j+1]$ is also L-type. On the other hand, $T[j]$ is said to be S-type if $T[j]<T[j+1]$ or if $T[j]=T[j+1]$ and $T[j+1]$ is also S-type. By definition, the symbol $T[n]$, the one with the sentinel, is S-type.
\end{definition}

\begin{definition}\label{txt:def:stype}
A symbol $T[j]$, with $j \in [1..n]$, is called leftmost S-type, or LMS-type, if $T[j]$ is S-type and $T[j-1]$ is L-type. 
\end{definition}

\begin{definition}\label{txt:def:lmstype}
An LMS substring is (i) a substring $T[j..j']$ with both $T[j]$ and $T[j']$ being LMS-type symbols, and there is no other LMS symbol in the substring, for $j \neq j'$; or (ii) the sentinel itself.
\end{definition}

$\textsf{SA-IS}$ recursively sorts a subset of suffixes of $\mathcal{T}$ using ISS and then uses the result to induce the relative order of another set of suffixes. The recursion continues until there are no more suffixes of $\mathcal{T}$ to sort. When that happens, it returns from the recursion merging the distinct subsets of suffixes it processed before, producing the final suffix array $S\!A$ of $T$ when it reaches the first level of the recursion again. The key idea that makes \textsf{SA-IS} linear-time is that every recursion level operates over a sequence that is at most half the length of the sequence in the previous level.

\begin{algorithm}[!t]
\caption{\textsf{SA-IS}}
\label{algo:sais}{}
\begin{algorithmic}[1]
\small 
\Require $T^{i}[1..n^{i}]$
    \State $S\!A^{i}[1..n_i] \gets $ empty array
    \State scan $T^{i}$ right to left an insert LMS-type positions in $S\!A^{i}$
    \State scan $S\!A^{i}$ twice to sort the LMS substrings with different sequences 
    \If{all the symbols in $T^{i}$ are distinct}
    \State \textbf{return} $S\!A^{i}$
    \EndIf
    \State scan $S\!A^{i}$ to get the set $\mathcal{F}^{i}$ sorted in LMS order 
    \State $o \gets 1$
    \For{$F \in \mathcal{F}^{i}$}
        \State assign symbol $o \in \Sigma^{i+1}$ to $F$
        \State $o \gets o + 1$
    \EndFor
    \State $T^{i+1} \gets$\ replace LMS substrings in $T^{i}$ with their assigned symbols in $\Sigma^{i+1}$
    \State $S\!A^{i+1} \gets \textsf{SA-IS}(T^{i+1})$ 
    \State $S\!A^{i} \gets$\ store LMS-type positions of $T^{i}$ as their symbol in $\Sigma^{i+1}$ appear in $S\!A^{i+1}$ 
    \State scan $S\!A^{i}$ twice again to the sort the suffixes of $T^{i}$
    \State \textbf{return} $S\!A^{i}$
\end{algorithmic}
\end{algorithm}

The suffixes that \textsf{SA-IS} processes in every level $i$ are those prefixed by LMS substrings. The level receives as input a string $T^{i}[1..n^i]$ over an alphabet $\Sigma^{i}$ (when $i=1$, $T^{i}=T$ and $\Sigma^{i}=\Sigma$) and initializes an empty suffix array $S\!A^{i}[1..n^{i}]$. Then, it scans $T^{i}$ from right to left to classify the symbols as L-type, S-type, or LMS-type. As it moves through the text, the algorithm records the text positions of the LMS substrings in $S\!A^{i}$. More specifically, if $T^{i}[j]=b$ is the leftmost symbol of an LMS substring, it inserts $j$ in the rightmost empty position in the bucket $b$ of $S\!A^{i}$. After scanning $T^{i}$, $\textsf{SA-IS}$ sorts the LMS substrings in $S\!A^{i}$ using two linear scans.

\paragraph{Scans of the Suffix Array}\label{par:scan} \textsf{SA-IS} sorts the suffixes of $T^{i}$ prefixed by different LMS substrings in two linear scans of $S\!A^{i}$. The first scan traverses the array left to right, and for each index $j$ such that $T^{i}[S\!A^{i}[j]-1]=c$ is L-type, it inserts the text position $S\!A^{i}[j]-1$ in the leftmost empty cell of bucket $c$ in $S\!A^{i}$. The second scan traverses $S\!A^{i}$ from right to left. This time, for each index $j$ such that $T^{i}[S\!A^{i}[j]]=c$ is S-type, it stores the text position $S\!A^{i}[j]-1$ in the rightmost empty cell in the bucket $c$ of $S\!A^{i}$.

ISS rearranges the LMS substrings in $S\!A^{i}$ in a slightly different way  from the lexicographic order. Let $F_x[1..n_x]$ and $F_y[1..n_y]$ be two strings over $\Sigma^{i}$ labelling LMS substrings of $T^{i}$, and let $\prec_{lex}$ be the operator that denotes lexicographical order. LMS order ($\prec_{LMS}$) is defined as

\[ F_x \prec_{LMS} F_y \Leftrightarrow
   \begin{cases}
     \text{$n_y<n_x$ and $F_x[1..n_y]=F_y$} \\
     F_x \prec_{lex} F_y\ \text{otherwise.} \\
   \end{cases}
\]

For instance, it holds that $\texttt{actca} \prec_{LMS} \texttt{actc}$, which differs from the standard $\texttt{actc} \prec_{lex} \texttt{actca}$. However, for two occurrences $F_y = T^{i}[j..j']$ and $F_{x}=T^{i}[u..u']$, the higher LMS order of $F_{y}$ implies that the suffix $T^{i}[j..n^i]$ is lexicographically greater than the suffix $T^{i}[u..n^i]$. The cause of this property is explained in Section 2 of Ko and Aluru~\cite{ko2005space}.

The idea now is to use the sorted LMS substrings to induce the order of the suffixes in $T^{i}$ that are not prefixed by LMS substrings. Still, the relative order of LMS substrings with the same sequence is unknown in $S\!A^{i}$. Nong et al.~solve this problem by creating a new string $T^{i+1}$ in which they replace the LMS substrings with their LMS orders. Let $\mathcal{F}^{i}$ be the set of strings labelling LMS substrings in $T^{i}$. If a string $F \in \mathcal{F}^{i}$ has LMS order $o$ in the set, then \textsf{SA-IS} replaces the LMS substrings of $T^{i}$ labelled $F$ by $o$. Notice that now it is possible to compute the LMS orders in one scan of $S\!A^{i}$. The resulting sequence $T^{i+1}$ is used as input for another recursive call $i+1$. The base case for the recursion is when all the suffixes in $S\!A^{i}$ are prefixed by different symbols, in which case the algorithm returns $S\!A^{i}$ without further processing.

When the $(i+1)th$ recursive call ends, all the suffixes of $T^{i}$ prefixed by the same LMS substrings are sorted in $S\!A^{i+1}$, so \textsf{SA-IS} proceeds to complete $S\!A^{i}$. For doing so, it resets $S\!A^{i}$, inserts the LMS substrings arranged as their respective symbols appear in $S\!A^{i+1}$, and performs the two scans of paragraph~\ref{par:scan} to reorder the unsorted suffixes of $T^{i}$. Once it finishes, it passes $S\!A^{i}$ to the previous recursion $i-1$. The final array $S\!A^{1}$ is the suffix array for $T$. Algorithm~\ref{algo:sais} explains the general idea of \textsf{SA-IS}. We also refer the reader to Figure 2 in Louza et al.~\cite{l2017in} for a detailed running example of a recursion level in \textsf{SA-IS}.
\section{Methods}

\subsection{Definitions}

We assume the standard string notation.~For a string $T[1..n]$ over an arbitrary alphabet, $T[1..j]$ is the $jth$ prefix of $T$, and $T[j..n]$ is the $jth$ prefix. Additionally, the operator $|T|=n$ represents the number of symbols in $T$ (i.e., its length). We will refer to $T[j..n]$ as a \emph{proper} suffix if $1<j \leq n$, and non-proper otherwise. When $T$ is run-length-compressed, we use the format $c^{\ell}$ to denote an equal-symbol run of $\ell$ copies of $c$. We also use the alternative notation $(c,\ell)$ with the same meaning. We choose one format or the other depending on the context. Additionally, we use $\varepsilon$ as the empty symbol.  

Let $\mathcal{S}=\{S_{1}, \ldots, S_{z}\}$ be a string set. We consider a suffix $S_x[j..n_x]$ to be \emph{left-maximal} if there is at least one other string $S_y[1..n_{y}] \neq S_x \in \mathcal{S}$ with a suffix $S_y[j'..n_{y}]$ such that (i) $S_y[j'..n_{y}]=S_x[j..n_{x}]$, and (ii) either both $S_y[j'..n_{y}]$ and $S_x[j..n_{x}]$ are proper suffixes with $S_y[j'-1] \neq S_x[j-1]$, or one of them ($S_x[j..n_x]$ or $S_y[j'..n_y]$) is not proper. We will use the operator $|\mathcal{S}|$ for the total number of strings in $\mathcal{S}$ and $||\mathcal{S}||$ to express the total number of symbols $\sum_{S_x \in \mathcal{S}} |S_x|$. 

Through the paper, we use the superscript $i$ to denote elements of the $ith$ recursion level in \textsf{SA-IS}. Thus, the string $T^{i}[1..n^{i}]$ over the alphabet $\Sigma^{i}[1..\sigma^{i}]$ is the input for level $i$. Notice that, for instance, $\sigma^{i}$ does not mean $\sigma=\sigma^{1}$ raised to $i$. The same idea applies to other elements with the $i$ superscript. The expressions $T$ and $T^{1}$ are equivalent. The same holds for $\Sigma$ and $\Sigma^{1}$.

For a given position $T^{i}[j]$, with $j \in [1..n_i]$, the operator $exp^{u}(T^{i}[j]) \in \Sigma^{i}$, with $u<i$, denotes the string we obtain by recursively replacing the symbol $o=T^{i}[j] \in \Sigma^{i}$ with its associated phrase $F=o_1{\cdots}o_r \in \mathcal{F}^{i-1}$ over the alphabet $\Sigma^{i-1}$. The formal recursive definition of $exp^{u}$ is as follows: 

\[ exp(o)^{u} \Leftrightarrow
   \begin{cases}
     \text{$o$ if $o \in \Sigma^{u}$} \\
     \text{$exp(o_1)^{u-1}{\cdots}exp^{u-1}(o_{r-1})$ if $exp^{1}(o_r)$ does not end with $\texttt{\$}$} \\
     \text{$exp(o_1)^{u-1}{\cdots}exp^{u-1}(o_r)$ if $exp^{1}(o_r)$ ends with $\texttt{\$}$}\\
   \end{cases}
\]

Note that $exp^{u}$ removes the overlap between consecutive LMS substrings of $T^{1}$. Additionally, the function $map^{u}(T^{i}[j])=T^{u}[z..z']$, with $u<i$, returns the substring in $T^{u}$ from where $T^{i}[j]$ was formed. Formally, the boundaries $(z,z')$ in $map^{u}(T^{i}[j])$ are 

\begin{align*}
z &= 1 + \sum_{q=1}^{j-1} |exp^{u}(T^{i}[q])|\\
z' &= z+|exp^{u}(T^{i}[j])|-1. 
\end{align*}

The expressions $exp(T^{i}[j])$ and $map(T^{i}[j])$ are equivalent to $exp^{1}(T^{i}[j])$ and $map^{1}(T^{i}[j])$, respectively. 

We use the term \emph{parsing} to refer to a procedure that breaks $T^{i}[1..n^i]$ into a sequence of (possibly overlapping) substrings. The strings labelling the substring are the \emph{phrases} of the parsing. On the other hand, we use the term \emph{dictionary} to refer to an abstract associative data structure where the keys are strings and the associated values are integers.

Let $\Sigma=[1..\sigma]$ be an alphabet of $\sigma$ symbols and let $\mathcal{T}=\{T_1,\ldots,T_{k}\}$ be a collection of $k$ strings over the alphabet $\Sigma \setminus \{1\}$. The input for our algorithm is the sequence $T=T_{1}\texttt{\$}T_{2} \ldots T_{k}\texttt{\$} \in \Sigma^{*}$ of total length $n=|T|$ representing the concatenation of $\mathcal{T}$. The symbol $\texttt{\$}$ is a sentinel that we use as a boundary between consecutive strings in $T$. We set $\texttt{\$}=1$ to the smallest symbol in $\Sigma$.

\subsection{Overview of Our Algorithm}\label{sec:grlbwt_ov}

We call our algorithm for computing the BCR BWT of $\mathcal{T}$ $\mathsf{grlBWT}$. This method relies on the ideas developed by Nong et al.~in the \textsf{SA-IS} algorithm (Section~\ref{ssec:iss}) but includes elements of grammar and run-length compression (Section~\ref{ssec:comp} to reduce the space usage of the temporary data that $\mathsf{grlBWT}$ maintains in memory. Overall, this idea allows us to decrease working memory and computing time.

Similar to \textsf{SA-IS}, our method \textsf{grlBWT} is recursive. We start by parsing $T^{1}$ using a mechanism that relies on the symbol types of Section~\ref{ssec:iss} and stores the resulting parsing phrases in a set $\mathcal{F}^{1}$. Then, we use the set $\mathcal{S}^{1}$ with the strings labelling the suffixes of $\mathcal{F}^{1}$ to partition $S\!A^{1}$ such that each block $S\!A^{1}[s_x..e_x]$ encodes the suffixes of $T^{1}$ prefixed by the same string $S_x \in \mathcal{S}^{1}$. An important observation is that if $S_x$ is always preceded by the same symbol $c$ in the parsing phrases, we can compute the associated substring $BWT^{1}_{bcr}[s_x..e_x]=c^\ell$, with $\ell=e_x-s_x+1$, directly from $\mathcal{F}^{1}$. We use this idea to create a sparse version of $BWT_{bcr}$ that only contains symbols for blocks that meet our observation. We leave the other areas of $BWT_{bcr}$ empty for the moment. We refer to the sparsely populated version of $BWT_{bcr}$ as the \emph{preliminary} BCR BWT of $T^{1}$ ($pBWT^{1}$). To fill the \emph{unsolved} areas of $pBWT^{1}$, we create another string $T^{2}$ by replacing the substrings in the parsing of $T^{1}$ with their associated LMS orders in $\mathcal{F}^{1}$, and then apply the same procedure recursively over $T^{2}$. We keep recursing until we reach a base-case level $h$ where the input $T^{h}$ has $k$ symbols (i.e., the number of strings in $\mathcal{T}$). At this point, the BCR BWT of $T^{h}$ ($BWT^{h}_{bcr}$) is $T^{h}$ itself (we explain this idea in Section~\ref{ssec:indpha}). 
We refer to the process of recursing from level $1$ to level $h$ as the \emph{parsing phase} of \textsf{grlBWT}.

\begin{algorithm}[!t]
\caption{Overview of \textsf{grlBWT}}
\label{algo:ovgrlbwt}{}
\begin{algorithmic}[1]
\small 
\Require $T[1..n]$, $k$ \Comment{$T$ is the concatenation of the $k$ string in $\mathcal{T}$}
    \State $i \gets 1$
    \State $T^{1} \gets T$
    \While{$length(T^{i})\neq k$}\label{code:pf_start}\Comment{parsing phase}
        \State Build $\mathcal{F}^{i}$ by parsing $T^{i}$ \label{code:lms_par}
        \State Produce $pBWT^{i}$ from $\mathcal{F}^{i}$ and satellite data\label{code:get_pbwt} 
        \State Encode $\mathcal{F}^{i}$ using grammar compression\label{code:gram_comp_set}
        \State Build $T^{i+1}$ using $\mathcal{F}^{i}$ and $T^{i}$
        \State Store $pBWT^{i}$ and $\mathcal{F}^{i}$ on disk
        \State $i \gets i+1$
    \EndWhile\label{code:pf_end}
    
    \State $BWT^{i}_{bcr} \gets T^{i}[1..k]$\label{code:ind_phase_start} \Comment{induction phase}
    \State $i \gets i-1$
    \While{$i>0$}\label{code:if_start}
        \State Load $\mathcal{F}^{i}$ from disk to main memory
        \State Build $P^{i}$ using $BWT^{i+1}_{bcr}$ and $\mathcal{F}^{i}$
        \State Merge $P^{i}$ and $pBWT^{i}$ to produce $BWT^{i}_{bcr}$
        \State $i \gets i-1$
    \EndWhile \label{code:if_end}
    \State \textbf{return} $BWT^{1}_{bcr}$ \Comment{the BCR BWT of $\mathcal{T}$}
    
\end{algorithmic}
\end{algorithm}

The parsing phase produced the string $BWT^{h}_{bcr}$, the preliminary BWTs $pBWT^{1}, pBWT^{2},\ldots, pBWT^{h-1}$, and the sets $\mathcal{F}^{1},\mathcal{F}^{2},\ldots, \mathcal{F}^{h-1}$. Now we need to go back in the recursions to complete the execution of \textsf{grlBWT}. When we return to level $i<h$, we use $BWT^{i+1}_{bcr}$ and $\mathcal{F}^{i}$ to induce the symbols in the unsolved blocks of $pBWT^{i}$, thus producing $BWT_{bcr}^{i}$ (the BCR BWT of $T^{i}$). We keep $BWT_{bcr}^{i+1}$ and $pBWT^{i}$ in run-length-compressed format to reduce space usage and speed up the computation of $BWT^{i}_{bcr}$. Our induction procedure reads a position $BWT^{i+1}_{bcr}[j]$ and uses its information to insert symbols in multiple unsolved blocks of $pBWT^{i}$. However, equal symbols of $BWT^{i+1}_{bcr}$ yield the same information and for the same unsolved blocks of $pBWT^{i}$. Thus, if we have a run $BWT^{i+1}_{bcr}[j..j+\ell-1]=o^{\ell}$ of $\ell$ consecutive copies of $o \in \Sigma^{i+1}$, we perform the induction from $o$ only once and copy the result in $pBWT^{i}$ $\ell$ times. When we reach the recursion level $i=1$ again, $BWT^{1}_{bcr}$ becomes the BCR BWT of $\mathcal{T}$. We refer to the process of returning from recursion $h$ to recursion $1$ as the \emph{induction phase} of \textsf{grlBWT}.

\paragraph{Practical Considerations} We implement \textsf{grlBWT} as a semi-external algorithm that executes the recursion levels as iterations in a loop. In every iteration $i$, we perform the steps of recursion level $i$, keeping the information of the other levels on disk. When we execute iteration $i$ in the parsing phase, we scan $T^{i}$ linearly from the disk to produce $\mathcal{F}^{i}$. The only elements that are always in main memory are $\mathcal{F}^{i}$ and a couple of satellite vectors that help us to build $pBWT^{i}$, which is also accessed linearly from disk during its construction. Then, we compute $T^{i+1}$ from $\mathcal{F}^{i}$ externally and store $\mathcal{F}^{i}$ on disk to use it later. We encode $\mathcal{F}^{i}$ with a scheme similar to grammar compression to reduce space usage and facilitate the remaining computations. When we return to level $i$ in the induction phase, we access $BWT^{i+1}_{bcr}$ and $pBWT^{i}$ linearly from the disk. The only elements in main memory are $\mathcal{F}^{i}$ and a run-length-compressed vector $P^{i}$ that keeps the symbols we need to introduce in the unsolved blocks of $pBWT^{i}$. We produce $BWT^{i}_{bcr}$ by merging $pBWT^{i}$ and $P^{i}$ in a semi-external way. 
Algorithm~\ref{algo:ovgrlbwt} presents a general overview of \textsf{grlBWT}. 

\subsection{The Parsing Phase}\label{ssec:parpha}

This section explains the steps we perform in one iteration of the parsing phase (Lines~\ref{code:pf_start}-\ref{code:pf_end} of Algorithm~\ref{algo:ovgrlbwt}). The inputs for the iteration are $T^{i}$ and a bit vector $B^{i}[1..\sigma^{i}]$ that indicates with $B^{i}[c]=1$ if symbol $c \in \Sigma$ expands to a string $exp(c) \in \Sigma^{*}$ suffixed by the sentinel \texttt{\$}. The output of the iteration is the preliminary BCR BWT of $T^{i}$ ($pBWT^{i}$) and a compressed version of $\mathcal{F}^{i}$. 

\subsubsection{LMS parsing}

We start the iteration by producing the set $\mathcal{F}^{i}$ with the distinct parsing phrases of $T^{i}$ (Line~\ref{code:lms_par} of Algorithm~\ref{algo:ovgrlbwt}). Our mechanism to break $T^{i}$ uses the symbol types of Section~\ref{ssec:iss}, but includes some modifications to take into consideration that $T^{i}$ is a (compressed) collection rather than a single string.

\begin{definition}\label{def:lms_brk}
LMS breaks: given a string $T^{i}$ over the alphabet $\Sigma^{i}[1..\sigma^{i}]$, its sequence of LMS breaks is a set of strictly increasing integers $\mathcal{B}^{i} = \{1<j_1<j_2 < \ldots < j_z<n^i < n^{i}+1\}$ such that each $T^{i}[j_p]$, with $p \in [1..z]$, is either (i) an LMS-type position, (ii) a position where $map(T^{i}[j_p])=T^{1}[u..u']$ is suffixed by a sentinel $\texttt{\$}=T^{1}[u']$, or (iii) a position where $map(T^{i}[j_p])=T^{1}[u..u']$ is preceded by a sentinel $T^{1}[u-1]=\texttt{\$}$. Positions $1,n^{i}+1$, and those meeting condition (iii) are left-border breaks. On the other hand,  $n^{i}$ and the positions meeting condition (ii) are right-border breaks.  
\end{definition}

\begin{definition}\label{def:lms_par}
LMS parsing: the parsing of $T^{i}$ induced by consecutive breaks of $\mathcal{B}^{i}$. Let $(j_{p},j_{p+1}) \in \mathcal{B}^{i}$ be two consecutive breaks. They form the substring $T^{i}[j_{p}..j_{p+1}]$ in the LMS parsing if the following conditions hold: (i) $T^{i}[j_{p}]$ is a left-border break or an LMS-type position, and (ii) $T^{i}[j_{p+1}]$ is a right-border break or an LMS-type position. Additionally, $(j_{p},j_{p+1})$ forms the phrase $T^{i}[j_{p}..j_{p+1}-1]$ instead of $T^{i}[j_{p}..j_{p+1}]$ iff $j_{p+1}=j_{p}+1$ and both $T^{i}[j_{p}]$ and $T^{i}[j_{p+1}]$ are left-border breaks.
If $(j_{p},j_{p+1})$ does not meet any of the conditions above, it does not produce a substring in the LMS parsing.
\end{definition}

Intuitively, our definition of LMS parsing prevents the formation of phrases in $T^{i}$ that cover multiple strings of $\mathcal{T}$. Specifically, the parsing will never yield a substring $T^{i}[j..j']$ such that $map(T^{i}[j])=T^{1}[u..u']$ falls within the boundaries of a string $T_{x} \in \mathcal{T}$ and $map(T^{i}[j'])=T^{1}[o..o']$ falls within the boundaries of another string $T_y \in \mathcal{T}$. Figure~\ref{fig:lms_par} depicts an example of LMS parsing.

\begin{figure}
\resizebox{\textwidth}{!}{%
\begin{tikzpicture}[>=stealth,thick,baseline]

\matrix (m2) [matrix of nodes, ampersand replacement=\|,
              align=left,
              nodes={inner sep=2mm, anchor=base},
              row 2/.style={nodes={gray, font=\footnotesize\bfseries}},
              row 3/.style={nodes={gray, font=\footnotesize}} ] at (0,0) { 
     $T^{1}$ \| $=$ \| \texttt{g} \| \texttt{t} \| \texttt{a} \| \texttt{c} \| \texttt{c} \| \texttt{\$} \| \texttt{g} \| \texttt{t} \| \texttt{a} \|  \texttt{a} \| \texttt{t} \| \texttt{a} \| \texttt{g} \| \texttt{t} \| \texttt{a} \| \texttt{c} \| \texttt{c} \| \texttt{\$} \\
             \|     \|  $1$ \| $2$ \| $3$ \| $4$ \| $5$ \| $6$ \| $7$ \| $8$ \| $9$ \| $10$ \| $11$ \| $12$ \| $13$ \| $14$ \| $15$ \| $16$ \| $17$ \| $18$ \\
             \|     \|  $S$ \| $L$ \| $S*$ \| $L$ \| $L$ \|  \| $S$ \| $L$ \| $S*$ \| $S$ \| $L$ \| $S*$ \| $S$ \| $L$ \| $S*$ \| $L$ \| $L$ \|  \\
};
\draw[gray] ($(m2-1-3.base west)+(0,-2mm)$) -- ($(m2-1-5.base east)+(0,-2mm)$);
\draw[gray] ($(m2-1-5.base west)+(0,5mm)$) -- ($(m2-1-8.base east)+(0,5mm)$);
\draw[gray] ($(m2-1-9.base west)+(0,-2mm)$) -- ($(m2-1-11.base east)+(0,-2mm)$);
\draw[gray] ($(m2-1-11.base west)+(0,5mm)$) -- ($(m2-1-14.base east)+(0,5mm)$);
\draw[gray] ($(m2-1-14.base west)+(0,-2mm)$) -- ($(m2-1-17.base east)+(0,-2mm)$);
\draw[gray] ($(m2-1-17.base west)+(0,5mm)$) -- ($(m2-1-20.base east)+(0,5mm)$);

\path[pattern=north west lines, pattern color=gray!60] ($(m2-2-3.north west)+(1.3mm,-1.3mm)$) rectangle ($(m2-2-3.south east)-(2.5mm,-1.3mm)$);
\path[pattern=north west lines, pattern color=gray!60] ($(m2-2-5.north west)+(1.3mm,-1.3mm)$) rectangle ($(m2-2-5.south east)-(2.5mm,-1.3mm)$);
\path[pattern=north west lines, pattern color=gray!60] ($(m2-2-8.north west)+(1.3mm,-1.3mm)$) rectangle ($(m2-2-8.south east)-(2.5mm,-1.3mm)$);
\path[pattern=north west lines, pattern color=gray!60] ($(m2-2-9.north west)+(1.3mm,-1.3mm)$) rectangle ($(m2-2-9.south east)-(2.5mm,-1.3mm)$);
\path[pattern=north west lines, pattern color=gray!60] ($(m2-2-11.north west)+(1.3mm,-1.3mm)$) rectangle ($(m2-2-11.south east)-(2.5mm,-1.3mm)$);
\path[pattern=north west lines, pattern color=gray!60] ($(m2-2-14.north west)+(1.3mm,-1.3mm)$) rectangle ($(m2-2-14.south east)-(2.5mm,-1.3mm)$);
\path[pattern=north west lines, pattern color=gray!60] ($(m2-2-17.north west)+(1.3mm,-1.3mm)$) rectangle ($(m2-2-17.south east)-(2.5mm,-1.3mm)$);
\path[pattern=north west lines, pattern color=gray!60] ($(m2-2-20.north west)+(1.3mm,-1.3mm)$) rectangle ($(m2-2-20.south east)-(2.5mm,-1.3mm)$);

\matrix (m3) [matrix of nodes, ampersand replacement=\|,
              align=center,
              nodes={inner sep=2mm, anchor=base},
              row 2/.style={nodes={gray, font=\footnotesize\bfseries}},
              row 3/.style={nodes={gray, font=\footnotesize}}, above= 1mm of m2 ] { 
     $T^{2}$ \| $=$ \| 4   \| 2 \| 4   \| 1    \| 3   \| 2 \\
             \|     \| $1$\| $2$ \| $3$   \| $4$    \| $5$   \| $6$ \\
             \|     \| $L$ \|   \| $L$ \| $S*$ \| $L$ \|   \\
};
\draw[gray] ($(m3-1-3.base west)+(0,-2mm)$) -- ($(m3-1-4.base east)+(0,-2mm)$);
\draw[gray] ($(m3-1-5.base west)+(0,5mm)$) -- ($(m3-1-6.base east)+(0,5mm)$);
\draw[gray] ($(m3-1-6.base west)+(0,-2mm)$) -- ($(m3-1-8.base east)+(0,-2mm)$);

\path[pattern=north west lines, pattern color=gray!60] ($(m3-2-3.north west)+(1.3mm,-1.3mm)$) rectangle ($(m3-2-3.south east)-(1.3mm,-1.3mm)$);
\path[pattern=north west lines, pattern color=gray!60] ($(m3-2-4.north west)+(1.3mm,-1.3mm)$) rectangle ($(m3-2-4.south east)-(2.5mm,-1.3mm)$);
\path[pattern=north west lines, pattern color=gray!60] ($(m3-2-5.north west)+(1.3mm,-1.3mm)$) rectangle ($(m3-2-5.south east)-(2.5mm,-1.3mm)$);
\path[pattern=north west lines, pattern color=gray!60] ($(m3-2-6.north west)+(1.3mm,-1.3mm)$) rectangle ($(m3-2-6.south east)-(2.5mm,-1.3mm)$);
\path[pattern=north west lines, pattern color=gray!60] ($(m3-2-8.north west)+(1.3mm,-1.3mm)$) rectangle ($(m3-2-8.south east)-(2.5mm,-1.3mm)$);

\matrix (m4) [matrix of nodes, ampersand replacement=\|,
              align=center,
              nodes={inner sep=2mm, anchor=base},
              row 2/.style={nodes={gray, font=\footnotesize\bfseries}},
              row 3/.style={nodes={gray, font=\footnotesize}}, above= 1mm of m3 ] { 
     $T^{3}$ \| $=$ \| 3   \| 2  \| 1   \\
             \|     \| $1$\| $2$ \| $3$ \\
             \|     \|    \| $L$ \|    \\
};
\draw[gray] ($(m4-1-3.base west)+(0,-2mm)$) -- ($(m4-1-3.base east)+(0,-2mm)$);
\draw[gray] ($(m4-1-4.base west)+(0,5mm)$) -- ($(m4-1-5.base east)+(0,5mm)$);

\path[pattern=north west lines, pattern color=gray!60] ($(m4-2-3.north west)+(1.3mm,-1.3mm)$) rectangle ($(m4-2-3.south east)-(1.3mm,-1.3mm)$);
\path[pattern=north west lines, pattern color=gray!60] ($(m4-2-4.north west)+(1.3mm,-1.3mm)$) rectangle ($(m4-2-4.south east)-(2.5mm,-1.3mm)$);
\path[pattern=north west lines, pattern color=gray!60] ($(m4-2-5.north west)+(1.3mm,-1.3mm)$) rectangle ($(m4-2-5.south east)-(2.5mm,-1.3mm)$);

\matrix (m5) [matrix of nodes, ampersand replacement=\|,
              align=center,
              nodes={inner sep=2mm, anchor=base},
              row 2/.style={nodes={gray, font=\footnotesize\bfseries}},
              row 3/.style={nodes={gray, font=\footnotesize}}, above= 1mm of m4 ] { 
     $T^{4}$ \| $=$ \| 2   \| 1  \\
             \|     \| $1$\| $2$ \\
};

\path[pattern=north west lines, pattern color=gray!60] ($(m5-2-3.north west)+(1.3mm,-1.3mm)$) rectangle ($(m5-2-3.south east)-(1.3mm,-1.3mm)$);
\path[pattern=north west lines, pattern color=gray!60] ($(m5-2-4.north west)+(1.3mm,-1.3mm)$) rectangle ($(m5-2-4.south east)-(2.5mm,-1.3mm)$);
\end{tikzpicture}
}
\caption{Successive rounds of LMS parsing for the text $T^{1}=\texttt{gtacc\$gtaatagtacc\$}$. The symbols $L$, $S$ and $S*$ represent the L-type, S-type and LMS-type positions of $T^{i}$, respectively. The dashed boxes indicate the breaks in $\mathcal{B}^{i}$ (Definition~\ref{def:lms_brk}). Positions $1$ and $7$ of $T^{1}$ are left-border breaks, while positions $6$ and $18$ are right-border breaks. Similarly, positions $1$ and $3$ of $T^{2}$ are left-border breaks, and positions $2$ and $6$ are right-border breaks. The horizontal lines indicate the substrings induced by $\mathcal{B}^{i}$ as described in Definition~\ref{def:lms_par}. Each $T^{i}[j]$ is the LMS order  in $\mathcal{F}^{i}$ of the $jth$ substring in the LMS parsing of $T^{i-1}$. The parsing rounds end when the number of symbols in $T^{i}$ equals the number of strings in $\mathcal{T}$.} 
\label{fig:lms_par}
\end{figure}

Our procedure to build $\mathcal{F}^{i}$ works as follows:
we initialize a dictionary $D^{i}$ where each key $F \in \mathcal{F}^{i}$ in $D^{i}$ will be associated with an integer that indicates the number of times $F$ occurs as a phrase in the LMS parsing of $T^{i}$. We fill the dictionary by accessing the breaks in $\mathcal{B}^{i}$ in one right-to-left scan of $T^{i}$ that enumerates the LMS-type positions. We also use $B^{i}$ during the scan to detect right and left border breaks. For each pair of consecutive integers $(j_{p},j_{p+1}) \in \mathcal{B}^{i}$ forming a valid substring in the LMS parsing (see Definition~\ref{def:lms_par}), we record an occurrence of the phrase $F=T^{i}[j_{p}..j_{p+1}]$ in $D^{i}$ (or $F=T^{i}[j_{p}..j_{p+1}-1]$ if both positions are left-border breaks). If $F$ does not exist as a key, we create a new key-value entry $(F, 1)$ in $D^{i}$. On the other hand, if $F$ already exists in the keys, we just increase its associated value by one.

\subsubsection{The Generalized Suffix Array}

As explained in the overview of \textsf{grlBWT} (Section~\ref{sec:grlbwt_ov}), every recursion level $i$ constructs the vector $BWT^{i}_{bcr}$ of $T^{i}$ (besides performing LMS parsing). This step requires us to define the generalized suffix array \sa{} of $T^{i}$ to fit the description of the BCR BWT.

\begin{definition}\label{def:gsa}
Generalized suffix array of $T^{i}$. Consider a partition over $T^{i}[1..n^{i}]$ such that each block $T^{i}[j..j']$ is a substring where (i) $T^{i}[j]$ is a left-border break in $\mathcal{B}^{i}$, (ii) $T^{i}[j']$ is a right-border break in $\mathcal{B}^{i}$, and (iii) $exp(T^{i}[j..j'])=T_{x}\texttt{\$}$ expands to a string $T_{x} \in \mathcal{T}$.~$\same[1..n^{i}]$ is a vector that enumerates the suffixes of the blocks $T^{i}[j..j']$ in lexicographical order. When two suffixes $T^{i}[u..j']=T^{i}[u'..p']$ from different blocks $T^{i}[j..j']$ and $T^{i}[p..p']$ spell the same sequence, the ties are broken by $min(u, u')$. 
\end{definition}

We remark that our algorithm never constructs \sa, but it uses its definition to produce $BWT^{i}_{bcr}$. Figure~\ref{fig:prBWT_examp} shows an example of \sa. 

\subsubsection{Constructing the Preliminary BWT}\label{sec:c_preBWT}

The next step in iteration $i$ is to construct the preliminary BWT of $T^{i}$ from $D^{i}$ (Line~\ref{code:get_pbwt} of Algorithm~\ref{algo:ovgrlbwt}). Keep in mind that $D^{i}$ is an encoding for $\mathcal{F}^{i}$ that includes the frequencies in $T^{i}$ of the LMS parsing phrases.~We use the following observations to carry out the construction:

\begin{lemma}\label{lem:lem1}
Let $S_x[1..n_x]$ and $S_y[1..n_y]$ be two different strings over the alphabet $\Sigma^{i}$. Assume both occur as suffixes in one or more phrases of $\mathcal{F}^{i}$ and meet one of the following conditions: (i) $n_x>1$ or (ii) $n_x=1$ and the sequence $exp(S_x)$ is  suffixed by the sentinel  $\texttt{\$}$ (the same applies for $S_y$). Let $O_x$ be the set of positions in $T^{i}$ where $S_x$ occurs as a suffix of a parsing phrase. Specifically, each $j \in O_x$ is a position such that $S_x=T^{i}[j..j+n_x-1]$ and $T^{i}[j-j'..j+n_x-1]$, with $j'\geq 0$, is a substring in the LMS parsing. Let us define an equivalent set $O_y$ for $S_y$. If $S_x \prec_{LMS} S_y$ (see Section~\ref{ssec:iss}), then all the suffixes of $T^{i}$ starting at positions in $O_x$ are lexicographically smaller than those starting at positions in $O_y$. 
\end{lemma}

\begin{proof}
When $S_x$ is not a prefix of $S_y$ (and vice versa), the demonstration of this lemma is trivial: we compare the sequences of $S_x$ and $S_y$ from left to right until we find a mismatching position $u$ (i.e., $S_x[u]\neq S_y[u]$). If $S_x[u] < S_y[u]$, we know that all the suffixes in $O_x$ are lexicographically smaller than those in $O_y$ because this is how lexicographical sorting works. In the other scenario, when one string is a prefix of the other, we can not use this mechanism as we will not find a mismatching position. However, this scenario is not possible when $S_x$ or $S_y$ has length $1$ and expands to a string in $\Sigma^{*}$ suffixed by $\texttt{\$}$. For instance, if $S_x$ were a prefix of $S_y$, it would imply that $exp(S_y)$ contains a sentinel in a position that is not the end, but Definition~\ref{def:lms_par} (LMS parsing) prevents that situation from happening.  

When both $S_x$ and $S_y$ have length $>1$ and one is a prefix of the other, we resort to the symbol types of Section~\ref{ssec:iss}. We assume for this proof that $S_y$ is a prefix of $S_x$, but the other way is equivalent. We know that $S_y[n_y]$ and $S_x[n_y]$ have different types. On the one hand, $S_y[n_y]$ is LMS-type because $S_y$ is a suffix of an LMS substring, and $n_y$ is the last position of $S_y$. On the other hand, $S_x[n_y]$ is L-type because if it were S-type, then it would also be LMS-type, and thus $S_x[1..n_y]$ would be an occurrence for $S_y$. This observation is due to $S_y[n_y-1]=S_x[n_y-1]$ being L-type. Given the types of $S_y[n_y]$ and $S_x[n_y]$, the occurrences of $S_y$ in $O_y$ are always followed in $T^{i}$ by symbols that are greater than $S_x[n_y+1]$, meaning that the suffixes of $T^{i}$ starting at positions in $O_y$ are lexicographically greater than the suffixes starting at positions in $O_x$. This observation does not hold when $S_x$ or $S_y$ have length one: $S_y[n_y]$ equals $S_x[1]$, and both are LMS-type, so there is not enough information to decide the lexicographical order of the suffixes in $O_x$ and $O_y$. Figure~\ref{fig:lem1exp} shows an example of this lemma.
\end{proof}

\begin{figure}[t]
\centering
\resizebox{0.4\textwidth}{!}{%
\begin{tikzpicture}[>=stealth,thick,baseline]
    \matrix (m) [matrix of nodes, ampersand replacement=\|, align=left]  at (0,0) { 
     \|  \|  $\texttt{a}$ \| $\texttt{c}$  \| $\texttt{t}$\tiny\textsubscript{$L$} \| $\texttt{g}$\tiny\textsubscript{$L$}\| \texttt{g} \| \texttt{a}\tiny\textsubscript{$S*$}\| \texttt{c}\| \dots  \\
     \|  \|  $\texttt{a}$ \| $\texttt{c}$  \| $\texttt{t}$\tiny\textsubscript{$L$} \| $\texttt{g}$\tiny\textsubscript{$L$}\| \texttt{g} \| \texttt{a}\tiny\textsubscript{$S*$}\| \texttt{c}\| \dots  \\
     \|  \|  $\texttt{a}$ \| $\texttt{c}$  \| $\texttt{t}$\tiny\textsubscript{$L$} \| $\texttt{g}$\tiny\textsubscript{$L$}\| \texttt{g} \| \texttt{a}\tiny\textsubscript{$S*$}\| \texttt{c}\| \dots  \\
     \|  \|  $\texttt{a}$ \| $\texttt{c}$  \| $\texttt{t}$\tiny\textsubscript{$L$} \| $\texttt{g}$\tiny\textsubscript{$S*$} \| \texttt{t} \| \dots  \\
     \|  \|  $\texttt{a}$ \| $\texttt{c}$  \| $\texttt{t}$\tiny\textsubscript{$L$} \| $\texttt{g}$\tiny\textsubscript{$S*$} \| \texttt{t} \| \dots  \\
     \|  \|  $\texttt{a}$ \| $\texttt{c}$  \| $\texttt{t}$\tiny\textsubscript{$L$} \| $\texttt{g}$\tiny\textsubscript{$S*$} \| \texttt{t} \| \dots  \\
};
\draw (m-1-3.north west) rectangle (m-3-8.south east);
\draw (m-4-3.north west) rectangle (m-6-6.south east);

\node[anchor=north] at (m-6-6.south) {$S_y[n_y]$};
\node[anchor=south] at (m-1-6.north) {$S_x[n_y]$};
\end{tikzpicture}
}
\caption{Example of Lemma~\ref{lem:lem1}. The figure shows a subset of suffixes for some string $T^{1}$ sorted in lexicographical order. The upper box is the set $O_x$ of suffixes in $T^{i}$ prefixed by the string $S_x=\texttt{actgga}$ that occurs as a suffix in some LMS parsing phrases of $T^{1}$. Equivalently, the lower box is the set $O_y$ of suffixes prefixed by $S_y=\texttt{actg}$, which also occurs as a suffix in some LMS substrings of $T^{1}$. The L-type and LMS-type positions of $T^{1}$ are marked with the symbols $L$ and $S*$ in the figure, respectively. Because $S_x \prec_{LMS} S_y$, all the suffixes in $O_y$ are lexicographically greater than those in $O_x$.}
\label{fig:lem1exp}
\end{figure}

The consequence of Lemma~\ref{lem:lem1} is that the suffixes of length $>1$ in $\mathcal{F}^{i}$ induce a partition over \sa{} (Definition~\ref{def:gsa}):

\begin{lemma}\label{lem:lem2}
Let $\mathcal{S}^{i}=\{S_1,S_2,\ldots,S_{n'}\}$ be the set of strings where each element $S_x[1..n_x]$ occurs as a suffix in the phrases of $\mathcal{F}^{i}$ and either (i) has length $n_x>1$ or (ii) has length $n_x=1$ but $exp(S_x)$ is suffixed by $\texttt{\$}$. Additionally, let $\mathcal{O}=\{O_{1}, O_{2},\ldots,O_{n'}\}$ be the set of occurrences in $T^{i}$ for the strings in $\mathcal{S}$. For every $S_{x} \in \mathcal{S}^{i}$, its associated set $O_{x} \in \mathcal{O}$ stores each position $j$ such that $T^{i}[j..j+n_x-1]$ is an occurrence of $S_{x}$ and $T^{i}[j-j'..j+n_x-1]$, with $j' \geq 0$, is a phrase in the LMS parsing. It holds that $\mathcal{O}$ is a partition over \sa{} as the lexicographical sorting places the elements of each $O_{x} \in \mathcal{O}$ in a consecutive range of \sa. 
\end{lemma}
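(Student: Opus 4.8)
The plan is to derive Lemma~\ref{lem:lem2} as a direct corollary of Lemma~\ref{lem:lem1}, reducing the claim about $\mathcal{O}$ inducing a partition of $S\!A^{i}$ to two observations: (i) the occurrence lists $O_u$ cover all the positions of $S\!A^{i}$, and (ii) each $O_u$ occupies a contiguous block. First I would argue coverage: every position $j \in [1,n^{i}]$ lies inside exactly one LMS substring of $T^{i}$ (with the sentinel treated as its own LMS substring), so the suffix $T^{i}[j..]$ begins with some suffix $S$ of that LMS substring. If $|S|>1$ then $j$ belongs to the list $O_u$ for $S_u = S$; if $|S|=1$ then $j$ is the last position of its LMS substring, and these length-one cases are exactly the positions not covered by $\mathcal{O}$ — I would note here (matching the remark at the end of Lemma~\ref{lem:lem1}'s proof) that the Lemma only claims a partition of the \emph{part} of $S\!A^{i}$ indexed by suffixes starting inside length-$>1$ suffixes of LMS substrings, or alternatively that those leftover singletons form their own trivial blocks; I would state this carefully so the partition claim is literally true.

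Next I would establish contiguity. Fix $S_u \in \mathcal{S}$ and consider two positions $j, j' \in O_u$. By definition both $T^{i}[j..]$ and $T^{i}[j'..]$ have the same prefix $S_u$ of length $|S_u|>1$. I would show that no position $j'' \notin O_u$ can have its suffix $T^{i}[j''..]$ sorted strictly between $T^{i}[j..]$ and $T^{i}[j'..]$ in $S\!A^{i}$. Such a $j''$ lies inside some LMS substring and starts a suffix of it, say $S' = S_v$ with $S' \neq S_u$ (or $|S'|=1$). If $S'$ and $S_u$ are $\prec_{LMS}$-comparable and distinct, Lemma~\ref{lem:lem1} says \emph{all} suffixes from $O_u$ are on the same side (all greater, or all smaller) of \emph{all} suffixes from $O_v$, so $T^{i}[j''..]$ cannot be squeezed between two elements of $O_u$. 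The one subtlety is that $\prec_{LMS}$ must be a total order on $\mathcal{S}$ — I would invoke that $\prec_{LMS}$ is the ordering ISS actually computes on the distinct suffixes of LMS substrings (Section~\ref{ssec:iss}), hence total — and that Lemma~\ref{lem:lem1} handles precisely the prefix case where lexicographic comparison alone fails. For the residual $|S'|=1$ case I would point out that such a suffix, being a single symbol that is LMS-type and equal to the first symbol $S_u[1]$ when it could interleave, is handled by the same exclusion argument once one notes that a length-one suffix $a$ compares with any length-$>1$ suffix beginning with $a$ according to the next symbol, and LMS typing forces that comparison to a fixed side; alternatively I would just exclude these positions from the partitioned portion as above.

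The main obstacle I anticipate is not the logical core — which is essentially "a total order on the $O_u$'s plus Lemma~\ref{lem:lem1} forces consecutiveness" — but stating the partition claim so it is exactly correct given the length-one exceptions, and making sure the $\prec_{LMS}$ order is genuinely total on $\mathcal{S}$ rather than merely a preorder with ties. Ties would only occur between equal strings, and since $\mathcal{S}$ is a set of \emph{distinct} strings this is not an issue, but I would spell it out. Once those two points are pinned down, the proof is short: coverage gives that the blocks $\{O_u\}$ (together with the singleton leftovers) exhaust $[1,n^{i}]$, disjointness is immediate since distinct suffixes-of-LMS-substrings at a position are impossible (each position determines its enclosing LMS substring and hence each suffix-start uniquely), and contiguity follows from Lemma~\ref{lem:lem1} applied to every pair $S_u \neq S_v$. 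Hence $\mathcal{O}$ partitions $S\!A^{i}$ into consecutive ranges, one per $S_u$, as claimed.
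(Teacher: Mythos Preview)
Your proposal is correct and follows essentially the same approach as the paper: both arguments reduce the partition claim to non-interleaving of the lists $O_u$ and obtain non-interleaving directly from Lemma~\ref{lem:lem1}. The paper's proof is terser---it only spells out the prefix case and does not discuss coverage, disjointness, or the length-one leftovers---so your extra care about the totality of $\prec_{LMS}$ and about what ``partition'' means for the positions with $|S|=1$ is welcome scaffolding around the same core idea rather than a different route.
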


\begin{proof}
We demonstrate the lemma by showing that the lexicographical sorting does not interleave suffixes of $T^{i}$ in \sa{} that belong to different sets of $\mathcal{O}$. Assume the string $S_{x} \in \mathcal{S}^{i}$, associated with the set $O_{x} \in \mathcal{O}$, is a prefix in another string $S_{y} \in \mathcal{S}^{i}$, which in turn is associated with the set $O_{y} \in \mathcal{O}$. Even though we do not know the symbols that occur to the right of $S_{x}$ in its occurrences of $O_{x}$, we do know that both $S_{x}$ and $S_{y}$ are suffixes of substrings in the LMS parsing, and by Lemma~\ref{lem:lem1}, we know that all the suffixes of $T^{i}$ in $O_{x}$ are lexicographically greater than the suffixes in $O_{y}$. Hence, the interleaving of suffixes in \sa{} from different sets of $\mathcal{O}$ is not possible, even if $\mathcal{S}^{i}$ is not a prefix-free set. 
\end{proof}

Lemma~\ref{lem:lem2} gives us a simple way to construct the preliminary BWT of $T^{i}$ from $\mathcal{F}^{i}$. Our explanation requires projecting the partition of \sa{} induced by $\mathcal{S}^{i}$ to $BWT^{i}_{bcr}$ (the BCR BWT of $\mathcal{T}^{i}$) such that, if $\same[s_x..e_x]$ is the block formed by $S_{x} \in \mathcal{S}^{i}$, then $BWT^{i}_{bcr}[s_x..e_x]$ is the projected block. There are three cases to consider:

\begin{lemma}\label{lem:lem3}
Let $S_x \in \mathcal{S}^{i}$ be the string with LMS order $o$ among all the elements of $\mathcal{S}^{i}$. If $S_x$ is left-maximal in $\mathcal{F}^{i}$, then the $oth$ block $BWT^{i}_{bcr}[s_x..e_x]$ in the projected partition contains more than one distinct symbol. Therefore, $\mathcal{F}^{i}$ does not have enough information to compute the sequence of symbols in $BWT^{i}_{bcr}[s_x..e_x]$.  
\end{lemma}

\begin{proof}
Let $F$ and $F'$ be two phrases of $\mathcal{F}^{i}$ where $S_x$ occurs as a suffix. Assume the left symbol of $S_x$ in $F$ is $c \in \Sigma^{i}$ and the left symbol in $F'$ is $c' \in \Sigma^{i}$. In this scenario, the relative order of $c$ and $c'$ is not decided by $S_x$, but for the sequences that occur to the right of $F$ and $F'$ in $T^{i}$. However, those sequences are not accessible directly from $\mathcal{F}^{i}$. Hence, it is not possible to decide the order of $c$ and $c'$ in $BWT^{i}_{bcr}[s_x..e_x]$.  
\end{proof}

\begin{lemma}\label{lem:lem4}
Consider the string $S_x \in \mathcal{S}^{i}$ of Lemma~\ref{lem:lem3}. When $S_{x}$ only occurs as a non-proper suffix in a phrase $S_x=F \in \mathcal{F}^{i}$, it is not possible to complete the sequence of symbols in $BWT^{i}_{bcr}[s_x..e_x]$.
\end{lemma}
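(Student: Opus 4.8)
The plan is to first unpack the hypothesis through the definitions of Section~3.1. A suffix $F[u..]$ of a phrase $F$ is \emph{proper} exactly when $1 < u \le |F|$, so the assumption that ``$S$ occurs as a non-proper suffix in a phrase $F \in D^{i}$'' means precisely that $S = F$: the string $S$ coincides with an entire dictionary phrase. Since $D^{i}$ is, by construction, the set of strings that occur as LMS substrings of $T^{i}$, the phrase $F$ occurs as an LMS substring of $T^{i}$; equivalently, there is a position $j$ in the occurrence list $O_{u} \in \mathcal{O}$ associated with $S$ (in the notation of Lemma~\ref{lem:lem2}) at which the LMS substring is exactly $T^{i}[j,\,j+|S|-1] = S$, i.e.\ with $j' = 0$.

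Next I would pin down the symbol that $pBWT^{i}[o]$ is missing. By Lemma~\ref{lem:lem2} the positions of $O_{u}$ occupy a contiguous range of $S\!A^{i}$, and the BWT symbol that this range must contribute for the suffix $T^{i}[j..]$ is $T^{i}[j-1]$. When $S$ occurs as a \emph{proper} suffix $X[u..]$ of a phrase $X \in D^{i}$, the left symbol is $X[u-1]$, which is stored explicitly in the representation $(R^{i}, L^{i}, N^{i})$ of $D^{i}$ and is therefore available; this is how $pBWT^{i}$ is partially filled in the first place. But when $S = F$ the occurrence starts at the first position of an LMS substring, so $T^{i}[j-1]$ is the \emph{last} symbol of the LMS substring preceding this occurrence of $F$ in the parse $T^{i}$: it lies outside $F$ and is not among the symbols one reads to the left of $S$ inside any phrase of $D^{i}$.

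Finally I would argue that $T^{i}[j-1]$ is genuinely irrecoverable from $D^{i}$. The structure $(R^{i}, L^{i}, N^{i})$ together with the stored frequencies records only the multiset of phrases of $D^{i}$ and how often each occurs as an LMS substring of $T^{i}$; it does not record the order in which phrases occur in $T^{i}$ --- that information resides in $H^{i}$ and, equivalently, in the parse $T^{i+1}$, which is exactly what is dumped to disk and revisited during the induction phase. Two distinct occurrences of $F$ in $T^{i}$ may be preceded by different phrases, hence by different left symbols, and $D^{i}$ gives no way to decide which. Consequently $pBWT^{i}[o]$ is not merely unordered (as in Lemma~\ref{lem:lem3}) but \emph{incomplete}: it lacks the symbols contributed by the non-proper occurrences of $S$, and these can only be supplied once $BWT^{i+1}$ is available; this is exactly what the statement asserts.

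The main obstacle is making this last step rigorous rather than merely plausible: I must show that knowing $D^{i}$ (phrases plus frequencies) truly leaves $T^{i}[j-1]$ undetermined, which reduces to exhibiting two parses $T^{i}$ with the same dictionary and frequencies but different symbols preceding some occurrence of $F$ --- a short explicit construction. A secondary technicality is the edge case of an occurrence of $F$ at the very start of $T^{i}$ or right after a sentinel $\texttt{\$}$, where the preceding symbol is a sentinel (or absent); this is handled by the string-boundary machinery ($B^{i}$ and the sentinels) of Section~\ref{ssec:parpha} and does not affect the conclusion for the occurrences of $F$ preceded by ordinary phrases, which (given that $S$ is left-maximal by Lemma~\ref{lem:lem3}) are present.
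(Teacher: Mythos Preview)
Your argument is correct and follows the same line as the paper's own proof: when $S=F$ is a whole phrase, the preceding symbol $T^{i}[j-1]$ lies in the LMS substring that precedes $F$ in $T^{i}$, and $D^{i}$ records only the multiset of phrases (with frequencies), not their order in $T^{i}$, so that symbol is unavailable. The paper states this in two sentences without the explicit-construction step or the edge-case discussion you propose; one small slip in your write-up is that $H^{i}$ itself stores only phrases and frequencies, so the adjacency information resides in $T^{i}$ (and hence $T^{i+1}$), not in $H^{i}$.
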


\begin{proof}
The symbols that occur to the left of $S_x$ in $T^{i}$ are in the substrings of the LMS parsing that precede $F$ in $T^{i}$. However, it is not possible to know from $\mathcal{F}^{i}$ the sequences of those substrings.
\end{proof}

\begin{figure}[!t]
\centering
\resizebox{0.7\textwidth}{!}{%
\begin{tikzpicture}[>=stealth,thick,baseline]

\matrix (m4) [matrix of nodes, ampersand replacement=\|] at (0,0)  { 
     $\mathcal{F}^{1}$ \| $=$ \|  $\{\texttt{gta}$, \| \texttt{acc\$}, \| \texttt{aata}, \| $\texttt{agta}\}$ \\
};

\matrix (m5) [matrix of nodes, ampersand replacement=\|, below= 0.1cm of m4]  { 
     $\mathcal{S}^{1}$ \| $=$ \|  $\{\texttt{\$}$, \| \texttt{aata}\| \texttt{acc\$}, \| \texttt{agta}, \| \texttt{ata}, \| \texttt{c\$}, \| \texttt{cc\$}, \| \texttt{gta}, \| $\texttt{ta}\}$ \\
};

\matrix (m3) [matrix of nodes, ampersand replacement=\|, below= 2.2cm of m5,
              row 2/.style={nodes=draw, white},
              ]  { 
     $GS\!A^{1}$      \| $=$ \|  6 \| 18 \| 9 \| 3 \| 15\| 12\| 10\| 5\| 17 \| 4 \| 16 \| 7 \| 1 \| 13 \| 8 \| 2 \| 14 \| 11 \\
                     \|  a  \|    \|    \|   \|   \|   \|   \|   \|  \|    \|   \|    \|   \|   \|    \|   \|   \|    \|    \\
     $BWT^{1}_{bcr}$ \| $=$ \|  \texttt{c} \| \texttt{c}  \| \texttt{t} \| \texttt{t} \| \texttt{t} \| \texttt{t} \| \texttt{a} \| \texttt{c} \| \texttt{c}  \| \texttt{a} \| \texttt{a}  \| \texttt{\$}\| \texttt{\$}\| \texttt{a}  \| \texttt{g} \| \texttt{g} \| \texttt{g}  \| \texttt{a}  \\
                     \||[white]|  a  \|    \|    \|   \|   \|   \|   \|   \|  \|    \|   \|    \|   \|   \|    \|   \|   \|    \|    \\
     $pBWT^{1}$ \| $=$ \|  \texttt{c} \| \texttt{c}  \| \texttt{*} \| \texttt{*} \| \texttt{*} \| \texttt{*} \| \texttt{a} \| \texttt{c} \| \texttt{c}  \| \texttt{a} \| \texttt{a}  \| \texttt{\#}\| \texttt{\#}\| \texttt{\#}  \| \texttt{\#} \| \texttt{\#} \| \texttt{\#}  \| \texttt{\#}  \\
};

\node[anchor = west, rotate = 90, gray] at (m3-1-3.north) {\texttt{\$}};
\node[anchor = west, rotate = 90, gray] at (m3-1-4.north) {\texttt{\$}};
\node[anchor = west, rotate = 90, gray] at (m3-1-5.north) {\texttt{aata}\textcolor{black}{\texttt{gta}} $\ldots$};
\node[anchor = west, rotate = 90, gray] at (m3-1-6.north) {\texttt{acc\$}};
\node[anchor = west, rotate = 90, gray] at (m3-1-7.north) {\texttt{acc\$}};
\node[anchor = west, rotate = 90, gray] at (m3-1-8.north) {\texttt{agta}\textcolor{black}{\texttt{cc\$}}};
\node[anchor = west, rotate = 90, gray] at (m3-1-9.north) {\texttt{ata} \textcolor{black}{\texttt{gta}} $\ldots$};
\node[anchor = west, rotate = 90, gray] at (m3-1-10.north) {\texttt{c\$}};
\node[anchor = west, rotate = 90, gray] at (m3-1-11.north) {\texttt{c\$}};
\node[anchor = west, rotate = 90, gray] at (m3-1-12.north) {\texttt{cc\$}};
\node[anchor = west, rotate = 90, gray] at (m3-1-13.north) {\texttt{cc\$}};
\node[anchor = west, rotate = 90, gray] at (m3-1-14.north) {\texttt{gta}\textcolor{black}{\texttt{ata}} $\ldots$ };
\node[anchor = west, rotate = 90, gray] at (m3-1-15.north) {\texttt{gta}\textcolor{black}{\texttt{cc\$}}};
\node[anchor = west, rotate = 90, gray] at (m3-1-16.north) {\texttt{gta}\textcolor{black}{\texttt{cc\$}}};
\node[anchor = west, rotate = 90, gray] at (m3-1-17.north) {\texttt{ta}\textcolor{black}{\texttt{ata}} $\ldots$};
\node[anchor = west, rotate = 90, gray] at (m3-1-18.north) {\texttt{ta}\textcolor{black}{\texttt{cc\$}}};
\node[anchor = west, rotate = 90, gray] at (m3-1-19.north) {\texttt{ta}\textcolor{black}{\texttt{cc\$}}};
\node[anchor = west, rotate = 90, gray] at (m3-1-20.north) {\texttt{ta}\textcolor{black}{\texttt{gta}} $\ldots$};

\draw (m3-1-3.north west) rectangle (m3-1-4.south east);
\draw (m3-1-4.north east) rectangle (m3-1-5.south east);
\draw (m3-1-5.north east) rectangle (m3-1-7.south east);
\draw (m3-1-7.north east) rectangle (m3-1-8.south east);
\draw (m3-1-8.north east) rectangle (m3-1-9.south east);
\draw (m3-1-9.north east) rectangle (m3-1-11.south east);
\draw (m3-1-11.north east) rectangle (m3-1-13.south east);
\draw (m3-1-13.north east) rectangle (m3-1-16.south east);
\draw (m3-1-16.north east) rectangle (m3-1-20.south east);

\draw (m3-1-3.north west)+(0,-1.1cm) rectangle ([yshift=-1.1cm]m3-1-4.south east);
\draw[pattern=north west lines, pattern color=gray!60] (m3-1-4.north east) + (0,-1.1cm) rectangle ([yshift=-1.1cm] m3-1-5.south east);
\draw[pattern=north west lines, pattern color=gray!60] (m3-1-5.north east) + (0,-1.1cm) rectangle ([yshift=-1.1cm] m3-1-7.south east);
\draw[pattern=north west lines, pattern color=gray!60] (m3-1-7.north east) + (0,-1.1cm) rectangle ([yshift=-1.1cm] m3-1-8.south east);
\draw (m3-1-8.north east) + (0,-1.1cm) rectangle ([yshift=-1.1cm] m3-1-9.south east);
\draw (m3-1-9.north east) + (0,-1.1cm) rectangle ([yshift=-1.1cm] m3-1-11.south east);
\draw (m3-1-11.north east) + (0,-1.1cm) rectangle ([yshift=-1.1cm] m3-1-13.south east);
\draw[pattern=north west lines, pattern color=gray!60] (m3-1-13.north east) + (0,-1.1cm) rectangle ([yshift=-1.1cm] m3-1-16.south east);
\draw[pattern=north west lines, pattern color=gray!60] (m3-1-16.north east) + (0,-1.1cm) rectangle ([yshift=-1.1cm] m3-1-20.south east);
\end{tikzpicture}
}
\caption{Preliminary BCR BWT of $\mathcal{T}=\{\texttt{gtacc},\ \texttt{gtaatagtacc}\}$ of Figure~\ref{fig:lms_par}. \sa{} is the generalized suffix array of $\mathcal{T}$ built on top of $T^1=\texttt{gtacc\$gtaatagtacc\$}$. The vertical strings are the suffixes sorted in lexicographical order. The boxes in \sa{} indicate the partition induced by $\mathcal{S}^{i}$ (strings labelled in grey on top of \sa{}). The boxes in $BWT_{bcr}^{1}$ are the projected blocks in \sa's partition, with the dashed boxes being the blocks we can not solve using $\mathcal{F}^{1}$. The first three dashed boxes (from left to right) of $BWT^{i}_{bcr}$ are represented with $\texttt{*}$ in $pBWT^{1}$ because their blocks meet Lemma~\ref{lem:lem4}. In contrast, the last two dashed boxes are represented with \texttt{\#}, indicating they meet Lemma~\ref{lem:lem3}. The final run-length-compressed vector is $pBWT^{1}=(\texttt{c},2),(\texttt{*},4), (\texttt{a}, 1), (\texttt{c},2), (\texttt{a}, 2), (\texttt{\#},7)$.}
\label{fig:prBWT_examp}
\end{figure}

We now describe the information of the preliminary BWT that we can extract from $\mathcal{F}^{i}$: 

\begin{lemma}\label{lem:rl}
Let $S_x \in \mathcal{S}^{i}$ be the string of Lemma~\ref{lem:lem3}. Additionally, let $O_x \in \mathcal{O}$ be the set of occurrences of $S_x$ in $T^{i}$ as described in Lemma~\ref{lem:lem2}. If all the suffixes of $T^{i}$ in $O_x$ are preceded by the same symbol $c \in \Sigma^{i}$ (i.e., $S_x$ is not left-maximal), $BWT^{i}_{bcr}[s_x..e_x]=c^\ell$ is an equal-symbol run of length $\ell = |O_x|$.
\end{lemma}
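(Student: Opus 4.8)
The plan is to combine the partition result of Lemma~\ref{lem:lem2} with a counting argument on the occurrence list $O$. First I would invoke Lemma~\ref{lem:lem2}: since $S \in \mathcal{S}$ has length $>1$, its occurrence list $O \in \mathcal{O}$ occupies a contiguous range of $S\!A^{i}$, say $S\!A^{i}[p,q]$ with $q-p+1 = |O|$. Moreover, by Lemma~\ref{lem:lem1}, no suffix from a different list of $\mathcal{O}$ is interleaved with these, so the entire BWT range corresponding to occurrences of $S$ as a suffix of an LMS substring is exactly $BWT^{i}[p,q]$. By the definition of the BWT, for each $j \in O$ the entry $BWT^{i}[\text{rank of } T^{i}[j,n]]$ equals $T^{i}[j-1]$, the symbol immediately to the left of that occurrence of $S$ in $T^{i}$.

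Next I would use the hypothesis that $S$ is not left-maximal. By the definition of left-maximality given in Section~3.1, the negation means: there do not exist two distinct proper suffixes $D_{j}[u..] = D_{j'}[u'..] = S$ with $D_{j}[u-1] \neq D_{j'}[u'-1]$, and $S$ never occurs as a non-proper suffix of a phrase. Hence every occurrence of $S$ as a suffix of an LMS substring in $T^{i}$ lies strictly inside some phrase (it is a proper suffix of the phrase), and all these proper occurrences are preceded by one and the same symbol $s \in \Sigma^{i}$. Consequently $T^{i}[j-1] = s$ for every $j \in O$, so $BWT^{i}[p,q]$ is the constant string $s^{|O|}$. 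Under run-length encoding this maximal run of $s$'s contributes the single pair $(s, |O|)$, which is precisely the claim $pBWT^{i}[o] = (s,l)$ with $l = |O|$, where $o$ is the $\prec_{LMS}$-rank of $S$ in $\mathcal{S}$ (this rank is well defined by Lemma~\ref{lem:lem1}, which totally orders $\mathcal{S}$ via $\prec_{LMS}$ and matches the $S\!A^{i}$ block order).

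The main obstacle I anticipate is the bookkeeping at phrase boundaries: I must argue carefully that ``$S$ not left-maximal'' really does force every left-neighbour in $T^{i}$ — not merely every left-neighbour inside a dictionary phrase — to equal $s$. The subtlety is that $O$ is indexed by positions in $T^{i}$, whereas left-maximality is phrased in terms of the dictionary $\mathcal{D}$; the bridge is that each LMS substring of $T^{i}$ is a copy of some phrase of $D^{i}$, and an occurrence of $S$ as a suffix of an LMS substring that reaches the substring's left end would be a non-proper suffix of the corresponding phrase — a case explicitly excluded by not-left-maximal. Once that equivalence is spelled out, together with the observation that distinct preceding symbols across phrases would also violate not-left-maximal, the constancy of $T^{i}[j-1]$ over $j \in O$ follows, and the rest is immediate from the contiguity granted by Lemmas~\ref{lem:lem1} and~\ref{lem:lem2}.
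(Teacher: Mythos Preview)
Your proposal is correct and follows essentially the same route as the paper: invoke Lemma~\ref{lem:lem2} to obtain a contiguous range $S\!A^{i}[p,q]$ of length $|O|$, observe that the BWT entries for this range are the left-context symbols of the occurrences in $O$, and use the hypothesis that all these symbols equal $s$ to conclude the range is a single run $(s,|O|)$. The paper's proof is terser---it treats ``all suffixes in $O$ are preceded by $s$'' as the primary hypothesis (as stated in the lemma) rather than deriving it from not-left-maximality, so your bookkeeping paragraph about phrase boundaries, while careful, is work beyond what the lemma statement actually requires.
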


\begin{proof}
By Lemma~\ref{lem:lem2}, we know that $S_x$ prefixes the suffixes of $T^i$ in $O_x$, and that they form a consecutive range $\same[s_x..e_x]$. Additionally, the symbols that occur to the left of the suffixes in $\same[s_x..e_x]$ are those for the projected block $BWT^{i}_{bcr}[s_x..e_x]$. However, we still have not resolved the relative order of the suffixes in $\same[s_x..e_x]$, so (in theory) we do not know how to rearrange the symbols in $BWT^{i}_{bcr}[s_x..e_x]$. The suffixes of $T^{i}$ in $O_x$ are preceded by the same symbol $c$, so it is not necessary to further sort $\same[s_x..e_x]$ because the outcome for the projected block will always be an equal-symbol run $c^{\ell}$ of length $\ell=|O_x|=e_x-s_x+1$.  
\end{proof}

Now we have all the necessary elements to describe the preliminary BCR BWT formally:

\begin{definition}
   Preliminary BCR BWT of $\mathcal{T}^{i}$: a vector $pBWT^{i}[1..n^{i}]$ over the alphabet $\Sigma^{i} \cup \{\texttt{\#}, \texttt{*}\}$, where $\texttt{\#}$ and $\texttt{*}$ are special symbols out of $\Sigma^{i}$ denoting \emph{unsolved} areas of $BWT^{i}_{bcr}$ for which we do not know the order of the symbols. Let $S_x \in \mathcal{S}^{i}$ be the string inducing the projected block $BWT^{i}_{bcr}[s_x..e_x]$, with $\ell=e_x-s_x+1$. If $BWT^{i}_{bcr}[s_x..e_x]$ meets Lemma~\ref{lem:lem3}, then $pBWT^{i}[s_x..e_x]=\texttt{\#}^{\ell}$. On the other hand, if $BWT^{i}_{bcr}[s_x..e_x]$ meets Lemma~\ref{lem:lem4}, then $pBWT^{i}[s_x..e_x]=\texttt{*}^{\ell}$. Finally, if $BWT^{i}_{bcr}[s_x..e_x]$ meets Lemma~\ref{lem:rl} and is always preceded by symbol $c \in \Sigma^{i}$ in $\mathcal{F}^{i}$, then $pBWT^{i}[s_x..e_x]=c^{\ell}$. The areas of $pBWT^{i}$ storing \texttt{\#} or \texttt{*} symbols are its unsolved blocks.
\end{definition}

We use two special symbols $\texttt{\$}, \texttt{\#} \notin \Sigma^{i}$ because the mechanism to fill the unsolved blocks of $pBWT^{i}$ that meet Lemma~\ref{lem:lem3} during the induction phase is different from the one to fill the unsolved blocks meeting Lemma~\ref{lem:lem4}. The difference will become clear in Section~\ref{ssec:indpha}. Additionally, we keep $pBWT^{i}$ in run-length-compressed format to reduce space usage. Despite our encoding, we will keep using the notation $pBWT^{i}[s_x..e_x]$ to refer to an uncompressed area of $pBWT^{i}$. Figure~\ref{fig:prBWT_examp} shows an example of the construction of $pBWT^{i}$ using Lemmas~\ref{lem:lem3},\ref{lem:lem4}, and \ref{lem:rl}. 

\paragraph{Succinct Dictionary Encoding} \label{par:suc_enc_dict} The construction of $pBWT^{i}$ starts by changing the representation of $D^{i}$ to a more convenient data structure. First, we concatenate all the keys of $D^{i}$ in one single vector $R^{i}[1..||\mathcal{F}^{i}||]$ without changing their relative order. We mark the boundaries of consecutive phrases in $R^{i}$ with a bit vector $L^{i}[1..||\mathcal{F}^{i}||]$ in which we set $L^{i}[j]=1$ if $R^{i}[j]$ is the first symbol of a phrase and set $L^{i}[j]=0$ otherwise. We store the values of $D^{i}$ in another integer vector $N^{i}[1..|\mathcal{F}^{i}|]$. We maintain the relative order so that the value $N^{i}[o]$ is associated in $D^{i}$ with $oth$ phrase we inserted into $R^{i}$. We also augment $L^{i}$ with a data structure that supports $\mathsf{rank}_{1}$ queries~\cite{ok07prac} to map each symbol $R^{i}[j]$ to its corresponding phrase in $D^{i}$. Thus, given the phrase $F=R^{i}[j..j'] \in \mathcal{F}^{i}$, we can obtain its associated value in $D^{i}$ as $N^{i}[\mathsf{rank}_{1}(L, p)]$, with $p \in [j..j']$. 

\paragraph{Generalized Suffix Array of the Parsing Set}\label{par:iss_dict} We will use $(R^{i}, L^{i})$ to compute an array $\gsame[1..|R^{i}|]$ that enumerates the suffixes of $\mathcal{F}^{i}$ in LMS order. This vector serves a double purpose as we will use it to get the unsolved blocks of $pBWT^{i}$ and the LMS orders of the phrases in $\mathcal{F}^{i}$. The values in \gsa{} are text positions in $R^{i}$, which we sort as follows: consider two substrings $R^{i}[j..j']$ and $R^{i}[p..p']$ encoding suffixes of $\mathcal{F}^{i}$ that are consecutive in \gsa. That is, $\gsame[u]=j$, $\gsame[u+1]=p$, $L^{i}[j'+1]=1$, and $L^{i}[p'+1]=1$. If $R^{i}[j..j'] \neq R^{i}[p..p']$, then $R^{i}[j..j'] \prec_{LMS} R^{i}[p..p']$. On the other hand, if $R^{i}[j..j'] = R^{i}[p..p']$, it implies $j<p$. We compute \gsa{} using a modified version of the ISS method described in Section~\ref{ssec:iss}. We first create an empty array $\gsame[1..|R^{i}|]$, which we divide into $\Sigma^{i}$ buckets. Then, we scan $R^{i}$ from left to right, and for each symbol $R^{i}[j]=c$ with $L^{i}[j+1]=1$ (i.e., the rightmost symbol of a phrase), we insert $j$ in the rightmost empty position in the bucket $c$ of $\gsame$. Then, in one left-to-right scan of \gsa{}, we perform the first step of ISS: for every $\gsame[u]$ such that $R^{i}[\gsame[u]-1]=c$ is L-type, we insert $\gsame[u]-1$ in the leftmost empty cell in the bucket $c$ of \gsa. In the next ISS step, we perform a right-to-left scan of \gsa.~This time, for every $\gsame[u]$ such that $R^{i}[\gsame[u]-1]=c$ is S-type, we insert $\gsame[u]-1$ in the rightmost empty cell in the bucket $c$ of \gsa.~In both ISS scans, if $L^{i}[\gsame[u]]=1$ (i.e., $R^{i}[\gsame[u]]$ is the leftmost symbol of a phrase), we skip the position as it does not induce the order of any suffix in $\mathcal{F}^{i}$. 

\begin{algorithm}[!htp]
\caption{Computing the preliminary BWT of $T^{i}$}
\label{algo:prebwt}{}
\begin{algorithmic}[1]
\small 
\Require $\gsame, D^{i}=(R^{i}, L^{i}, N^{i}), B^{i}$

    \State $\ell, b, f \gets 0$
    \State $s_x,e_x,o \gets 1$
    \State $c', c \gets \varepsilon$
    \State $pBWT^{i}, M, S\!A_{s} \gets \emptyset$ 

    \For{$u=1$ to $|\gsame|$}\label{algo:inf:spar} 
        \If{$u>1$ and $\gsame[u]$ is the start of a new block in $\gsame$}\label{code:newrange}
            \State $j \gets \gsame[s_x]$, $e_x \gets u-1$ \Comment{process block $\gsame[s_x..e_x]$ for $S_x$}
            \If{$L^{i}[j+1] \neq 1$ \textbf{or} $B^{i}[R^{i}[j]]=1$} \Comment{$S_x$ belongs to $\mathcal{S}^{i}$}
                \If{$b>1$ \textbf{or} $f=1$} \Comment{$S_x$ produces an unsolved block in $pBWT^{i}$}
                    \If{$b>1$}\label{code:lmif} \Comment{$S_x$ is a left-maximal suffix in $\mathcal{F}^{i}$}
                        \State $c \gets \texttt{\#}$\label{code:pbwtdumm2}
                        \State $j' \gets$\ leftmost index $j' \geq j$ with $L[j'+1]=1 $ or $j'=|R^{i}|$  
                        \State $S_x \gets R^{i}[j..j']$
                        \State $M \gets M \cup (S_x, \sigma^{i} + o)$
                        \State mark occurrences of $S_x$ in $R^{i}$ as left-maximal
                    \Else \Comment{$S_x$ is a non-proper suffix in $\mathcal{F}^{i}$}
                        \State $c \gets \texttt{*}$
                    \EndIf
                    \State $S\!A_{s} \gets S\!A_{s} \cup \{j\}$ \Comment{sample one occurrence of $S_x$}
                    \State $o \gets o+1$
                \EndIf

                \If{$c$ equals the symbol in the rightmost run of $pBWT^{i}$}
                    \State increase the length of that run by $\ell$
                \Else
                    \State $pBWT^{i} \gets pBWT^{i} \cup \{(c, \ell)\}$ \Comment{append new run}
                \EndIf
            \EndIf
            \State $\ell, b,f \gets 0, s_x \gets u, c' \gets \varepsilon$ \Comment{initialize new block's information}
        \EndIf\label{code:newrange2}

        \If{$L[\gsame[u]]=1$}\label{code:comp1} \Comment{$S_x$ is a phrase in $\mathcal{F}^{i}$}
            \State $c \gets \texttt{@}, f\gets f+1$
        \Else
            \State $c \gets R^{i}[\gsame[u]-1]$
        \EndIf
        \State $b \gets b + c \neq c'$, $\ell \gets \ell + N^{i}[\mathsf{rank}(L^{i}, j)]$
        \State $c' \gets c$\label{code:comp2}
    \EndFor
    \State process last block of $\gsame$ as in lines \ref{code:newrange}-\ref{code:newrange2}
    \State store $pBWT^{i}$ on disk
    \State \textbf{return} $(S\!A_{s}, M)$
\end{algorithmic}
\end{algorithm}

\paragraph{Computing $pBWT^{i}$ in Compressed Space} Algorithm~\ref{algo:prebwt} produces $pBWT^{i}$ from \gsa{}, $D^{i}=(R^{i}, L^{i}, N^{i})$, and $B^{i}$. We consider the partition of \gsa{} where every block $\gsame[s_x..e_x]$ encodes the suffixes that come from different phrases of $\mathcal{F}^{i}$ but spell the same sequence $S_x \in \Sigma^{i}$\footnote{In practice, we compute every distinct block $S\!A_{\mathcal{F}^{i}}[s_x..e_x]$ during the construction of \gsa. We reserve the least significant bit in the cells of \gsa to mark every $S\!A_{\mathcal{F}^{i}}[s_x]$. We flag these positions during the execution of our modified version of ISS (Paragraph~\ref{par:iss_dict}).}. For $\gsame[s_x..e_x]$, we compute the number $b$ of left-context breaks, the number $f$ of non-proper suffixes, and the accumulative frequency $\ell$ of the phrases of $\mathcal{F}^{i}$ enclosing the occurrences of $S_x$ in $\gsame[s_x..e_x]$. A left-context break between consecutive positions $\gsame[u-1]$ and $\gsame[u]$, with $u \in [s_x+1,e_x]$, occurs when $R^{i}[\gsame[u-1]-1]$ differs from $R^{i}[\gsame[u]-1]$. For technical convenience, we compare the left context of $\gsame[s_x]$ against the empty symbol $\varepsilon \neq \Sigma^{i}$, so every block in \gsa{} has at least one left-context break. On the other hand, a suffix $\gsame[u]$, with $u \in [s_x,e_x]$, is non-proper when $R^{i}[\gsame[u]]$ is the leftmost symbol of a phrase. In this case, we assign a special symbol $\texttt{@} \notin \Sigma^{i}$ as its left context. Lines~\ref{code:comp1}--\ref{code:comp2} describe the process of computing the values $(b, f, \ell)$ for $\gsame[s_x..e_x]$ on the fly as we scan $\gsame$.~Once we reach the start of a new block in \gsa, we process the information of the previous $\gsame[s_x..e_x]$, but only if its associated sequence $S_x$ belongs to $\mathcal{S}^{i}$. This condition holds when $|S_x|>1$, or $|S_x|=1$ and $exp(S_x[1])$ is suffixed by \texttt{\$}. If $b>1$ or $f=1$, $S_x$ induces an unsolved block in $pBWT^{i}$. Consequently, we append $\ell$ copies of a special symbol to $pBWT^{i}$. The value of this symbol depends on $f$ and $b$. If $b=1$ and $f=1$, $S_x$ is always a non-proper suffix in $\mathcal{F}^{i}$ (Lemma~\ref{lem:lem4}), so we append $\texttt{*}^{\ell}$. On the other hand, if $S_x$ is left-maximal ($b>1$), we append $\texttt{\#}^{\ell}$ instead (Lemma~\ref{lem:lem3}). When the symbol is $\texttt{\#}$, we also store $S_x$ associated with an integer in a dictionary $M$. If $\gsame[s_x..e_x]$ is the $oth$ block producing a special symbol in $pBWT^{i}$, the integer for $S_x$ is $\sigma^{i} + o$. After deciding the special symbol we insert in $pBWT^{i}$, we append $\gsame[s_x]$ into another array $S\!A_s$ representing a sampled version of \gsa. On the other hand, when $f=0$ and $b=1$, $S_x$ always occurs as a proper suffix in $\mathcal{F}^{i}$, and it is always preceded by $c \in \Sigma^{i}$ (Lemma~\ref{lem:rl}). Thus, we append $c^{\ell}$ to $pBWT^{i}$. Lines~\ref{code:newrange}--\ref{code:newrange2} describe the processing of $\gsame[s_x..e_x]$. Once we finish the scan of \gsa{}, we store $pBWT^{i}$ into the disk and return $S\!A_{s}$ and $M$. Figure~\ref{fig:prBWT_comp} shows an example of the computation of $pBWT^{i}$ in compressed space.

\begin{example}
Construction of $pBWT^{i}$ in Figure~\ref{fig:prBWT_comp}. The first run in $pBWT^{i}$ is $\texttt{c}^{2}$ because the suffix $R^{i}[\gsame[1]=7] = \texttt{\$}$ of the first block belongs to $R^{1}[4..7]=\texttt{acc\$}$, that has frequency $N^{1}[\mathsf{rank}_1(L^{1}, 7)=2]=2$. In contrast, the second run is $\texttt{*}^{4}$ because the suffixes $8$, $4$ and $12$ of the next three \gsa{} blocks map to the full phrases $R^{i}[8..11]=\texttt{aata}$, $R^{i}[4..7]=\texttt{acc\$}$, and $R^{i}[12..15]=\texttt{agta}$, respectively. The run has length $4$, and not $3$, because $\texttt{acc\$}$ has frequency $N^{1}[\mathsf{rank}_{1}(L^{1}, 4)=2]=2$. We add $(\texttt{gta}, 9)$ into $M$ because $\texttt{gta}$ is left-maximal in $\mathcal{F}^{i}$, and its associated block $\gsame[9..10]$ is the $4th$ block producing special symbols in $pBWT^{i}$. Now, assuming $\sigma^{1}=5$, we have that the integer value we store in $M$ is $4+5=9$. A similar situation occurs with $(\texttt{ta}, 10)$.
\end{example}

\paragraph{Output Data Structures in the Construction of $pBWT^{i}$} In addition to the preliminary BCR BWT of $T^{i}$, Algorithm~\ref{algo:prebwt} produces two auxiliary data structures: $S\!A_{s}$ and $M$. We will use these extra elements to assist in the compression of $\mathcal{F}^{i}$ in the next iteration step. $S\!A_{s}$ is a sampled version of \gsa{} storing one occurrence in $R^{i}$ for each string $S_x \in \mathcal{S}^{i}$ whose corresponding block $pBWT^{i}[s_x..e_x]$ is unsolved. Thus, if there are $n'\geq |\mathcal{F}^{i}|$ blocks in \gsa{} that have an unsolved projected block in $pBWT^{i}$, then $S\!A_{s}[1..n']$ has $n'$ sampled positions. Notice that because $S\!A_{s}$ maintains the relative order of the elements in \gsa{}, the strings encoded by $S\!A_{s}$ are sorted in LMS order. On the other hand, the dictionary $M$ stores as a key each string $S_x \in \mathcal{S}^{i}$ that is left-maximal in $\mathcal{F}^{i}$. The associated value of each key is the rank of the unsolved block that $S_x$ produced in $pBWT^{i}$. For instance, if the area $\gsame[s_x..e_x]$ for $S_x$ is the $oth$ block producing an unsolved segment in $pBWT^{i}$, the integer in $M$ for $S_x$ is $o+\sigma^{i}$. We add $\sigma^{i}$ just for technical convenience.  

\begin{figure}[!t]
\centering
\resizebox{0.7\textwidth}{!}{%
\begin{tikzpicture}[>=stealth,thick,baseline]

\matrix (m4) [matrix of nodes, ampersand replacement=\|,
     row 1/.style={nodes={gray, font=\footnotesize\bfseries}}] at (0,0)  { 
      \| \|  $1$ \| $2$ \| $3$ \| $4$ \| $5$ \| $6$ \| $7$ \| $8$ \| $9$ \| $10$ \| $11$ \| $12$ \| $13$ \| $14$ \| $15$ \\
     $R^{1}$ \| $=$ \|  \texttt{g} \| \texttt{t} \| \texttt{a} \| \texttt{a} \| \texttt{c} \| \texttt{c} \| \texttt{\$} \| \texttt{a} \| \texttt{a} \| \texttt{t} \| \texttt{a} \| \texttt{a} \| \texttt{g} \| \texttt{t} \| \texttt{a} \\
     $L^{1}$ \| $=$ \|  1 \| 0 \| 0 \| 1 \| 0 \| 0 \| 0 \| 1 \| 0 \| 0 \| 0 \| 1 \| 0 \| 0 \| 0 \\
     $N^{1}$ \| $=$ \|  2 \| \| \| 2 \| \| \| \| 1 \| \| \| \| 1 \\
};

\matrix (m3) [matrix of nodes, ampersand replacement=\|, below= 1.5cm of m4,
              row 2/.style={nodes=draw, white},
              column 1/.style={nodes={anchor=base east}}]  { 
     $\gsame$ \| $=$ \|  7 \| 8 \| 4 \| 12 \| 9\| 3 \| 11 \| 15 \| 6 \| 5 \| 1 \| 13 \| 2 \| 10 \| 14  \\
                   \|  a  \|    \|    \|   \|   \|   \|   \|   \|  \|    \|   \|    \|   \|   \|    \|   \|   \|    \|    \\
     $S\!A_{s}$    \| $=$ \|   \| $8_{\textcolor{gray!70}{1}}$ \| $4_{\textcolor{gray!70}{2}}$ \| $12_{\textcolor{gray!70}{3}}$ \| \|  \|  \|  \|  \|  \| $1_{\textcolor{gray!70}{4}}$ \|  \| $2_{\textcolor{gray!70}{5}}$ \|  \|  \\
     $pBWT^{i}$    \| $=$ \| $\texttt{c}^{2}$  \| $\texttt{*}^{4}$ \|  \| \|  $\texttt{a}^{1}$ \| \|  \|  \| $\texttt{c}^{2}$ \| $\texttt{a}^{2}$ \| $\texttt{\#}^{7}$ \|  \| \|  \|  \\
                     \||[white]|  a  \|    \|    \|   \|   \|   \|   \|   \|  \|    \|   \|    \|   \|   \|    \|   \|   \|    \|    \\
     $M$    \| $=$ \|   \|  \|  \| \|   \| \|  \|  \|  \|  \| |[white]| a \|  \||[white]|a \|  \|  \\
};
\node[] at (m3-6-13) {$(\texttt{gta},9)$};
\node[] at (m3-6-15) {$(\texttt{ta},10)$};

\node[anchor = west, rotate = 90, gray] (s1) at (m3-1-3.north) {\texttt{\$}};
\node[anchor = west, rotate = 90, gray] (s2) at (m3-1-4.north) {\texttt{aata}};
\node[anchor = west, rotate = 90, gray] (s3) at (m3-1-5.north) {\texttt{acc\$}};
\node[anchor = west, rotate = 90, gray] (s4) at (m3-1-6.north) {\texttt{agta}};
\node[anchor = west, rotate = 90, gray] (s5) at (m3-1-7.north) {\texttt{ata}};
\node[anchor = west, rotate = 90, gray] (s6) at (m3-1-8.north) {\texttt{a}};
\node[anchor = west, rotate = 90, gray]      at (m3-1-9.north) {\texttt{a}};
\node[anchor = west, rotate = 90, gray] (s7) at (m3-1-10.north) {\texttt{a}};
\node[anchor = west, rotate = 90, gray] at (m3-1-11.north) {\texttt{c\$}};
\node[anchor = west, rotate = 90, gray] at (m3-1-12.north) {\texttt{cc\$}};
\node[anchor = west, rotate = 90, gray] at (m3-1-13.north) {\texttt{gta}};
\node[anchor = west, rotate = 90, gray] at (m3-1-14.north) {\texttt{gta}};
\node[anchor = west, rotate = 90, gray] at (m3-1-15.north) {\texttt{ta}};
\node[anchor = west, rotate = 90, gray] at (m3-1-16.north) {\texttt{ta}};
\node[anchor = west, rotate = 90, gray] at (m3-1-17.north) {\texttt{ta}};

\draw (m3-1-3.north west) rectangle (m3-1-3.south east);
\draw (m3-1-3.north east) rectangle (m3-1-4.south east);
\draw (m3-1-4.north east) rectangle (m3-1-5.south east);
\draw (m3-1-5.north east) rectangle (m3-1-6.south east);
\draw (m3-1-6.north east) rectangle (m3-1-7.south east);
\draw (m3-1-7.north east) rectangle (m3-1-10.south east);
\draw (m3-1-10.north east) rectangle (m3-1-11.south east);
\draw (m3-1-11.north east) rectangle (m3-1-12.south east);
\draw (m3-1-12.north east) rectangle (m3-1-14.south east);
\draw (m3-1-14.north east) rectangle (m3-1-17.south east);
\draw[pattern=north west lines, pattern color=gray!60] (m3-1-7.north east)  rectangle (m3-1-10.south east);
\end{tikzpicture}
}
\caption{Construction of $pBWT^{1}$ in compressed space using the succinct dictionary $D^{1}=(R^{1}, L^{1}, N^{1})$ of the LMS parsing of Figure~\ref{fig:lms_par}. The boxes in \gsa{} are the blocks of equal suffixes in $\mathcal{F}^{i}$. We skip block $6$ (dashed box) because its suffixes overlap other phrases in $\mathcal{F}^{i}$, and hence, are redundant to build $pBWT^{i}$. }
\label{fig:prBWT_comp}
\end{figure}

\subsubsection{Grammar Compression}\label{ssec:gc}

Once we finish constructing $pBWT^{i}$, the next step in the parsing iteration $i$ is to store $\mathcal{F}^{i}$ in a compact form to use it later during the induction phase (Line~\ref{code:gram_comp_set} of Algorithm~\ref{algo:ovgrlbwt}). We first explain why we need the parsing set during the induction phase and then describe the format we choose to encode it. 

Broadly speaking, the induction process consists of scanning $BWT^{i+1}_{bcr}$ left to right, mapping every symbol $BWT^{i+1}_{bcr}[j] \in \Sigma^{i+1}$ back to the phrase $F[1..n_f] \in \mathcal{F}^{i}$ from which it originated, and then checking which of the proper suffixes of $F$ produced unsolved blocks in $pBWT^{i}$ (see Lemmas~\ref{lem:lem3} and~\ref{lem:lem4}). Assume the suffix $F[u..n_f] = S_x \in \mathcal{S}^{i}$ produced the unsolved block $pBWT^{i}[s_x..e_x]$, then we insert $F[u-1]$ into $pBWT^{i}[s_x..e_x]$. 

The process described above requires a mechanism to map left-maximal suffixes in $\mathcal{F}^{i}$ back to the unsolved block they produce in $pBWT^{i}$. We implement this feature by encoding $\mathcal{F}^{i}$ with a representation that is similar to grammar compression (Section~\ref{ssec:comp}). 

We start by modifying the set to make it suitable for our encoding. First, we insert each string $S_x \in \mathcal{S}^{i} \setminus \mathcal{F}^{i}$ associated with the unsolved block $pBWT^{i}[s_x..e_x]$ into $\mathcal{F}^{i}$. These strings are those meeting Lemma~\ref{lem:lem3} and that only appear as a proper suffix in $\mathcal{F}^{i}$, not as a full phrase. Then, we sort the strings in $\mathcal{F}^{i}$ in $\prec_{LMS}$ order. We refer to this new version of $\mathcal{F}^{i}$ as the \emph{expanded} parsing set \exppset{}.

\begin{definition}
   Expanded parsing set: a string set $\exppsetme \subset \mathcal{S}^{i}$ storing each element $S_x \in \mathcal{S}^{i}$ whose associated block $pBWT^{i}[s_x..e_x]$ either meets Lemma~\ref{lem:lem3} or Lemma~\ref{lem:lem4}. Additionally, the strings in \exppset{} are sorted in LMS order.  
\end{definition}

\begin{figure}[!t]
\centering
\resizebox{0.6\textwidth}{!}{%
\begin{tikzpicture}[>=stealth,thick,baseline]

\matrix (m4) [matrix of nodes, ampersand replacement=\|,
     ] at (0,0)  { 
     $\exppsetme$ \| $=$ \|  $\{\texttt{a\underline{a}\textcolor{gray!70}{ta}}$, \| \texttt{acc\$}, \| \texttt{\underline{a}\textcolor{gray!70}{gta}}, \| \texttt{\underline{g}}\textcolor{gray!70}{\texttt{ta}}, \| $\textcolor{gray!70}{\texttt{ta}}\}$ \\
     $G^{1}$ \| $=$ \|  $\texttt{a}10$ \| $\texttt{@\$}$ \| $\texttt{a}9$ \| $\texttt{g}10$ \|  $\texttt{@t}$  \\
};
\end{tikzpicture}
}
\caption{Example of our grammar-like encoding for the set \exppset{} constructed from the parsing set $\mathcal{F}^{1}$ of Figure~\ref{fig:prBWT_examp}. Each suffix in grey is the longest left-maximal suffix of its phrase. The underlined symbol is the left context of that suffix. }
\label{fig:comp_exppset}
\end{figure}

Notice the convenience of the expanded set: if we need to access the string $S_x$ associated with the $oth$ unsolved block $pBWT^{i}[s_x..e_x]$, we visit the $oth$ string in \exppset. Now, to fill $pBWT^{i}[s_x..e_x]$, we also need to know the symbols in $\Sigma^{i}$ preceding $S_x$. We support this functionality by compressing \exppset. Specifically, we replace the suffix occurrences of $S_x$ in \exppset{} with its LMS order $o$ in \exppset{}. Thus, for instance, if we access a string $F$ in the compressed version of \exppset{} that is suffixed by $c{\cdot}o$, we know that $c$ is the left context of $S_x$, and that goes within the $oth$ unsolved block $pBWT^{i}[s_x..e_x]$. This grammar-like encoding is lossless because there is a phrase within \exppset{} for each symbol $o \notin \Sigma^{i}$ we insert. Therefore, if we need to access the nested left-maximal suffixes of $S_x$, we just visit the $oth$ string in \exppset{}. We can further compress \exppset{} by removing the suffixes that do not produce unsolved blocks in $pBWT^{i}$. The result is a lossy grammar-like encoding that gives us access to the left-maximal suffixes of \exppset{} and their left-context symbols. This feature is enough for us to run the induction phase. 

In practice, the sampled suffix array $S\!A_{s}$ that we produced with Algorithm~\ref{algo:prebwt} is an implicit representation of \exppset{}, so we do not have to compute it. On the other hand, the dictionary $M$ contains the information we need to compress the suffixes of \exppset{} that generate unsolved blocks in $pBWT^{i}$. 

\paragraph{Grammar-compressing the Expanded Parsing Set} Algorithm~\ref{algo:gramc} shows in detail the steps we perform during this procedure. We start by initializing a new vector $G^{i}[1..2|S\!A_{s}|]$ to store the compressed version of \exppset.~Then, we scan $S\!A_{s}$ from left to right, and for every position $j=S\!A_{s}[u]$ we visit, we access its associated phrase $F[1..n_f]=R^{i}[j..j'] \in \mathcal{F}^{i}_{exp}$, with $j'\geq j$ being the leftmost index with $L^{i}[j'+1]=1$. First, if $j=j'$ and $L^{i}[j']=1$, then $F=R^{i}[j]$ has length one and expands to a full string  $exp(F[1])=T_{x}\texttt{\$}$, with $T_x \in \mathcal{T}$. This situation is a corner case generated by the LMS parsing, so we set $F=\texttt{@}R^{i}[j]$ (see Line~\ref{code:fullphrase}), where $\texttt{@}=|S\!A_{s}|+1$ is a special symbol that denotes an invalid element in the encoding. On the other hand, when $n_f>1$, we scan the phrase from left to right to find its longest proper suffix $S_x=F[u..n_f]=R^{i}[p..j']$, with $j>p\leq j'$, which exists in $M$ as a key. If such a key exists, we obtain its left context $c=F[u-1]=R^{i}[p-1]$ and replace $F$ with $c{\cdot}o'$, where $o'$ is the integer associated with $S_x$ in $M$ (Lines~\ref{code:ls1}--\ref{code:ls2}). If no proper suffix of $F$ exists in $M$ as a key, we replace $F$ with a sequence of length two. The left symbol will be $\texttt{@}$, while the right symbol depends on $F$' sequence. If $F[n_f] = R^{i}[j']$ expands to a string in $\Sigma^{*}$ suffixed by $\texttt{\$}$, we set the right symbol to $F[n_f]$, or set the right symbol to $F[n_f-1]=R^{i}[j'-1]$ otherwise~(Lines\ref{code:nls1}--\ref{code:nls2}). Once we update $F$, we append it to $G^{i}$. After we finish the scan of $S\!A_{s}$, we destroy $R^{i}$, $L^{i}$, $S\!A_s$, and $M$, and finally store $G^{i}$ on disk. Figure~\ref{algo:gramc} depicts an example of our grammar-like encoding.

\paragraph{Decompressing Strings of the Expanded Parsing Set} Consider again the phrase $F$, which we replaced with the string $c{\cdot}o'$ in $G^{i}$. The symbol $o' > \sigma^{i}$ is the integer we obtained when we performed a lookup operation of $S_x = F[u..n_f]$ in $M$ during the execution of Algorithm~\ref{algo:gramc}, while $c=F[u-1]$. Further, the value $o' = o + \sigma^{i}$ is the LMS order $o$ of $S_x$ in \exppset{} plus $\sigma^{i}$ (see Algorithm~\ref{algo:prebwt}). The existence of $S_x$ in $M$'s keys implies that $S_x$ is a left-maximal suffix in $\mathcal{F}^{i}$, which in turn implies that $S_x$ is a full phrase in \exppset. Additionally, the left-maximal condition of $S_x$ implies that at least two suffix occurrences of $S_x$ in $\mathcal{F}^{i}$ were preceded by different symbols. This is why $S_x$ produces the $oth$ unsolved block $pBWT^{i}[s_x..e_x]$. Now, recall that the phrases of \exppset{} are stored in $\prec_{LMS}$ order. Therefore, if we want to access the area where $S_x$ lies, we have to set $o=o'-\sigma^{i}$ and go to $G^{i}[2o-1..2o]$. This substring does not encode the full sequence of $S_x$, but its longest left-maximal suffix $G^{i}[2o]$ (which is also a left-maximal suffix of $F$) along with the left-context symbol for that suffix ($G^{i}[2o-1] \in \Sigma^{i}$). Recursively, the longest left-maximal symbol of $S_x$ is not a sequence either but a pointer to another position of $G^{i}$. We access this nested left-maximal suffix by setting $o=G^{i}[2o] - \sigma^{i}$ and updating the area $G^{i}[2o-1..2o]$. We continue applying this idea until we reach a range $G^{i}[2o-1..2o]$ where $c=G^{i}[2o] \leq \sigma^{i}$, which implies that we reached the last suffix of $F$. In most of the cases, $c$ is not $F[n_f]$, but $F[n_f-1]$. The reason is that the LMS substrings overlap by one symbol in $T^{i}$, so $F[n_f]$ is redundant as it also appears as a prefix in another phrase. The only exception to this rule is when $F$ expands to a suffix of a string in $\mathcal{T}$. In that case, $G^{i}[2o]$ is indeed $F[n_f]$ because $F$ does not overlap the prefixes of other LMS parsing phrases. 

\begin{example}
Spelling the left-context symbols of the left-maximal suffix \texttt{agta} (Figure~\ref{fig:comp_exppset}). The string \texttt{agta} has LMS order $3$ in \exppset{} so we visit the $3th$ phrase in $G^{1}[2{\times}3-1..2{\times}3]=G^{1}[5..6]=\texttt{a}9$, and access the left symbol \texttt{a}. Then, we compute the next string using the right symbol $9$ as $4=9-\sigma^{i}=9-5$, visit the $4th$ string $G^{i}[2{\times}4-1..2{\times}4]=G^{1}[7..8]=\texttt{g}10\ (\texttt{gta})$ and output $\texttt{g}$. We repeat the same process, computing the string $5=10-5$ and visiting $G^{1}[9..10]=\texttt{@t}\ (\texttt{t\textcolor{gray!70}{a}})$. However, this time we have $\texttt{t} \leq \sigma^{1}$, which means we reached the rightmost suffix of $\texttt{agta}$ that does match the prefix of other phrases in the LMS parsing of $T^{1}$ (see the phrases ending with \texttt{ta} in the string $T^{i}$ of Figure~\ref{fig:lms_par}). This situation does not occur, for instance, with \texttt{acc\$}, whose rightmost suffix in $G^{1}$ is indeed \texttt{\$} because $\texttt{acc\$}$ can not overlap.
\end{example}

\begin{algorithm}[!t]
\caption{Grammar compressing \exppset{}}

\label{algo:gramc}{}
\begin{algorithmic}[1]
\Require $R^{i}, L^{i}, B^{i}, S\!A_{s}, M$

    \State $G^{i} \gets \emptyset$ \Comment{grammar-compressed \exppset}

    \For{$u=1$ to $|S\!A_{s}|$}  \Comment{visit each $F=R^{i}[j..j'] \in \exppsetme$}

        \State $j \gets S\!A_{s}[u]$ 
        \State $j' \gets$\ leftmost index $j'\geq j$ with $L[j'+1]=1$ or $j'=|R^{i}|$ 
        \State $o' \gets \varepsilon, c \gets \texttt{@}$

        \If {$L^{i}[j']=1$ and $B^{i}[R^{i}[j']]=1$}\label{code:fullphrase} \Comment{$exp(R^{i}[1]=F[1])=T_x\texttt{\$}$}
            \State $o' \gets R^{i}[j]$
        \Else

            \State $p \gets j+1$
            \While{ $p\leq j'$ \textbf{and} $R^{i}[p..j']$ is not left-maximal}\label{code:ls1}
                \State $p \gets p+1$
            \EndWhile

            \If{$S_x = R^{i}[p..j']$ is left-maximal}
                \State $c \gets R^{i}[p-1]$
                \State $o' \gets$\ value associated to $S_x$ in $M$\label{code:ls2}
            \Else\label{code:nls1} \Comment{$F$ does not have left-maximal suffixes}

                \If{$B^{i}[R^{i}[j']]=1$} \Comment{$F$ expands to a suffix in $\mathcal{T}$}
                    \State $o' \gets R^{i}[j']$ 
                 \Else \Comment{$F$ is a regular parsing phrase with overlap}
                    \State $o' \gets R^{i}[j'-1]$ 
                \EndIf
            \EndIf\label{code:nls2}
        \EndIf
        \State $F \gets c{\cdot}o'$
        \State $G^{i} \gets G^{i} \cup F$

    \EndFor\label{algo:inf:epar}
    \State destroy $R^{i}, L^{i}, B^{i}, S\!A_{s}$ and $M$
    \State store $G^{i}$ on disk 
\end{algorithmic}
\end{algorithm}

\subsection{Creating the String for the Next Iteration of Parsing}

The final step of iteration $i$ during the parsing phase is to create the text $T^{i+1}$. For this purpose, we produce a new dictionary $D^{i}$ containing the strings in $\mathcal{F}^{i}$ (i.e., the parsing phrases) as keys. The value associated with each key is its LMS order in \exppset.  We construct $T^{i+1}$ by running LMS parsing over $T^{i}$ again to replace the phrases with their associated values in $D^{i}$. If $T^{i+1}$ has length $k$ (the number of strings in $\mathcal{T}$), we stop the parsing phase as all the strings in $\mathcal{T}$ are now compressed to one symbol. 

The only caveat with this construction is that the symbols in the alphabet $\Sigma^{i+1}$ of $T^{i+1}$ are not consecutive if $|\exppsetme|>|\mathcal{F}^{i}|$, but this feature does not change the correctness of our method. Specifically, if $T^{i+1}[j]<T^{i+1}[j']$, it still hold that $exp(T^{i+1}[j]) \prec_{LMS} exp(T^{i+1}[j'])$. We will use the fact that the symbols in $T^{i+1}$ are the LMS order in \exppset{} and not in $\mathcal{F}^{i}$ during the induction phase.   

\subsection{The Induction Phase}\label{ssec:indpha}

The induction phase starts with the computation of $BWT^{h}_{bcr}$, the BCR BWT for the text $T^{h}$ of the last iteration $h$ of the parsing phase (Line~\ref{code:ind_phase_start} of Algorithm~\ref{algo:ovgrlbwt}). This step is trivial as each symbol in $T^{h}$ encodes a full string of $\mathcal{T}$ (see the ending condition of the parsing phase). Hence, the left context of every symbol is the symbol itself. The BCR BWT maintains the relative order of the strings in $\mathcal{T}$ (see Section~\ref{ssec:bwt}), so $BWT^{h}_{bcr}$ is $T^{h}$ itself.

Before explaining our induction procedure, we describe some important properties of $BWT^{i+1}_{bcr}$. As a quick reminder, \exppset{} is the expanded parsing set encoding the strings in $\mathcal{F}^{i}$ (see Lemma~\ref{lem:lem2}) plus the sequences that are left-maximal suffixes in $\mathcal{F}^{i}$. The strings in \exppset{} are precisely those inducing unsolved blocks in $pBWT^{i}$. 

\begin{lemma}\label{lem:ind1}
Let $BWT^{i+1}_{bcr}[j]$ and $BWT^{i+1}_{bcr}[j']$ be two symbols in $\Sigma^{i+1}$, with $j<j'$, whose corresponding phrases in $\mathcal{F}^{i}$ are $F[1..n_f]$ and $F'[1..n_{f'}]$, respectively. Additionally, let the proper suffixes $F[u..n_f]=F'[u'..n_{f'}]=S_x \in \exppsetme$ be left-maximal in $\mathcal{F}^{i}$. Now consider the substrings $map^{i}(T^{i+1}[GS\!A^{i+1}[j]-1]) = T^{i}[p..p+n_f-1]$ and $map^{i}(T^{i+1}[GS\!A^{i+1}[j']-1])=T^{i}[p'..p'+n_{f'}-1]$ with the occurrences of $F$ and $F'$ that formed the symbols $BWT^{i+1}_{bcr}[j]$ and $BWT^{i+1}_{bcr}[j']$ in $T^{i+1}$, respectively. The suffix $T^{i}[p+u-1..n^{i}]$ prefixed by $S_x = F[u..n_f]$ precedes in \sa{} the suffix $T^{i}[p'+u'-1..n^{i}]$ prefixed by $S_x = F[u'..n_{f'}]$.
\end{lemma}

\begin{proof}
As the prefixes $T^{i}[p+u-1..n_f]=F[u..n_f]=S_x$ and $T^{i}[p'+u'-1..n_{f'}-1]=F'[u'..n_{f'}]=S_x$ are equal, the relative order of $T^{i}[p+u-1..n^{i}]$ and $T^{i}[p'+u'-1..n^{i}]$ is decided by the right contexts in $T^{i+1}$ of the occurrences $BWT^{i+1}_{bcr}[j]$ and $BWT^{i+1}_{bcr}[j']$. By induction, we know that $BWT^{i+1}_{bcr}$ is complete, and as $BWT^{i+1}_{bcr}[j]$ precedes $BWT^{i+1}_{bcr}[j']$ in the BWT, the right context of $T^{i}[p+u-1..n_f]$ is lexicographically smaller than the right context of $T^{i}[p+u'-1..n_{f'}]$.
\end{proof}

We generalize Lemma~\ref{lem:ind1} to compute the block $pBWT^{i}[s_x..e_x]=\texttt{\#}^{\ell}$ that meets Lemma~\ref{lem:lem3} and whose string $S_x \in \mathcal{S}^{i} \setminus \mathcal{F}^{i}$ \emph{only} occurs as a proper suffix in $\mathcal{F}^{i}$.

\begin{lemma}\label{lem:ind2}
Let $J=\{j_1, j_2,\ldots,j_\ell\}$ be a set of strictly increasing positions of $BWT^{i+1}_{bcr}$. Every $BWT^{i+1}_{bcr}[j_{b}]$, with $j_{b} \in J$, is a symbol $o \in \Sigma^{i+1}$ assigned to a phrase $F[1..n_f] \in \mathcal{F}^{i}$ where $S_x=F[u..n_{f}] \in \exppsetme$ occurs as a proper suffix. The symbols of $BWT^{i+1}_{bcr}$ referenced by $J$ are not necessarily equal, and hence, their associated phrases in $\mathcal{F}^{i}$ are not necessarily the same. However, these phrases of $\mathcal{F}^{i}$ are all suffixed by $S_x$. Assume we scan $J$ from left to right, and for every $j_{b}$, we extract the symbol $F[u-1] \in \Sigma^{i}$ that precedes $S_x$ and append it to a vector $L_{S_x}$. The resulting sequence for $L_{S_{x}} \in \Sigma^{i*}$ matches the unsolved block $pBWT^{i}_{bcr}[s_x..e_x]=\texttt{\#}^{\ell}$ generated by $S_x$.
\end{lemma}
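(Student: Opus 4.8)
The plan is to combine Lemma~\ref{lem:lem2} (which tells us that the occurrences of $S$ in $T^{i}$ form a contiguous range $S\!A^{i}[p,p']$ of the suffix array) with the generalized version of Lemma~\ref{lem:ind1} (which orders those occurrences among themselves). Concretely, I would first recall that by Lemma~\ref{lem:lem2} all the suffixes of $T^{i}$ prefixed by $S$ occupy a single consecutive block $S\!A^{i}[p,p']$, and that the corresponding block of $BWT^{i}$ is exactly the list of symbols $T^{i}[S\!A^{i}[q]-1]$ for $q \in [p,p']$, read in increasing order of $q$. So the statement reduces to showing that scanning $J$ left to right and emitting $F[u-1]$ produces precisely this sequence, i.e.\ that the left-to-right order on $J$ coincides with the $S\!A^{i}$-order of the occurrences of $S$.

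The core argument is the order-preserving correspondence. Each $j_o \in J$ corresponds to an occurrence of a phrase $F$ in $T^{i}$ whose suffix $F[u..]=S$; the position in $T^{i}$ where this occurrence of $S$ starts, call it $q_o \in [p,p']$, is determined by where $F$ was substituted. Since $BWT^{i+1}$ is complete by the induction hypothesis, $BWT^{i+1}[j_o]$ is the symbol of $T^{i+1}$ immediately left of the metasymbol for $F$, and $j_o$ is exactly the $S\!A^{i+1}$-rank of the suffix of $T^{i+1}$ starting at that metasymbol — that is, $j_o$ reflects the lexicographic rank of the right context of this occurrence of $S$ in $T^{i}$ (the suffix of $T^{i}$ starting just after $S$). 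By the generalized Lemma~\ref{lem:ind1}, for two occurrences with positions $j_o < j_{o'}$ in $J$, the right context of the first has smaller $S\!A^{i}$-rank, hence the suffix of $T^{i}$ prefixed by the first occurrence of $S$ precedes the suffix prefixed by the second in $S\!A^{i}$. Therefore the map $o \mapsto q_o$ is strictly increasing as a function of $j_o$, i.e.\ it lists $[p,p']$ in order. Emitting $F[u-1] = T^{i}[q_o - 1] = BWT^{i}[\text{rank of } q_o]$ then yields exactly the $BWT^{i}$ block for $S$.

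Two points need care and are where I expect the real work. First, I must argue that $J$ enumerates \emph{all} occurrences of $S$ as a proper suffix of a dictionary phrase, with the correct multiplicity: every occurrence of $S$ as a suffix of an LMS substring of $T^{i}$ arises from exactly one metasymbol occurrence in $T^{i+1}$, and conversely; this is where one uses that the LMS substrings of $T^{i}$ tile $T^{i}$ (overlapping only at endpoints) and that distinct LMS-substring occurrences map to distinct positions of $T^{i+1}$. Second — the main obstacle — I must reconcile the $\prec_{LMS}$-based bucketing used to build $pBWT^i$ with true lexicographic order on $S\!A^{i}$: Lemma~\ref{lem:lem1} already establishes that within the block for $S$ there are no interleavings from other strings of $\mathcal{S}$, but I also need that the \emph{internal} order of the block agrees between ``sort by right context'' (what $BWT^{i+1}$ gives us via Lemma~\ref{lem:ind1}) and ``sort the suffixes $T^{i}[q..]$ lexicographically''; since all these suffixes share the common prefix $S$, their order is governed entirely by the suffixes of $T^{i}$ starting at $q+|S|$, which is exactly the right-context ordering, so the two coincide. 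I would also handle the boundary/sentinel exception (when $S$ expands to a suffix of a string of $\mathcal{T}$) as in the earlier parsing discussion, noting it does not disturb the block structure. Assembling these observations gives the claimed equality $L_S = BWT^i[\,\text{block of }S\,]$.
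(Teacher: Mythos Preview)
Your proposal is correct and follows the same line as the paper. The paper's own proof is much terser: it simply invokes Lemma~\ref{lem:ind1} on consecutive pairs $j_o<j_{o+1}$ of $J$ to conclude that the suffixes of $T^{i}$ prefixed by $S$ are already in $S\!A^{i}$-order when read along $J$, and stops there; the block structure from Lemma~\ref{lem:lem2}, the completeness of $J$, and the observation that within the block lexicographic order is governed solely by the right context are all left implicit, so your write-up is a fuller version of the same argument rather than a different one.
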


\begin{proof}
Lemma~\ref{lem:ind1} tells us that the suffix of $T^{i}$ prefixed by the occurrence of $S_x$ encoded by $BWT^{i+1}[j_{b}]$ precedes the suffix of $T^{i}$ prefixed by the occurrence encoded by $BWT^{i+1}[j_{b+1}]$. This property holds for every $j_{b}$, with $b \in [1,\ell-1]$. Hence, the suffixes of $T^{i}$ prefixed by $S_x$ are already sorted in $J$. On the other hand, Lemma~\ref{lem:lem2} tells us that all the occurrences of $S_x$ as a suffix of a parsing phrase appear consecutively in $\same[s_x..e_x]$. Thus, by taking the left-context symbols of $S_x$'s occurrences encoded by $J$, we obtain $pBWT^{i}_{bcr}[s_x..e_x]$.
\end{proof}

\begin{example}
Filling the unsolved block $pBWT^{i}[15..18]=\texttt{\#}^{4}$ of Figure~\ref{fig:prBWT_examp} with Lemma~\ref{lem:ind2}.~Consider the BCR BWT $BWT^{2}_{bcr}=4\ 4\ 3\ 1\ 2\ 2$ for the text $T^{2}=4\ 2\ 4\ 1\ 3\ 2$ of Figure~\ref{fig:lms_par}.~The expanded parsing set \exppset{} from which the symbols of $BWT^{2}_{bcr}$ were generated is shown in Figure~\ref{fig:comp_exppset}. Additionally, consider the string $\texttt{ta} \in \mathcal{S}^{1}$ generating the block $GS\!A^{1}[15..18]$ in the partition induced by $\mathcal{S}^{1}$ (see Figure~\ref{fig:prBWT_examp}). As $S\!A^{1}[15..18]$ meets Lemma~\ref{lem:lem3}, the projected block $pBWT^{i}[15..18]=\texttt{\#}^{4}$ is unsolved. Lemma~\ref{lem:ind2} tells us that we can solve $pBWT^{1}[15..18]$ provided we know $BWT^{i+1}_{bcr}$ and the phrases of $\mathcal{F}^{1}$ where $\texttt{ta}$ occurs as a suffix. The prefix $BWT^{2}_{bcr}[1..4]= 4\ 4\ 3\ 1$ contains all the symbols in $\Sigma^{i+1}$ whose associated phrases are suffixed by \texttt{ta}. In particular, if we replace $4\ 4\ 3\ 1$ with their phrases in \exppset{}, we obtain \texttt{gta}, \texttt{gta}, \texttt{agta}, \texttt{aata}. We apply Lemma~\ref{lem:ind2} by taking the left context of \texttt{ta} in those strings without changing the relative order and thus obtain $L_{\texttt{ta}} = pBWT^{1}[15..18]=\texttt{g g g a}$, which is precisely the substring $BWT_{bcr}[15..18]$ of Figure~\ref{fig:prBWT_examp}.
\end{example}

When $S_x$ does not occur as a nested proper suffix in $\mathcal{F}^{i}$ (Lemma~\ref{lem:lem4}), there is no left-context symbol we can extract from the parsing set, so Lemma~\ref{lem:ind2} does not work. Nevertheless, we can use $BWT^{i}_{bcr}$ in other ways to complete the unsolved block $pBWT^{i}[s_x..e_x]=\texttt{*}^\ell$ that $S_x$ generated.

\begin{lemma}\label{lem:ind3}
Let $pBWT^{i}[s_x..e_x]=\texttt{*}^\ell$ be an unsolved block induced by a string $S_x \in \mathcal{S}^{i}$ that meets Lemma~\ref{lem:lem4}. Additionally, let $o'$ and $o \in \Sigma^{i+1}$ be the LMS orders of $S_x$ in $\mathcal{F}^{i}$ and \exppset, respectively. Let $\same[u..u']$ be the bucket $o' \leq o$ storing the suffixes of $T^{i}$ prefixed by $o$. The element in $pBWT[s_x+j-1]$ is the rightmost symbol in $exp^{i}(BWT^{i+1}_{bcr}[u+j-1])$, with $u+j-1 \leq u'$.
\end{lemma}

\begin{proof}
$S_x$ is the LMS parsing phrase to which we assign the symbol $o \in \Sigma^{i+1}$ as a replacement for the text $T^{i+1}$, $o$ being the LMS order of $S_x$ in \exppset{}. Recall that this construction produces the alphabet $\Sigma^{i+1}$ of $T^{i+1}$ to be non-contiguous. Therefore, the bucket $\same[u..u']$ number $o'\leq o$ is the one storing the suffixes of $T^{i+1}$ prefixed by $o$. We know that $S_x$ does not occur as a nested proper suffix within $\mathcal{F}$ (Lemma~\ref{lem:lem4}), meaning that \emph{all} its left-context symbols (those we insert in $pBWT^{i}[s_x..e_x]$) are captured\footnote{That is, the symbols in $pBWT^{i}[s_x..e_x]$ occur within the phrases of $\mathcal{F}^{i}$ that map to symbols preceding $o$ in $T^{i+1}$.} by the symbols that precede $o$ in $T^{i+1}$. By induction, $BWT^{i+1}_{bcr}[u..u']$ already has these preceding symbols in BWT order. Thus, the remaining step is to decompress those symbols, take the rightmost element of their phrases, and place them in $pBWT^{i}[s_x..e_x]$ without changing their relative order. 
\end{proof}

\begin{example}
Filling the unsolved block $pBWT^{i}[4..5]=\texttt{*}^{2}$ produced by  $\texttt{acc\$} \in \mathcal{F}^{1}$ of Figure~\ref{fig:prBWT_examp} using Lemma~\ref{lem:ind3}.  The phrase \texttt{acc\$} has LMS order $o=2 \in \Sigma^{2}$ in \exppset. Further, $\same[2..3]$ is the bucket number $o'=2$ and has the suffixes of $T^{i+1}$ prefixed by $o=2$ (see Figure~\ref{fig:ind_examp}). The corresponding range in the BWT has the sequence $BWT^{2}_{bcr}[2..3]=4\ 3$. Decompressing their phrases gives us $exp(4)=\texttt{gt}$ and $exp(3)=\texttt{agt}$ (recall that $exp$ removes the last element in the overlapping LMS phrases of $T^{i}$). If we take their rightmost symbols, we produce  $pBWT^{1}[4..5]= \texttt{t\ t}$, which matches $BWT_{bcr}[4..5]$ in Figure~\ref{fig:prBWT_examp}. 
\end{example}

Finally, we cover the case when $S_x$ occurs as a phrase $F=S_x \in \mathcal{F}^{i}$ but also as a nested proper suffix $S_x=F'[p..n_{f'}]$ in another parsing phrase $F'[1..n_{n'}] \in \mathcal{F}^{i}$. We solve its block $pBWT^{i}[s_x..e_x]=\texttt{\#}^{\ell}$ using a hybrid strategy that combines Lemmas~\ref{lem:ind2} and Lemma~\ref{lem:ind3}.

\begin{lemma}\label{lem:ind4}
Let $J$ be the set of Lemma~\ref{lem:ind2} with the occurrences in $BWT^{i+1}$ of the phrases suffixed by $S_x$. This time, these suffixes could be proper or non-proper. Assume we scan $J$ from left to right to fill $L_{S_x}$, but with a small change: if $exp^{i}(BWT^{i}[j_b])=F$, with $j_b \in J$, we append the special symbol \texttt{*} in $L_{S_{x}}$. After the scan, we replace the occurrences of \texttt{*} in $L_{S_x}$ with Lemma~\ref{lem:ind3}. The replacement of the $jth$ special symbol in $L_{S_{x}}$ is the rightmost element of $exp^{i}(BWT^{i+1}_{bcr}[u+j-1])$, where $\same[u..u']$ is the bucket with the suffixes of $T^{i+1}$ prefixed by the symbol $o \in \Sigma^{i+1}$ assigned to $F=S_{x}$.  
\end{lemma}

Our lossy grammar-like representation of \exppset{} (i.e., the vector $G^{i}$ of Section~\ref{ssec:gc}) has all the information we need to fill the unsolved blocks as described in Lemmas~\ref{lem:ind2}, \ref{lem:ind3}, and \ref{lem:ind4}. The only extra information we need to complete $pBWT^{i}$, and that it is not in $G^{i}$, is the order in which we have to rearrange the left-context symbols of $S_x$ within $pBWT^{i}_{bcr}[s_x..e_x]$. Fortunately, we can induce this information from $BWT^{i+1}_{bcr}$.  

It is also worth mentioning that the special symbols \texttt{*} and \texttt{\#} we introduced in $pBWT^{i}$ indicate what lemma we should use to fill the unsolved blocks. We will use this fact in the next section.   

\subsection{The Induction Algorithm}\label{sec:indbwtalgo}

We now describe the steps we perform during iteration $i<h$ in the induction phase (loop in lines~\ref{code:if_start}-\ref{code:if_end} of Algorithm~\ref{algo:ovgrlbwt}). Notice that the value of $i$ decrements with each round as we simulate the return from a recursion that is equivalent to that of \textsf{SA-IS}. In this case, we assume we receive as input for the iteration the string (i) $BWT^{i+1}_{bcr}$, (ii) the vector $G^{i}$ with the lossy grammar-like encoding of \exppset{}, and (iii) $pBWT^{i}$. We also assume a bit vector $V^{i}[1..\sigma^{i+1}]$, with $\sigma^{i+1}=|\exppsetme|$, indicates with $V^{i}[o]=1$ if the $oth$ string $S_x \in \exppsetme$ in LMS order is left-maximal in $\mathcal{F}^{i}$. The output of the iteration is $BWT^{i}_{bcr}$, the BCR BWT of string $T^{i}$. The details of the procedure are shown in Algorithm~\ref{algo:indBWT}.

\paragraph{Data Structures and Encoding} A central piece of our induction algorithm is an in-memory vector $P^{i}$ storing the left-context symbols of the strings $S_x \in \exppsetme$ occurring as left-maximal suffixes in $\mathcal{F}^{i}$ (Lemmas~\ref{lem:ind2} and \ref{lem:ind4}). We divide $P^{i}$ into $\mathsf{rank}_{1}(V^{i}, \sigma^{i+1})$ buckets. Thus, for a left-maximal suffix $S_x \in \exppsetme$ whose LMS order in \exppset{} is $o \in \Sigma^{i+1}$, we store its left-context symbols in the bucket $b=\mathsf{rank}_{i}(V^{i}, o)$. Put differently, the bucket $b$ of $P^{i}$ is an encoding for the sequence $L_{S_x}=pBWT^{i}[s_x..e_x]$ of the previous section. We keep $P^{i}$ in run-length-compressed format to reduce working memory and CPU time. We estimate the (run-length-compressed) area of every bucket within $P^{i}$ before the induction starts. To simplify our explanations, we will assume we already know this information and will describe how to compute it later (see Paragraph \ref{par:get_p_size}). We use the notation $P^{i}[b]$ to denote the complete run-length-compressed area of $P^{i}$ storing the symbols of bucket $b$.~We also define a process called \emph{RLC append}: let $L$ be a run-length-compressed string and let $(c, \ell)$ be an equal-symbol run we need to append into $L$. If $c$ matches the symbol of the rightmost run in $L$, we increase the length of that run by $\ell$ or append $(c, \ell)$ as a new run otherwise.

\begin{figure}[!t]
\centering
\resizebox{0.7\textwidth}{!}{%
\begin{tikzpicture}[>=stealth,thick,baseline]

\matrix (m4) [matrix of nodes, ampersand replacement=\|,
     column 1/.style={nodes={gray!70, font=\footnotesize}, anchor=base east},
     column 5/.style={nodes={gray!70, font=\footnotesize}, anchor=base east},
     column 7/.style={nodes={gray!70}, anchor=base west},
     column 9/.style={nodes={gray!70, font=\footnotesize}, anchor=base west}
     ] at (0,0)  { 
                 \| $G^{1}$                                \| $V^{1}$ \| |[white]|aaaa \|                                         \| $BWT^{2}_{bcr}$ \|              \| $P^{1}$\\
 $\texttt{aata}$ \| $\texttt{a}5_{\textcolor{gray!70}{1}}$ \| $0$     \|               \| $\texttt{g}5$, \texttt{t}               \| $4$               \| $1\ 3\ 2$    \| $(\texttt{*},2)$ \| \texttt{gta} \\
 $\texttt{acc\$}$\| $\texttt{@\$}_{\textcolor{gray!70}{2}}$\| $0$     \|               \| $\texttt{g}5$, \texttt{t}               \| $4$               \| $2$          \| $(\texttt{a},1)$ \| \\
 $\texttt{agta}$ \| $\texttt{a}4_{\textcolor{gray!70}{3}}$ \| $0$     \|            \| $\texttt{a}4$, $\texttt{g}5$, \texttt{t}\| $3$               \| $2$          \| $(\texttt{g},3)$ \| \texttt{ta} \\
 $\texttt{gta}$  \| $\texttt{g}5_{\textcolor{gray!70}{4}}$ \| $1$     \|           \| $\texttt{a}5$, \texttt{t}               \| $1$               \| $3\ 2$       \| $(\texttt{a},1)$ \| \\
 $\texttt{ta}$   \| $\texttt{@t}_{\textcolor{gray!70}{5}}$ \| $1$     \|           \| \texttt{\$}                             \| $2$               \| $4\ 1\ 3\ 2$ \|                  \\
                 \|                                        \|         \|           \| \texttt{\$}                             \| $2$               \| $4\ 2$       \|                  \\
};
\draw[dashed, gray!70] (m4-3-8.south west) -- (m4-3-8.south east);
\node[anchor = east] at (m4-1-2.west) {(A)};
\node[anchor = east] at (m4-1-6.west) {(B)};
\end{tikzpicture}
}

\caption{Construction of $P^{1}$ during the induction phase. (A) The grammar-like encoding $G^i$ of \exppset{} in Figure~\ref{fig:comp_exppset}. We subtracted $\sigma^{1}=5$ to the right symbols to simplify the example. The vector $V^{1}$ indicates which phrases of \exppset{} are left-maximal suffixes in $\mathcal{F}^{1}$. (B) The vector $BWT^{2}_{bcr}$ for the string $T^{2}$ of Figure~\ref{fig:lms_par} and the vector $P^{1}$. The dashed line in $P^{1}$ marks the boundary between its buckets. The sequence to the left of each $BWT^{2}_{bcr}[j]=o$ stores the elements we decompress from $G^{1}$ starting from symbol $o$, while the sequence to the right of $BWT^{2}_{bcr}[j]$ is its following suffix in $T^{2}$ sorted as in $GS\!A^{2}$. }
\label{fig:ind_examp}
\end{figure}

\paragraph{The Induction Process} This step consists in computing the unsolved blocks $pBWT^{i}[s_x..e_x]=\texttt{\#}^{\ell}$ that meet Lemmas~\ref{lem:ind2} and \ref{lem:ind4}. However, instead of inserting the result right away into $pBWT^{i}$, we will insert it into $P^{i}$. We visit the equal-symbol runs of $BWT^{i+1}_{bcr}$ from left to right. When we reach the $jth$ run $(o,\ell)$, with $o \in \Sigma^{i+1}$, we check if its corresponding\footnote{The string from which we obtain the symbol $o$ during iteration $i$ of the parsing phase} parsing phrase $F[1..n_f] \in \mathcal{F}^{i}$ exists as a suffix in other phrases of $\mathcal{F}^{i}$. If that is the case, we RLC append $\ell$ copies of the special symbol $\texttt{*} \notin \Sigma^{i}$ to the bucket $b=\mathsf{rank}(V^{i}, o)$ of $P^{i}$ (see Lines~\ref{code:dummy1}--\ref{code:dummy2}). Inserting $\texttt{*}$ into $P^{i}$ is equivalent to handling a block of $pBWT^{i}$ that meets Lemma~\ref{lem:ind4}.
The next step is to decompress the left-maximal suffixes of $F$ from $G^{i}$ (see Section~\ref{ssec:gc}). This process begins by accessing $G^{i}[2o-1..2o]$. If $o'=G^{i}[2o]>\sigma^{i+1}$, then $o'$ encodes a string $F[u..n_f]=S_x \in \mathcal{F}_{exp}$, with $u>1$, whose sequence is a left-maximal suffix in $\mathcal{F}^{i}$ (see Lemma~\ref{lem:lem3}). We encoded $o$ as $o' = o+\sigma^{i}$ in $G^{i}$ to differentiate it from the symbols in $\Sigma^{i}$. On the other hand, the left-context symbol of $S_x$ is $G^{i}[2o-1] \in \Sigma^{i}$. With this information, we apply Lemma~\ref{lem:ind2} by RLC appending $\ell$ copies of $G^{i}[2o-1]$ to the bucket $\mathsf{rank}_{1}(V^{i}, G^{i}[2o]-\sigma^{i})$ of $P^{i}$. Then, we move to the next left-maximal suffix of $F$ by setting $o=G^{i}[2o]-\sigma^{i}$ and updating the range $G^{i}[2o-1..2o]$. The decompression of $F$ stops when $G^{i}[2o]\leq\sigma^{i}$, which means we reached the last symbol of $F$. For the moment, we do not know for which phrase of $\mathcal{F}^{i}$ $G^{i}[2o]$ is its left context. Hence, we update the value of the $jth$ run in $BWT^{i+1}$ to $G^{i}[2o] \in \Sigma^{i}$ and leave this run on hold to process it later when we solve the blocks of Lemma~\ref{lem:ind3}. 

\begin{example}
Construction of the vector $P^{i}$ in Figure~\ref{fig:ind_examp}. Consider the run $BWT^{2}_{bcr}[1..2]=4^{2}$. Its phrase $F=\texttt{gta}$ (the $4th$ string of \exppset{} in LMS order) is left-maximal in $\mathcal{F}^{i}$ as $V^{1}[4]=1$. Hence, we RLC append $(\texttt{*},2)$ to the bucket $\mathsf{rank}(V^{1},4)=1$ of $P^{1}$. Then, the decompression of $4$ from $G^{1}$ produces $\texttt{g}5$, which means we RLC append $(\texttt{g},2)$ into the bucket $\mathsf{rank}_{1}(V^{1}, 5)=2$ of $P^{1}$. Finally, the last decompressed element $\texttt{t}$ indicates we reached the rightmost non-overlapping suffix of $F$. Therefore, we replace $BWT^{2}[1..2]=4^{2}$ by $BWT^{2}[1..2]=\texttt{t}^{2}$. Notice we decompressed the symbol $4 \in \Sigma^{2}$ in $BWT^{2}_{bcr}[1..2]$ only once but copied the information twice (i.e., the length of the run) to each bucket in $P^{1}$.
\end{example}

\paragraph{Merging the Induced Symbols} The last step in iteration $i$ is to compute the blocks $pBWT^{i}[s_x..e_x]=\texttt{*}^{\ell}$ of Lemma~\ref{lem:ind3} and solve the unfinished blocks $pBWT^{i}[s_x..e_x]=\texttt{\#}^{\ell}$ that meet Lemma~\ref{lem:ind4} (buckets in $P^{i}$ containing \texttt{*} symbols). We carry out this process by merging $pBWT^{i}$, $BWT^{i+1}_{bcr}$, and $P^{i}$. This procedure is, in practice, a merge of three sorted vectors as the symbols are already sorted by their right contexts in $T^{i}$. We use the special symbols $\texttt{*},\texttt{\#}$ in $pBWT^{i}$ and $P^{i}$ to change the active vector in the merge. The change is equivalent to switching the lemma we use to build $BWT^{i}_{bcr}$. For instance, if we see $\texttt{\#}$ in $pBWT^{i}$, we go to $P^{i}$ as this vector contains the symbols sorted with Lemma~\ref{lem:ind2}. Further, if we see \texttt{*} in $P^{i}$, we reached an unfinished block of Lemma~\ref{lem:ind4}, so we need to visit $BWT^{i+1}_{bcr}$. Finally, if we see $\texttt{*}$ in $pBWT^{i}$, we go to $BWT^{i+1}_{bcr}$ as this vector has the symbols sorted with Lemma~\ref{lem:ind3}. The merge algorithm is as follows: we scan $pBWT^{i}$ and RLC append its entries to $BWT^{i}_{bcr}$ as long as the symbols we see in $pBWT^{i}$ are not $\texttt{*}$ or $\texttt{\#}$ (Line~\ref{code:pbwtapp}). If we reach a run $\texttt{*}^{\ell}$, we RLC append the next $\ell$ symbols of $BWT^{i+1}_{bcr}$ into $BWT^{i}_{bcr}$. On the other hand, if we reach $\texttt{\#}^{\ell}$, we RLC append the next $\ell$ symbols from $P^{i}$ instead.~Additionally, as we consume the $\ell$ elements from $P^{i}$, we might reach a run $(\texttt{*}, \ell')$. When this happens, we change the active list again 
and RLC append the next $\ell'$ symbols of $BWT^{i+1}_{bcr}$ into $BWT^{i}_{bcr}$. If $\ell'>\ell$, we recover $\ell$ symbols of $BWT^{i+1}$ instead and decrease the active run of $BWT^{i+1}_{bcr}$ by $\ell$. Once we consume the $\ell'$ symbols from $BWT^{i+1}_{bcr}$, we return to $P^{i}$ to consume what it remains from the $\ell$ symbols. Equivalently, once we consume the $\ell$ symbols from $P^{i}$ (and possibly $BWT^{i+1}_{bcr}$), we return to $pBWT^{i}$. Lines~\ref{code:consumep1}--\ref{code:consumep2} show how we process the $\ell$ symbols of $P^{i}$. 

\begin{figure}[!t]
\centering
\resizebox{0.8\textwidth}{!}{%
\begin{tikzpicture}[>=stealth,thick,baseline]
\matrix (m5) [matrix of nodes, ampersand replacement=\|,
              column 1/.style={nodes={anchor=base east}}] at (0,0) { 
           $P^{1}$          \| $=$           \| $(\texttt{*},2)$ \| $(\texttt{a}, 1)$ \| $(\texttt{g}, 3)$ \| $(\texttt{a}, 1)$ \\
           $BWT^{2}_{bcr}$\| $=$           \| $(\texttt{t},4)$ \| $(\texttt{\$}, 2)$ \\
           $pBWT^{1}$       \| $=$           \| $(\texttt{c},2)$ \| $(\texttt{*},4)$ \| $(\texttt{a},1)$ \| $(\texttt{c}, 2)$ \| $(\texttt{a},2)$ \| $(\texttt{\#},7)$ \\
                            \| |[white]| aaa \|  \|  \|  \|  \|  \|  \\
           $BWT_{bcr}$      \| $=$           \| $(\texttt{c}, 2)$ \| $(\texttt{t},4)$ \| $(\texttt{a},1)$ \| $(\texttt{c},2)$ \| $(\texttt{a},2)$ \| $(\texttt{\$},2)$ \| $(\texttt{a},1)$ \| $(\texttt{g},3)$ \| $(\texttt{a},1)$ \\
};
\end{tikzpicture}
}
\caption{Merge of $pBWT^{1}$, $BWT^{i+1}$ and $pBWT^{i}$ to produce $BWT_{bcr}$ for the string $T^{1}=\texttt{gtacc\$gtaatagtacc\$}$ of Figure~\ref{fig:lms_par}. We constructed $pBWT^{1}$ in Figure~\ref{fig:prBWT_comp}, and vectors $BWT^{i+1}_{bcr}$ and $P^{1}$ in Figure~\ref{fig:ind_examp}. }
\label{fig:bcr_bwt_merge}
\end{figure}

\begin{example}
Producing $BWT_{bcr}$ in Figure~\ref{fig:bcr_bwt_merge}. The merge starts by appending $pBWT^{1}[1..2] = (\texttt{c},2)$ into $BWT_{bcr}$. The next run $pBWT^{i}[3..6]=(\texttt{*},4)$ is flagged with the special symbol \texttt{*}. Therefore, we change the active vector in the merge to $BWT^{2}_{bcr}$, and RLC append $BWT^{2}_{bcr}[1..4]=(\texttt{t},4)$ into $BWT_{bcr}$. Now the active merge position of $BWT^{2}_{bcr}$ becomes $BWT^{2}_{bcr}[5]$. We switch back to $pBWT^{1}$ to RLC append $pBWT^{1}[7..11] = (a,1) (c,2) (a,2)$ into $BWT_{bcr}$ as they have symbols in $\Sigma$. However, the next run $pBWT^{1}[12..18] = (\texttt{\#}, 7)$ is flagged with the special symbol $\texttt{\#}$, indicating we need to switch the active vector in the merge again and extract the next seven symbols of $BWT_{bcr}$ from $P^{1}[1..7]$. Still, the first run $P^{1}[1..2]=(\texttt{*}, 2)$ is, in turn, flagged with $\texttt{*}$, which means we have to go to $BWT^{2}_{bcr}[5..7]=(\texttt{\$},2)$ ($5$ being the active merge position of $BWT^{2}_{bcr}$) and RLC append $(\texttt{\$}, 2)$ into $BWT^{1}$. Finally, we return to $P^{1}[3..7]$ to obtain the remaining $7-2=5$ symbols. Notice that the resulting $BWT_{bcr}$ matches the BCR BWT of $T^{1}$ we presented in Figure~\ref{fig:prBWT_examp}.
\end{example}

\paragraph{Precomputing the Number of Buckets}\label{par:get_p_size} The last aspect we address for the construction of $BWT^{i}_{bcr}$ is how to compute the area of every bucket within $P^{i}$. We initialize a vector $P'[1,\mathsf{rank}(V^{i}, \sigma^{i+1})]$ of pairs. Every pair $P'[b] = (x, y)$ is a temporal variable to count the number of runs in the bucket $b$ of $P^{i}$. We obtain the values for $P'$ with one decompression of $BWT^{i+1}_{bcr}$ from left to right. The procedure is almost equal to what we show in Lines~\ref{code:popp1}--\ref{code:popp2} of Algorithm~\ref{algo:indBWT}, although we do not modify $BWT^{i+1}_{bcr}$. Recall that this procedure yields a sequence $(c_1, b_1), \ldots, (c_x, b_x)$, where $b_j$ is a bucket we visit in $P^{i}$ and $c_j$ is the left-context symbol for the phrase associated with the bucket $b_j$. We compare the symbol $c_j$ with the left element in $P'[b_j]$: if they are equal, we do nothing. If, on the other hand, they differ, we increase the right element in $P'[b_j]$ by one and set the left element to $c_{j}$. Once we scan $BWT^{i+1}$, the right element of every pair $P'[b]$ will contain the maximum number of runs we could see in $P^{i}[b]$. These values can decrease once we recover the symbols for the unsolved areas (flagged with \texttt{*}) of $P^{i}$.

\begin{algorithm}[!htp]
\caption{Induction of $BWT^{i}_{bcr}$}
\label{algo:indBWT}
\begin{algorithmic}[1]
\small 

\Require $pBWT^{i}$, $G^{i}$, $BWT^{i+1}_{bcr}$, and $V^{i}$

\State $P^{i} \gets$ estimate size of $P^{i}$'s buckets

\For{$j=1$ to number of runs in $BWT^{i+1}_{bcr}$}\label{code:popp1} \Comment{populate $P^{i}$'s buckets}
	\State $(o, \ell) \gets $ $jth$ run in $BWT^{i+1}_{bcr}$ \Comment{symbol $o \in \Sigma^{i+1}$ assigned to $F \in \mathcal{F}^{i}$}
	\If{$V^{i}[o]=1$}\label{code:dummy1}\Comment{$F$ occurs as a left-maximal suffix in $\mathcal{F}^{i}$}
		\State RLC append $\ell$ copies of $\texttt{*}$ to the bucket $\mathsf{rank}_{1}(V^{i}, o)$ of $P^{i}$ \label{code:flagp} 
	\EndIf\label{code:dummy2}

	\While{ $G{i}[2o] > \sigma^{i}$}\label{code:lms1} \Comment{visit left-maximal suffixes of $F$}
		\State RLC append $\ell$ copies of $G^{i}[2o-1]$ to $P^{i}[\mathsf{rank}_{1}(V^{i}, G^{i}[2o]-\sigma^{i})]$
		\State $o \gets G^{i}[2o]-\sigma^{i}$
	\EndWhile\label{code:lms2}
	\State set $G^{i}[2o]$ as the symbol for the $jth$ run of $BWT^{i+1}_{bcr}$ 
\EndFor\label{code:popp2}

\For{$j=1$ to number of runs in $pBWT^{i}$}\label{code:assmbwt1} \Comment{assemble $BWT^{i}_{bcr}$}
	\State $(o, \ell) \gets $ $jth$ run in $pBWT^{i}$
	\If{$o = \texttt{*}$} \label{code:activebwt1}
		\State RLC append next $\ell$ symbols of $BWT^{i+1}_{bcr}$ into $BWT^{i}_{bcr}$\label{code:consumebwt1}
	\ElsIf{$o = \texttt{\#}$} \label{code:activebwt2}
		\While{$\ell>0$}\label{code:consumep1}
			\State $(o', \ell') \gets $ active run in $P^{i}$  
			\If{$\ell'>\ell$}
				\State decrease length of active run in $P^{i}$ by $\ell'-\ell$
				\State $\ell' \gets \ell$ 
			\Else
				\State move to the next run in $P^{i}$
			\EndIf

			\If{$o'=\texttt{*}$}
				\State RLC append the next $\ell'$ symbols of $BWT^{i+1}_{bcr}$ into $BWT^{i}_{bcr}$\label{code:consumebwt2}
			\Else
				\State RLC append $\ell'$ copies of $o'$ into $BWT^{i}_{bcr}$ 
			\EndIf

			\State $\ell \gets \ell - \ell'$
		\EndWhile\label{code:consumep2}
	\Else
		\State RLC append $\ell$ copies of $o$ to $BWT^{i}_{bcr}$\label{code:pbwtapp} 
	\EndIf
\EndFor\label{code:assmbwt2} 
\State \textbf{return} $BWT^{i}_{bcr}$
\end{algorithmic}
\end{algorithm}

\subsection{The Complexity of Our Method}


We now present the theoretical bounds of \textsf{grlBWT}.

\begin{theorem}
Let $\mathcal{T}=\{T_{1},T_{2},\ldots, T_{k}\}$ be a collection with $k=|\mathcal{T}|$ strings over the alphabet $\Sigma$ and let $T=T_1\texttt{\$}{\cdots}T_{k}\texttt{\$}$ be a string of $n=|T|$ symbols over the alphabet $\Sigma \cup \texttt{\$}$ storing the concatenation of $\mathcal{T}$. Additionally, let $m$ be the length of the longest string in $\mathcal{T}$. The algorithm $\mathsf{grlBWT}$ constructs the BCR BWT of $\mathcal{T}$ in $O(n + k\log m)$ expected time and requires $O((n+k\log m)\log n)$ bits of working space.  
\end{theorem}

\begin{proof}
Our algorithm \textsf{grlBWT} is an adaptation of the algorithm \textsf{SA-IS} of Nong et al.~\cite{n2009li}. The authors showed that this method takes linear time and uses $O(n \log n)$ bits of space. The relevant observation is that the length $n^{i+1}$ of the string $T^{i+1}$ is at most $\frac{n^{i}}{2}$, $n^{i}$ being the length of the previous $T^{i}$. Thus, the cumulative lengths of strings $T^{1}, T^{2},\ldots, T^{h}$, with $h=O(\log n)$, that \textsf{SA-IS} produces is no more than $2n$. On the other hand, the amount of work in every level $i$ is proportional to $n^{i}$, so the overall cost of the algorithm is linear on the input text. We adapt this argument to our method to probe our theoretical bounds.  

Running the LMS parsing reduces each substring $T^{i}[j..j']$ of length $j'-j+1>1$ representing a string $exp(T^{i}[j..j'])=T_{x}\texttt{\$}$ of $\mathcal{T}$ to another substring $T^{i+1}[p..p']$ that is at most half the length of $T^{i}[j..j']$. However, it might happen that the resulting substring $T^{i+1}[p..p']$  has length  $p'-p+1=1$. We say that $T^{i+1}[p]$ is uncompressible because it already covers a full string of $\mathcal{T}$, but we can not reduce its size with a new round of LMS parsing. If there are substrings of $T^{i+1}$ encoding elements of $\mathcal{T}$ that are still compressible, then \textsf{grlBWT} will incur in another recursion $i+2$ and  carry $T^{i+1}[p]$ to $T^{i+2}$ as another symbol. This feature implies that the length $n^{i+2}$ of $T^{i+2}$ is not guaranteed to be at most $\frac{n^{i+1}}{2}$ like in \textsf{SA-IS}.  Instead, $n^{i+2}$ is upper bounded by $O(\frac{n^{i+1}}{2} + k)$ as there are less than $k$ uncompressible symbols $T^{i+1}[p]$ we can carry from $T^{i+1}$ to $T^{i+2}$. Our method finishes the recursions when all the strings of $\mathcal{T}$ encoded by $T^{i}$ are uncompressible, meaning that we can not produce more than $\log m$ recursions. Thus, the cumulative lengths of our strings $T^{1}, T^{2}, \ldots, T^{h}$, with $h=O(\log m)$, is no more than $\sum_{i=0}^{\log m} \frac{n}{2^{i}} + k \leq 2n + k\log m$.   

We now analyse the amount of work we perform in every recursion level $i$. Creating the dictionary $D^{i}$ from $T^{i}$ using a standard hash table takes $O(n^{i})$ expected time and requires $O(n^{i}\log n^{i})$ bits of space. The construction of  \gsa{} runs in $O(n^{i})$ time and space as we use ISS to build it, and the number of symbols in the keys of $D^{i}$ is never greater than $n^{i}$. The extra steps of the parsing iteration only require a constant number of linear scans over \gsa.  During the induction phase, we only perform linear scans over $BWT^{i+1}_{bcr}$ and $pBWT^{i}$. We still have the cost of accessing the left-maximal suffixes of $\mathcal{F}^{i}$ when we scan $BWT^{i+1}_{bcr}$. However, our simple grammar-like representation $G^{i}$ (Section~\ref{ssec:gc}) supports random access in $O(1)$ time to the symbols, and the number of left-maximal suffixes we visit during the scan of $BWT^{i+1}_{bcr}$ is no more than $n^{i}$. Thus, the cost of every recursion level $i$ is $O(n^{i})$ time and $O(n^{i}\log n^{i})$ bits of space. Summing up the $h=O(\log p)$ recursions levels, the total cost of \textsf{grlBWT} is $O(n+k\log m)$ time and $O((n+k\log m)\log n)$ bits of space.  
\end{proof}

While the only general bound in terms of $n$ is $O(n+k\log m) \subseteq O(n\log n)$, this bound is reached only in degenerate cases (e.g., one string of length $n/2$ and $n/2$ strings of length $1$ or $2$). In typical cases, where $m = O(n/k)$, it holds $O(n+k\log m) = O(n)$. This holds even if the largest string in $\mathcal{T}$ is significantly larger than the average, for example $m = O((n/k)^c)$ for a constant $c$. In practical applications, the worst case is probably that of $k$ short reads whose length $\ell$ is a few hundreds, and then still $k\log m = (n/\ell)\log\ell$ is an order of magnitude less than $n$.

\section{Refining Our Algorithm}\label{sec:ref_algo}

In this section, we explain how to produce smaller temporary data structures during the parsing of $T^{i}$. As before, we are interested in a lightweight compression scheme that improves the overall performance of $\mathsf{grlBWT}$, especially when $\mathcal{T}$ is not that repetitive. The challenge is to find a balance between the extra compression we gain for $\mathcal{F}^{i}$ and $T^{i+1}$ and the computational resources we use to achieve it. A bad choice could yield good space reductions at the cost of making \textsf{grlBWT} slower. On top of that, we also need to be careful not to compromise the induction phase of \textsf{grlBWT} with the changes we introduce.

Our strategy consists of merging consecutive substrings in the LMS parsing of $T^{i}$ into one \emph{super} phrase whenever the merge does not affect the construction of $BWT^{i}_{bcr}$. We formalize this idea as follows: 

\begin{definition}\label{def:super_lms}
A super phrase is a substring $T^{i}[j_1.. j_{z}]$ with the following properties: (i) it spans a group of consecutive substrings $F_1 = T^{i}[j_{1}..j_{2}], F_2=T^{i}[j_2..j_3],\ldots, F_z= T^{i}[j_{z-1}..j_{z}]$ in the LMS parsing whose associated phrases $F_1, F_2, \ldots, F_z \in \mathcal{F}^{i}$ always appear together in $T^{i}$ (also as parsing phrases), and in the same order. Thus, for each $oth$ occurrence $F_{p} = T^{i}[j_{p,o}..j_{p+1,o}]$ of $F_{p}$, it always holds that the phrase following that occurrence is $F_{p+1}=T^{i}[j_{p+1, o}..j_{p+2,o}]$, the $oth$ occurrence of $F_{p+1}$. Additionally, (ii)  any of the phrases $F_{1}, F_2, \ldots, F_{z-1}$ contain a proper suffix that is left-maximal in $\mathcal{F}^{i}$.
\end{definition}

If $F=T^{i}[j_1..j_z]$ is a super phrase, we store $F$ in $\mathcal{F}^{i}$ instead of the internal phrases $F_1, F_2,\ldots, F_z$ individually. We refer to $F$ as a \emph{super} phrase. We now explain why super phrases do not affect the induction of $BWT^{i}_{bcr}$.

\begin{lemma}
Let $F=T^{i}[j_1..j_z] \in \mathcal{F}^{i}$ be a super phrase of $T^{i}$. The structure of $F$ does not affect the construction of $BWT^{i}_{bcr}$ during the induction phase.
\end{lemma}

\begin{proof}
We first prove property (i) of Definition~\ref{def:super_lms} by contradiction. Assume that the LMS parsing phrase $F_p=T^{i}[j_p.. j_{p+1}]$ within the super phrase $F=T^{i}[j_1..j_z]$ matches the sequence of another LMS parsing phrase $F_p=T^{i}[l..l']$, with $(l,l') \notin [j_1..j_z]$. Additionally, suppose the occurrence $F_p=T^{i}[l..l']$ does not meet the conditions to be encapsulated within a super phrase, so both $F_p$ and $F$ become members of $\mathcal{F}^{i}$. In \gsa~(i.e., the generalized suffix array of $\mathcal{F}^{i}$), there will be a value that points to the suffix of $F$ prefixed by $F_p=T^{i}[j_p.. j_{p+1}]$ and another value pointing to the full phrase $F_p=T^{i}[l..l']$. Both suffix array values encode suffixes of $\mathcal{F}^{i}$ labelled $F_{p}$, and hence, our algorithm will induce the lexicographical order of the suffixes of $T^{i}$ prefixed by $F_{p}$ from $BWT^{i+1}_{bcr}$ (see Lemmas~\ref{lem:ind1} and \ref{lem:ind2}). Now, for the induction to happen, we need two symbols in $T^{i+1}$, one encoding the occurrence $F_{p}=T^{i}[j_p.. j_{p+1}]$ and another encoding the occurrence $F_{p}=T^{i}[l..l']$. However, $T^{i}[j_{p}.. j_{p+1}]$ will not have a symbol in $T^{i+1}$ as it is fully encapsulated within $F$, meaning that we will not be able to perform the induction. This situation does not happen if $F_{p}$ always occurs in $T^{i}$ as a substring of $F$ as we have enough context within the super phrase the solve the BWT range associated with $F_{p}$.

The proof for property (ii) of Definition~\ref{def:super_lms} is similar: assume this time that the set of LMS parsing phrases $F_{1}, F_{2},\ldots, F_{z}$ encapsulated by the super phrase $F$ meet property (i). However, one of them (say $F_{u}$, with $u<z$) has a proper suffix $S_x$ that is left-maximal. That is, $S_x$ also occurs as a proper suffix in, at least, one other phrase $Y \in \mathcal{F}^{i}$, and the symbols preceding the occurrences of $S_x$ in $F_{u}$ and $Y$ are different. During parsing iteration $i$, we do not have enough information in $\mathcal{F}^{i}$ to order the suffixes of $T^{i}$ prefixed by $S_x$. We have the right context for the occurrence of $S_x$ under $F_{u}$ as $F_{u}$ is a substring in $F$ (not a suffix), but we do not have the right context for the occurrence under $Y$. Our method solves this problem in the next parsing iteration $i+1$ when comparing the symbols in $T^{i+1}$ assigned to $F_{u}$ and $Y$. However, with the introduction of super phrases, we have symbols in $T^{i+1}$ for $F$ and $Y$, but not for $F_{u}$, and the right context of $F$ is the right context of $F_{z}$, not of $F_{u}$. This situation leads to an error during the induction of $BWT^{i}_{bcr}$ when solving the occurrences of $S_x$. 
\end{proof}

The introduction of a super phrase $F=F_1,F_2,\ldots,F_z$ removes $z-1$ regular phrases from $\mathcal{F}^{i}$ and decreases the number of symbols in $\mathcal{F}^{i}$ by $z-1$. In our regular parsing scheme, the last element of every $F_{u}$, with $u<z$, was a copy of the first element of $F_{u+1}$, because regular LMS parsing phrases overlap, but now that $F_u$ and $F_{u+1}$ belong to the same super phrase, that copy is not in the parsing set. Also, notice that the length of $T^{i+1}$ also decreased by $z-1$ symbols.

We restricted our definition of super phrases to avoid redundancy in $\mathcal{F}^{i}$. One could, for instance, allow an LMS parsing phrase $F_{u}$ to occur in different super phrases as a substring as long as all the occurrences of $F_{u}$ in $T^{i}$ were covered by a super phrase. This relaxation could increase the number of super phrases without affecting the induction of $BWT^{i}_{bcr}$. However, it could also create multiple copies of $F_{u}$ within $\mathcal{F}^{i}$, which is undesirable. 

Introducing super phrases requires a small change in the algorithm that constructs \gsa. In the original version we described in Section~\ref{sec:c_preBWT}, we first insert the last symbols of every phrase $F \in \mathcal{F}^{i}$ at the end of its corresponding bucket in \gsa. Subsequently, we perform one scan of \gsa~to insert L-type symbols and another scan to insert the S-type symbols. Now, with the introduction of super phrases, we proceed as follows: before the suffix induction, we visit every phrase $F \in \mathcal{F}^{i}$ and put its last symbol at the end of its corresponding bucket in \gsa. Additionally, if $F$ is a super phrase, we scan it to find all its internal LMS-type symbols and insert their positions in $\mathcal{F}^{i}$ at the end of their corresponding buckets in \gsa. Finally, we can proceed with the induction of the L-type and S-type symbols as usual. 

\subsection{Practical Considerations of Super Phrases}\label{sec:prac_super_phrases}

When parsing $T^{i}$, we do not know beforehand if a specific group of consecutive LMS parsing substrings will produce a super phrase. The naive approach to check that information would be to construct a suffix tree of $T^{i}$, but this solution is impractical. An alternative idea would be building a regular version of $\mathcal{F}^{i}$ first, computing some satellite information about the phrases directly from the dictionary, and then performing an extra scan of $T^{i}$ to gather the super phrases, hoping that there are enough of them so that the overhead of the extra scan produces a much smaller version of $\mathcal{F}^{i}$ and $T^{i}$. Still, there is a third alternative that might not capture all the super phrases, but it is for free. During every parsing round $i-1$, we mark which symbols in the alphabet of $T^{i}$ are unique. Then, when we scan $T^{i}$ in the $ith$ parsing round, we proceed as follows: every time we reach an LMS-type symbol $T^{i}[j]$ (i.e., a break in the parsing), we check if $T^{i}[j-1]$ is unique. If so, we skip the break and continue extending the current phrase to the left. We keep applying this procedure until we reach a break where the condition does not hold.

It is easy to see that if the symbol $T^{i}[j-1]$ is unique in $T^{i}$, then its enclosing LMS parsing phrase $F=T^{i}[j'..j]$ spells a sequence that is unique in $T^{i}$. On the other hand, $T^{i}[j-1]$ is the last symbol we consider in $F$ to get its left-maximal suffixes (recall that the last symbol in $F$ is redundant), and because $T[j-1]$ is unique, $F$ does not have left-maximal suffixes. These observations allow us to append $T[j'..j]$ directly to a super phrase without further preprocessing. 

\section{BCR BWT with Different String Order}\label{sec:optbwt}

One can reorder the strings of $\mathcal{T}$ to reduce the number of runs in its BCR BWT, thus improving the compression. This technique is useful for applications where the order of the strings is irrelevant. In this regard, Bentley et al.~\cite{ben20ont} showed a linear-time procedure to obtain the smallest BCR BWT (in terms of the number of runs) one can get by permuting the order of the strings in $\mathcal{T}$. This method, referred to here as \textsf{CAO}, requires as input the BCR BWT $BWT_{bcr}$ of $\mathcal{T}$ and the partition $\mathcal{A}$ over $BWT_{bcr}$ that induces equal suffixes of $\mathcal{T}$. More specifically, a block $BWT_{bcr}[s..e]$ belongs to $\mathcal{A}$ if $BWT_{bcr}[s.. e]$ stores the left-context symbols for suffixes of $\mathcal{T}$ that spell the same sequence. The output of \textsf{CAO} is the optimal BCR BWT $BWT_{bcr}$ for $\mathcal{T}$, referred to here as $BWT_{opt}$ (Section~\ref{ssec:bwt} describes \textsf{CAO} in more detail). 

In this section, we address the problem of efficiently building  $\mathcal{A}$ so that we can apply \textsf{CAO} to transform $BWT_{bcr}$ into $BWT_{opt}$. Our strategy involves inducing the tuples of $\mathcal{A}$ during the execution of \textsf{grlBWT}. Nevertheless, we compute a sampled version of $\mathcal{A}$ (denoted $\mathcal{A}')$ that only considers BWT blocks (tuples) with more than one distinct symbol. The important observation is that there is no point in keeping in main memory tuples of $\mathcal{A}$ with one symbol during the execution of \textsf{CAO}, as they are already sorted. For instance, the vector $\mathcal{A}$ of Figure~\ref{fig:bct_bwt_examp} becomes $\mathcal{A}' = \texttt{.}\ \texttt{.}\ \texttt{.}\ [(\texttt{a},1), (\texttt{c},1)]\ \texttt{.}\ \texttt{.}\ [(\texttt{c},1),(\texttt{a},2)]\ \texttt{.}$, where the dots indicate the tuples we removed. 

The execution of \textsf{CAO} requires an encoding that regards each block $BWT_{bcr}[s..e] \in \mathcal{A}$ as a tuple of up to $\sigma$ pairs where each element $(c, \ell)$ stores a symbol $c$ occurring in $BWT_{bcr}[s..e]$ and its frequency $\ell$ in $BWT_{bcr}[s..e]$. In practice, if we know the boundaries in $BWT_{bcr}$ for the blocks in $\mathcal{A}$, we can compute the \textsf{CAO} encoding on the fly using the run-length compressed version of $BWT_{bcr}$. However, \textsf{grlBWT} produces the ranges for the blocks in $\mathcal{A}'$, so we need to do a small change to \textsf{CAO}: if for a block $BWT_{bcr}[s..e] \in \mathcal{A}'$ we need to visit its preceding block (respectively, the following block) to check for adjacent matches, and this block is not in $\mathcal{A}'$, we use the symbol $BWT_{bcr}[s-1]$ to do the check instead (respectively, $BWT_{bcr}[e+1]$). 

Before explaining our idea, we will briefly redefine some key concepts for clarity. The parsing set $\mathcal{F}^{i}$ stores the distinct phrases in the LMS parsing of $T^{i}$. The suffixes in $\mathcal{F}^{i}$ form a string set $\mathcal{S}^{i}$ that induces a partition over \sa, the generalized suffix array of $T^{i}$. In this partition, every block $\same[s_x..e_x]$ stores the suffixes of $T^{i}$ prefixed by $S_x \in \mathcal{S}^{i}$, the $xth$ string of $\mathcal{S}^{i}$ in LMS order. We say that $BWT^{i}_{bcr}[s_x..e_x]$ is associated with $S_x$ because it stores its left-context symbols in $T^{i}$. Additionally, $\gsame$ is the generalized suffix array of $\mathcal{F}^{i}$. We also consider a partition over $\gsame$, where every block $\gsame[j..j']$ encodes the different suffixes of $\mathcal{F}^{i}$ spelling the same sequence $S_x \in \mathcal{S}^{i}$. We use $\mathcal{S}^{i}$ to define an expanded parsing set \exppset{} that contains the elements of $\mathcal{F}^{i}$ plus the strings in $\mathcal{S}^{i} \setminus \mathcal{F}^{i}$ that only occur as left-maximal suffixes in $\mathcal{F}^{i}$. The elements in the expanded parsing set are kept in LMS order. 

\subsection{Partitioning the BCR BWT}

We will produce the sequence $A^{i} = (s_1, e_1), \ldots, (s_z, e_z)$ with the blocks in $BWT^{i}_{bcr}$ induced by the substrings of $T^{i}$ that expand to equal suffixes of $\mathcal{T}$. Specifically, each $jth$ pair $(s_x, e_x) \in A^{i}$ is the $jth$ block $BWT^{i}_{bcr}[s_x..e_x]$ whose string $S_x \in \mathcal{S}^{i}$ expands to a string $exp(S_x) \in \Sigma^{*}$ suffixed by \texttt{\$} (i.e., $exp(S_x)$ is a suffix in $\mathcal{T}$). As before, we will construct a preliminary version of $A^{i}$, called $pA^{i}$, during the parsing iteration $i$, and then we will combine $A^{i+1}$ and $pA^{i}$ to produce $A^{i}$ during iteration $i$ of induction. The final list $A^{1}$ stores the ranges of $\mathcal{A}'$ that we will use to run \textsf{CAO}. 

\paragraph{Encoding} We will update Algorithms~\ref{algo:prebwt} and \ref{algo:indBWT} to implement the ideas in the paragraph above. Still, modifying these methods requires a bit of extra work as they operate over run-length-compressed data, and the ranges in $A^{i}$ are indexes in the plain version of $BWT^{i}_{bcr}$. This difference in the encoding means that a range in $A^{i}$ might not exist explicitly in the run-length-compressed version of $BWT^{i}_{bcr}$. From now on, we consider the arrays involved in the construction of $A^{i}$ to be in plain format unless we state otherwise. For example, when we say that $u$ is the \emph{uncompressed} position of $BWT^{i+1}_{bcr}$, we mean that $u$ is an index within the plain version of $BWT^{i+1}_{bcr}$. The same idea applies to the other vectors. This plain encoding is only logical and intended to simplify our explanations. We omit the details on how to implement the construction of $A^{i}$ using run-length-compressed data structures. 

\subsubsection{Parsing Phase}

Our algorithm to construct $pA^{i}$ is a modification of Algorithm~\ref{algo:prebwt}. During the execution of Line~\ref{code:newrange}, when we start to consume a new range in the partition of \gsa{}, we check if the previous range $\gsame[s_x..e_x]$ is associated with a phrase $S_x \in \mathcal{S}^{i}$ such that $exp(S_x)$ is a suffix in $\mathcal{T}$. 
If this condition holds, and $S_x$ is a left-maximal suffix in $\mathcal{F}^{i}$ (Line~\ref{code:lmif}) that does not occur as a full phrase, we append a new pair into $pA^{i}$: let $s_x$ be the uncompressed position in $pBWT^{i}$ where we inserted the leftmost symbol $\texttt{\#}$ in $pBWT^{i}$ associated with $S_x$, and let $e_x$ be the uncompressed position in $pBWT^{i}$ where we inserted the rightmost copy of $\texttt{\#}$. We append the pair $(s_x, e_x)$ into $pA^{i}$.

The left-maximal condition of $S_x$ implies that its range $(s_x, e_x) \in pA^{i}$ will contain more than one distinct symbol in $\Sigma^{i}$. This is the kind of entries \textsf{CAO} needs to sort to produce $BWT_{opt}$. On the other hand, by not storing $(s_x, e_x)$ when $S_x$ is a full phrase in $\mathcal{F}^{i}$, we avoid redundancy: assume $S_x$ is assigned the symbol $o \in \Sigma^{i+1}$ in the parsing iteration $i$. The pair $(s_x, e_x) \in pA^{i}$ of $S_x$ will be equivalent to the pair $(s_{x'}, e_{x'}) \in pA^{i+1}$ obtained from the string $S'_x=o \in \mathcal{S}^{i+1}$.


Notice that $pA^{i}$ stores each block $BWT^{i}_{bcr}[s_x..e_x]$ that later will become the range in $BWT^{1}_{bcr}$ associated with $exp(S_x)$. $\mathsf{CAO}$ will use this block to produce $BWT_{opt}$. 

\subsubsection{Induction Phase}

We now explain how to modify Algorithm~\ref{algo:indBWT} (the induction iteration) to compute $A^{i}$. Our method has three main steps. First, it produces a new sequence $A^{i}_{P}$ with the pairs of $A^{i}$ that we will induce from $P^{i}$ (i.e., the vector of Algorithm~\ref{algo:indBWT} storing the sorted left-context symbols of the left-maximal suffixes in $\mathcal{F}^{i}$). Then, it updates the pairs in $A^{i+1}$ and $A^{i}_{P}$ so they reference positions in $BWT^{i}_{bcr}$. Finally, it merges $A^{i+1}$, $A^{i}_{P}$, and $pA^{i}$ into one single sequence $A^{i}$.

We start with the construction of $A^{i}_{P}$ during the execution of Lines~\ref{code:popp1}-\ref{code:popp2}, when we populate $P^{i}$ in one scan of $BWT^{i+1}_{bcr}$. We logically divide $A^{i}_{P}$ into buckets so that each bucket $c$ stores the pairs induced from the bucket $c$ of $P^{i}$. As we previously did with $P^{i}$, we use the notation $A^{i}_{P}[c]$ to refer to the area within $A^{i}_{P}$ that contains the ranges of bucket $c$.

Before the scan of $BWT^{i+1}_{bcr}$, we set $A^{i+1}[1]=(s, e)$ as the active pair we use to fill $A^{i}_{P}$. We also initialize a set $H$ that will store the different buckets of $P$ we visit during the scan of $BWT^{i+1}_{bcr}[s..e]$. Subsequently, when we visit the symbols of $BWT^{i+1}_{bcr}$, we proceed as follows. Assume we are in the uncompressed position $u$ of $BWT^{i+1}_{bcr}$, where $BWT^{i+1}_{bcr}[u] \in \Sigma^{i+1}$ is the symbol assigned to $F \in \mathcal{F}^{i}$. Also, assume that $u=s$ matches the left element of $(s, e)$. If $F$ occurs as a proper suffix in $\mathcal{F}^{i}$, we compute its bucket $b$ in $P^{i}$, record $b$ in $H$, and append a new pair $(u',u')$ in $A^{i}_{P}[b]$, where $u'$ is the uncompressed position within $P^{i}[b]$ where we store the left-context of $F$ (Lines~\ref{code:dummy1}-\ref{code:dummy2}). Then, when we visit every left-maximal suffix $S_x \in \mathcal{S}^{i}$ of $F$ (Lines~\ref{code:lms1}-\ref{code:lms2}), we apply the same procedure: we compute the bucket $b$ in $P^{i}$ associated with $S_x$, record $b$ in $H$, and append a new pair $(u', u')$ in $A^{i}_{P}[b]$, where $u'$ is the uncompressed position within $P^{i}[b]$ where we store the left-context of $S_x$. Later in the scan of $BWT^{i+1}_{bcr}$, when $s < u \leq  e$, we proceed slightly differently: for every new bucket $b$ of $P^{i}$ we visit during the decompression of $F$, we check first if $b$ exists in $H$. If that is the case, we increase the right value in the rightmost pair of $A^{i}_{P}[b]$ by one. On the other hand, if $b$ is not in $H$, we record $b$ in $H$ and append a new pair $(u',u')$ in the bucket $A^{i}_{P}[b]$. Additionally, when $u=e$, we flush the content in $H$ and set the next pair in $A^{i+1}$ as the active element we will use from now on to fill $A^{i}_{P}$. 

Once we finish the traversal of $BWT^{i+1}_{bcr}$, we transform the pairs in $A^{i}_P$ to absolute values. Concretely, a pair $(s, e)$ in the bucket $A^{i}_{P}[b]$ becomes $(s+s', e+s')$, where $s'$ is the cumulative number of symbols in the buckets $b'<b$ of $P^{i}$.

The next step in the induction iteration is to update the pairs in $A^{i+1}$ and $A^{i}_{P}$ so they reference ranges within $BWT^{i}_{bcr}$. We carry out this process during the merge of $BWT^{i+1}_{bcr}$, $pBWT^{i}$, and $P^{i}$ (Lines~\ref{code:assmbwt1}-\ref{code:assmbwt2}). Recall that we change the active list of the merge depending on the special symbols we access in the vectors. Similarly, here we change the active list we are updating from $A^{i+1}$ to $A^{i}_{P}$ (or vice-versa) depending on whether the active list in the merge is $BWT^{i+1}_{bcr}$ or $P^{i}$, respectively.

Assume that, at a given point of the loop in Lines \ref{code:assmbwt1}-\ref{code:assmbwt2}, $BWT^{i+1}_{bcr}$ becomes the active list (Line~\ref{code:activebwt1}), and that the next pair to update in $A^{i+1}$ is $A^{i+1}[u_a] = (s, e)$. We first check if the current uncompressed position $BWT^{i+1}_{bcr}[u]$ (i.e., the one that we are consuming in the merge) falls within $(s, e)$. If $u=s$, we set $s = s'$, where $s'$ is the uncompressed position in $BWT^{i}_{bcr}$ where we store $BWT^{i+1}_{bcr}[u]$. Then, when $u$ equals $e$, we update $e$ accordingly, increase $u_a=u_a+1$, and set $A^{i+1}[u_a]$ as the next pair to update in $A^{i+1}$. Now assume the active symbol in the merge in the $xth$ uncompressed position of $P^{i}$ (Line~\ref{code:activebwt2}), and that the next pair we need to update in $A^{i}_{P}$ is $A^{i}_{P}[u_b]=(s, e)$. In this case, we check if $x = s$ and update the left value $s=s'$, where $s'$ is the uncompressed position in $BWT^{i}_{bcr}$ where we store the active symbol of $P^{i}$. On the other hand, if $x=e$, we update $e$, increase the index $u_b=u_b+1$, and set $A^{i}_{P}[u_b]$ as the next pair to update in $A^{i}_{P}$. 

After we finish updating the values, we merge $A^{i+1}$, $pA^{i}$, and $A^{i}_{P}$ in an orderly way to produce $A^{i}$. As mentioned, this step only requires a simultaneous scan of the vectors.
\section{Experiments}\label{sec:exp}

We implemented $\mathsf{grlBWT}$ as a $\texttt{C++}$ tool, also called \texttt{grlBWT}. This software uses the $\texttt{SDSL-lite}$ library~\cite{gbmp2014sea} to operate with bit vectors and rank data structures. This implementation includes the improvements we described in Section~\ref{sec:ref_algo} to reduce the size of the dictionary, but it does not contain the procedure to compute the optimal BWT (Section~\ref{sec:optbwt}). Our source code is available at \url{https://github.com/ddiazdom/grlBWT}.

\subsection{Implementation Details}\label{sec:imp_det}

Our algorithm constructs a dictionary of phrases in every text $T^{i}$ and then replaces the occurrences of those phrases with their corresponding symbols in $T^{i+1}$. These two steps can be challenging to implement in massive inputs as they are linear-time. Using a hash table is a simple alternative to computing the dictionary, but it can impose a considerable overhead (in terms of time and space) if the text is not so repetitive (the less repetitiveness, the bigger the dictionary). 

We implement a simple parallel hashing strategy to compute the dictionaries in a more efficient way. In every parsing round $i$ (Section~\ref{ssec:parpha}), we proceed as follows: we first set a buffer size $b$ and the number $p$ of parallel processes we will run. Both $b$ and $p$ are input parameters. Subsequently, we allocate $b/p$ bits (assume for simplicity $b$ is divisible by $p$) to store a semi-external hash table $H_{j}$, with $j \in [1,p]$, for every parallel process. It is semi-external in the sense that every time $H_{j}$ has to grow beyond $b/p$ bits (either because it exceeds the maximum load factor or because of the insertion of a new pair), it dumps all its content to disk and resets its state to empty. We implement $H_{j}$ using Robing Hood probing to operate at high load factors.

Once we initialize the hash tables $H_{i},\ldots,H_{p}$, we divide the input text $T^{i}[1..n_{i}]$ of the parsing round into $p$ chunks of $\lceil n_{i}/p \rceil$ symbols each. Every $jth$ parallel process will consume the $jth$ chuck of $T^{i}$ and store its phrases in its corresponding hash table $H_{j}$. When the parallel processes finish, we merge the dumps of the hash tables $H_{1},\ldots,H_{p}$ into one single hash table $H$ that contains all the phrases of $T^{i}$. 

The reason why this approach is efficient is simple: in the first parsing round, the dictionary is, in most cases, small compared to $T^{1}$, so it is likely that the hash tables $H_{1},\ldots,H_{p}$ contain near-identical copies of the same small string set and performed almost no data dumps. This observation means that the construction of the final hash table (the merge) is fast in practice.

The dictionary size increases considerably in the next parsing rounds, but so does the number of unique phrases (those with frequency one in $T^{i}$). Thus, the chances of  a phrase appearing in different hash tables decrease as the parsing rounds move forward.

\paragraph{Effects of Page Caching} We remark that our semi-external parsing strategy is efficient only if $T^{i}$ fits the page cache. That is, the free area of the main memory where the operative system's kernel keeps the recently-accessed file pages. Reading different areas of $T^{i}$ simultaneously from the disk is costly in standard hard drives as it requires the disk to spin back and forward to reach the sectors where the requested file pages reside. If we have never accessed $T^{i}$ before, the kernel will inevitably perform these expensive I/O operations. However, \texttt{grlBWT} needs three parallel scans of $T^{i}$, first when it produces it from $T^{i-1}$, then when it gets $D^{i}$, and finally when it builds $T^{i+1}$. Thus, if $T^{i}$ fits the page cache, we will perform the disk operations in the first scan, and the rest will mostly use the pages of $T^{i}$ in the cache. On the other hand, when $T^{i}$ does not fit the page cache, the number of page faults\footnote{When an process asks for a file page, but that page is not in the cache and the kernel has to access the disk to retrieve it.} increases considerably, but not just that, the operations become slower due to the non-linear disk access pattern of our method. This problem implies that the parallel processes will remain idle most of the time, waiting for the kernel to complete previous page requests, making thus the whole parsing phase slow. This phenomenon of constant paging and page faults is known as cache trashing. 

\subsection{Datasets}

We consider two classes of Genomics collections for our experiments: \emph{reads} and \emph{pangenomes}. The BWT plays a key role in processing this kind of data, but constricting the transform is challenging in practice as reads and pangenomes are usually massive. It is worth mentioning that \texttt{grlBWT} works with any kind of byte alphabet, not just DNA.

Reads are overlapping strings that represent random and redundant fragments of a genome. The level of redundancy depends on the length of the reads and the \emph{coverage}: the average number of times each position of the genome was sequenced. The more coverage the sequencing experiment has, the more repetitive the read collection is. The number of DNA samples also affects the repetitiveness. It is common in Genomics to concatenate reads of closely-related individuals into one dataset. These individuals are genetically almost identical, so their reads should yield nearly identical sequences. A common problem with reads, however, is that they are short and contain sequencing errors. These limitations make the repetitive patterns of the underlying genome more difficult to capture and compress. 

A genome (from a Bioinformatics point of view) is a string collection resulting from the assembly\footnote{Merging the reads by computing suffix-prefix overlaps} of a group of reads. A pangenome, on the other hand, is a collection that can contain several assembled genomes of closely-related individuals. A pangenome is massive and highly repetitive, but problems in the assembly process and sequencing errors can break the repetitive patterns, making the collection less compressible.

We now briefly describe our datasets. Table~\ref{tab:datasets} presents basic statistics about these files.

\begin{table}[t]
\centering
\begin{adjustbox}{max width=\textwidth}
\begin{tabular}{%
l
r
S[table-format=-1.0e-1, round-precision = 2, round-mode = figures, scientific-notation = true]
S[table-format=-1.0e-1, round-precision = 2, round-mode = figures, scientific-notation = true]
S[table-format=-1.0e-1, round-precision = 2, round-mode = figures, scientific-notation = true]
S[table-format=-1.0e-1, round-precision = 2, round-mode = figures, scientific-notation = true]
S[table-format=4.2, exponent-mode=fixed, round-mode=places,round-precision=2]
}
\toprule
{Dataset} & $\sigma$ & {Number of} & {Max. str.} & {Avg. str.} & {Number of}        & {$n/r$} \\
           &          & {strings}   & {length}    & {length}    & {symbols ($n$)}   &       \\
\midrule
\ill{1} & 5 &  84006956 & \unformat{3}{151} & \unformat{3}{151} & 12769057312 & 3.18 \\
\ill{2} & 5 & 160285798 & \unformat{3}{151} & \unformat{3}{151} & 24363441296 & 4.07 \\
\ill{3} & 5 & 235805550 & \unformat{3}{151} & \unformat{3}{151} & 35842443600 & 4.67 \\
\ill{4} & 5 & 305931740 & \unformat{3}{151} & \unformat{3}{151} & 46501624480 & 5.03 \\
\ill{5} & 5 & 377453488 & \unformat{3}{151} & \unformat{3}{151} & 57372930176 & 5.33 \\
\hifi & 5 & 6153050 & 50061 & 20029 & 123240580106 & 19.27\\
\midrule
\hg{05} & 16 & 334065 & 248956422 & 42715 & 14269998434 & 4.82 \\
\hg{10} & 16 & 759341 & 250522664 & 39025 & 29634170092 & 8.76 \\
\hg{15} & 16 & 835485 & 250522664 & 53918 & 45048695199 & 12.02\\
\hg{20} & 16 & 874235 & 250522664 & 68650 & 60017146889 & 15.67\\
\hg{25} & 16 & 899424 & 250522664 & 83447 & 75055723570 & 19.42\\
\hg{400} & 16 & 10012324 & 250749104 & 121094 & 1212440438146 & 224.4\\
\ecoli & 16 & 666225 & 5704396 & 28231 & 18808586125 & 140.018\\
\bottomrule
\end{tabular}
\end{adjustbox}
\caption{Datasets. The rightmost column shows the ratio between $n$ and the number of runs ($r$) in the BCR BWT obtained without changing the order of the strings.}
\label{tab:datasets}
\end{table}

\begin{itemize}
\item \emph{Illumina} (\ill{x}): five collections of Illumina reads\footnote{\url{https://www.illumina.com}} generated from different human genomes. The name of each collection has the format \ill{x}, where \texttt{x} indicates the number of individuals in the collection. For instance, the file \ill{5} encodes reads from five humans. We obtained the raw reads from the International Genome Project~\footnote{\url{https://www.internationalgenome.org/data-portal/data-collection/hgdp}}.  

\item \emph{PacBio HiFi} (\hifi): one read collection from one human genome sequenced at deep coverage (40x) using the PacBio HiFi technology\footnote{\url{https://www.pacb.com/technology/hifi-sequencing}}. HiFi reads are longer than Illumina reads; hence, they are more repetitive.

\item \emph{Human Pangenomes} (\hg{x}): 400 different human assemblies from NCBI\footnote{\url{https://www.ncbi.nlm.nih.gov/data-hub/genome/?taxon=9606}} grouped into different files to simulate six different pangenomes. Every pangenome file has the format \hg{x}, indicating that this collection contains \texttt{x} assemblies. The sources of this data are varied: some are different assemblies of the same genome, while others are assemblies of different genomes (different humans and/or cell lines). The quality of the assemblies is also heterogeneous. Some are high quality, while others are poor or intermediate reconstructions. The file \hg{400} encodes 400 assemblies, and it is the largest input in our experiments (1.2~TB). This file is the one that most closely resembles a real-life pangenome, given its massiveness and the variability in the strings it contains.

\item \emph{Escherichia Coli Pangenome} (\ecoli): 31,733 different assemblies of the Escherichia coli (E. coli) genome downloaded from NCBI\footnote{\url{https://www.ncbi.nlm.nih.gov/data-hub/genome/?taxon=562}}. As with humans, these assemblies come from different sources, and their qualities are variable. This file is highly repetitive but much smaller than our artificial pangenome \hg{400} as the E. coli genome is small. 
\end{itemize}

\subsection{Competitor Tools and Experimental Setup}

We compared the performance of $\texttt{grlBWT}$ against other tools that compute BWTs for string collections:

\begin{itemize}
    \item \texttt{ropebwt2}\footnote{\url{https://github.com/lh3/ropebwt2}}: a variation of the original BCR algorithm of Bauer et al.~\cite{bauer13lw} that uses rope data structures~\cite{boehm1995ropes}. This method is described in Heng Lee~\cite{li2014fast}.
    \item \texttt{pfp-eBWT}\footnote{\url{https://github.com/davidecenzato/PFP-eBWT}}: the eBWT algorithm of Boucher et al.~\cite{bou21com} that builds on PFP and ISS.
    \item \texttt{r-pfpbwt}\footnote{\url{https://github.com/marco-oliva/r-pfbwt}}: implementation of the method of Oliva et al.~\cite{ol23rec} that applies recursive rounds of PFP.
    \item \texttt{BCR\_LCP\_GSA}\footnote{\url{https://github.com/giovannarosone/BCR\_LCP\_GSA}} : the current implementation of the semi-external BCR algorithm~\cite{bauer13lw}.
    \item \texttt{egap}\footnote{\url{https://github.com/felipelouza/egap}}: a semi-external algorithm of Edigi et al.~\cite{egidi19ext} that builds the BCR BWT. 
    \item \texttt{gsufsort}\footnote{\url{https://github.com/felipelouza/gsufsort}}: an in-memory method proposed by Louza et al.~\cite{lou20gsuf} that computes the BCR BWT and (optionally) other data structures. 
\end{itemize}

We also considered the tool \texttt{bwt-lcp-em}~\cite{bon20com} for the experiments. Still, by default, it builds both the BWT and the LCP array, and there is no option to turn off the LCP array, so we discarded it. We compiled all the tools according to their authors' descriptions. For \texttt{grlBWT}, we used the compiler flags \texttt{-O3 -msse4.2 -funroll-loops -march=native}.

\paragraph{Experiments on Reads} We ran \texttt{grlBWT}, \texttt{ropebwt2}, \texttt{egap}, \texttt{gsufsort}, \texttt{BCR\_LCP\_GSA}, and \texttt{pfp-bwt} on Illumina data. We did not use \texttt{r-pfpbwt} as it is unsuitable for short reads. We limited the RAM usage of \texttt{egap} to three times the input size. For \texttt{BCR\_LCP\_GSA}, we turned off the construction of the data structures other than the BCR BWT and left the memory parameters by default. In the case of \texttt{gsufsort}, we used the flag \texttt{--bwt} to build only the BWT. For \texttt{ropebwt2}, we set the flag $\texttt{-L}$ to indicate that the data was in one-sequence-per-line format, and the flag $\texttt{-R}$ to avoid considering the DNA reverse strands in the BWT. We ran the experiments on the Illumina reads using one thread in all programs because not all support multi-threading. For this purpose, we set the extra flag \texttt{-P} to \texttt{ropebwt2} to indicate single-thread execution. Figure~\ref{fig:ill_exp} summarises the results of our experiment on Illumina data. We only tested \texttt{grlBWT} and \texttt{ropebwt2} with the \hifi{} dataset (HiFi reads) as \texttt{egap}, \texttt{BCR\_LCP\_GSA}, and \texttt{gsufsort} are unsuitable for long strings. We did not use \texttt{r-pfpbwt} with \hifi{} either because we assumed it would exceed our available resources. We based our conclusions on the results we obtained with \texttt{r-pfpbwt} on the human pangenomes. Both \texttt{grlBWT} and \texttt{ropebwt2} support multi-threading, so we used four threads in both.

\paragraph{Experiments on Small Human Pangenomes} We assessed the performance of \texttt{ropebwt2}, \texttt{grlBWT}, \texttt{pfp-ebwt}, and \texttt{r-pfpbwt} in the small human pangenomes (files \hg{5-25}). As with \hifi{}, we did not report experiments on the tools tailored for short strings\footnote{We tried to test them, but they either crushed or their resource consumption was too high to compare against the other tools.} By default, \texttt{ropebwt2} uses four working threads, so we set the same number of threads for \texttt{grlBWT}, \texttt{pfp-ebwt}, and \texttt{r-pfpbwt}. The tool \texttt{r-pfp} has three steps, each requiring a different set of parameters. In the first step (\texttt{pfp++}), we used \texttt{-w 10 -p 71}. In the second one (recursive \texttt{pfp++}), we used \texttt{-w 5 -p 11}. Finally, we used \texttt{rpfpbwt64} with the parameter \texttt{--bwt-only} to produce the BWT. The input parameters for \texttt{ropebwt2} were the same as with Illumina data, except for the flag \texttt{-P}. We ran \texttt{pfp-ebwt} with default parameters. We did not report experiments with \texttt{pfp-ebwt} and \ill{25} as their execution crashed. Our results on small human pangenomes are shown in Figure~\ref{fig:hga_exp}. 

\paragraph{Experiments on the E. coli Pangenome} The file \ecoli{} (i.e., the E. coli pangenome) is particularly repetitive and not so big (see Table~\ref{tab:datasets}), so we used it to assess the performance improvement one could obtain in the BWT construction under highly-repetitive scenarios. We also used it to evaluate the impact of our parallel method (Section~\ref{sec:imp_det}) as this file fits the page cache of our machine. Thus, we limited our experiments on \ecoli{} to the tools \texttt{ropebwt2}, \texttt{pfp-ebwt}, \texttt{r-pfpbwt}, and \texttt{grlBWT}. We included \texttt{ropebwt2} as a baseline because it is the non-repetition-aware tool (i.e., it does not exploit repetitions) with the best performance. We ran each software twice, one execution with one thread and the other with four threads. Figure~\ref{fig:ecoli_exp} presents the results on the E. coli pangenome. 

\paragraph{Experiments on the Big Human Pangenome} We evaluated the performance of \texttt{grlBWT} in the big human pangenome (\hg{400}). We measured the running time and memory consumption of every round of \texttt{grlBWT} to look for potential problems that are not evident in small and repetitive instances (like \ecoli). We used 10 threads and a buffer for the parallel hash tables (Section~\ref{sec:imp_det}) whose size in RAM is 10\% of the input (near 120 GBs). The selection of 10\% for the buffer was arbitrary, and it is an input parameter. We did not perform experiments on \hg{400} with the other competitor tools because of the high computational resources they would require. Figures~\ref{fig:hg400_res1},~\ref{fig:hg400_res2}, and \ref{fig:hg400_res3} show the running time and memory usage of \texttt{grlBWT} with \hg{400}. 

\paragraph{Experiments on Super Phrases} We assessed the impact of super phrases in the LMS parsing. First, we ran our current implementation of \texttt{grlBWT}, which computes super phrases as described in Section~\ref{sec:prac_super_phrases}, and then we ran our CPM'22 version\footnote{Pre-release v1.0.0-alpha in our GitHub repository.} of \texttt{grlBWT} , which does not include super phrases. We ran both implementations with the inputs \texttt{hg10} (repetitive) and \texttt{ill1} (not so repetitive), recording the number of phrases in each $\mathcal{F}^{i}$ as well as its number of symbols $||\mathcal{F}^{i}||$. The results are shown in Figure~\ref{fig:super_phrases}. 

\paragraph{Machine} We carried out the experiments on a machine with Debian 4.9, 736 GB of RAM, and processor Intel(R) Xeon(R) Silver @ 2.10GHz, with 32 cores.

\section{Results and Discussion}

\subsection{Illumina and HiFi Reads}

The fastest method in Illumina reads was \texttt{ropebwt2}, with a mean elapsed time of 4.14 hours. It is then followed by \texttt{grlBWT}, \texttt{gsufsort}, \texttt{BCR\_LCP\_GSA}, \texttt{pfp-bwt}, and \texttt{egap}, with mean elapsed times of 6.08, 9.43, 9.58, 13.08, and 27.30 hours, respectively (Figure~\ref{fig:ill_exp}B). Regarding the working space, the most efficient was \texttt{BCR\_LCP\_GSA}, with an average memory peak of 5.73 GB. It is then followed by \texttt{grlBWT} and \texttt{ropebwt2}, with average memory peaks of 23.95 and 26.64 GBs, respectively. In both cases, the memory consumption increases slowly with the input size (see Figure~\ref{fig:ill_exp}A). In contrast, \texttt{egap}, \texttt{gsufsort}, and \texttt{pfp-ebwt} are far more expensive, and their memory consumption grows fast. The tool \texttt{egap} uses 110.94 GBs on average. On the other hand, \texttt{pfp-ebwt} and \texttt{gsufsort} have similar average memory peaks: 331.98 and 372.68 GBs, respectively. We notice \texttt{grlBWT} is the tool with the second-best overall performance, only outperformed in time by \texttt{ropebwt2} and in space by \texttt{BCR\_LCP\_GSA}. We consider this result remarkable as \texttt{grlBWT} does not perform any optimization on short reads. Besides, the repetitive patterns (the main feature \texttt{grlBWT} uses to reduce CPU time and space consumption) are highly fragmented in short reads.

One possible explanation for why we outperformed even the in-memory tool \texttt{gsufsort} is because \texttt{grlBWT} is less likely to have cache misses as it operates over small data structures. In particular, \texttt{gsufsort} resembles \textsf{SA-IS}, so it runs ISS over each $T^{i}$. The problem is that the cache misses triggered by the induction of distant suffix array buckets affect the running time, and the longer $T^{i}$ is, the more cache misses we trigger. Our method, in contrast, performs ISS over the parsing set $\mathcal{F}^{i}$, which is considerably smaller than $T^{i}$, making cache misses far less likely. 

\begin{figure}[t]
\centering
\includegraphics[width=\textwidth]{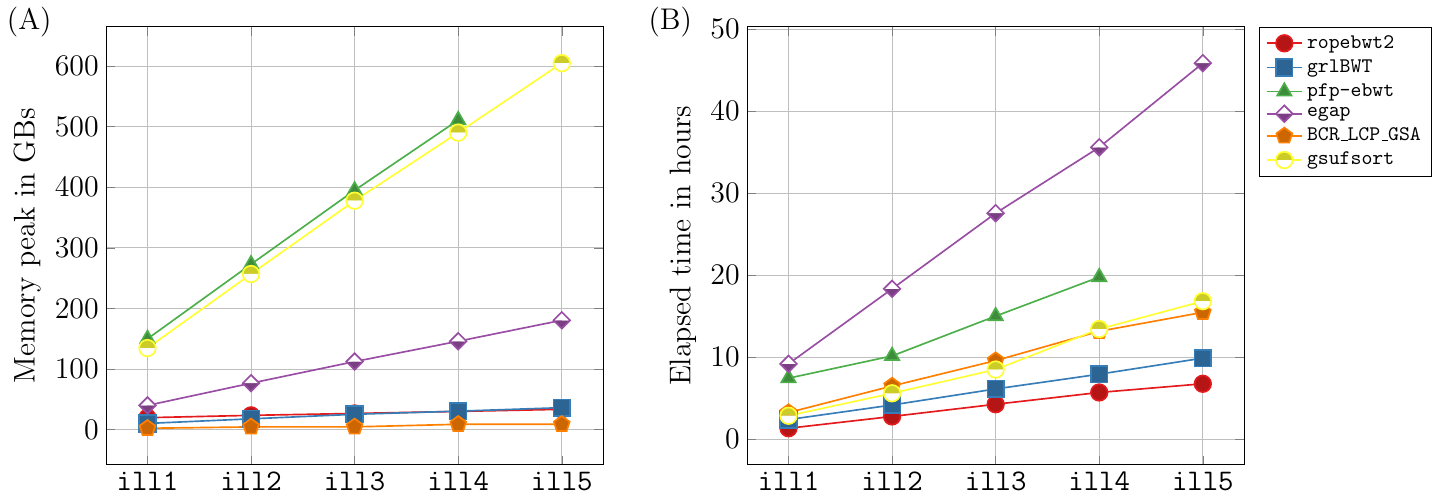}
\caption{Memory peak usage (GBs) and elapsed time (in hours) for the Illumina collections.}
\label{fig:ill_exp}
\end{figure}

Our results on HiFi reads (\hifi{}) differ from what we obtained with Illumina. The elapsed time for \texttt{grlBWT} was 8.48 hours, almost half the time spent by \texttt{ropebwt2} (16.02 hours). Still, they performed similarly in terms of memory peak: 25.15 GB of \texttt{grlBWT} versus 27.20 GB of \texttt{ropebwt2}.

We believe there are two reasons for our results on reads. The first reason is that \texttt{ropebwt2} is highly optimized for short reads, but not for long strings. The second reason is that HiFi reads are longer than Illumina reads, so \texttt{grlBWT} can capture repetitive patterns more efficiently. 

\subsection{Small Human Pangenomes}

Our tool \texttt{grlBWT} was the fastest software in small human pangenomes, with an average elapsed time of 3.94 hours versus 9.55, 20.95, and 11.26 hours for \texttt{pfp-ebwt}, \texttt{ropebwt2}, and \texttt{r-pfpbwt}, respectively. The time for \texttt{grlBWT}, \texttt{pfp-ebwt}, and \texttt{r-pfpbwt} grows smoothly with the input size, while the time for \texttt{ropeBWT} grows fast (see Figure~\ref{fig:hga_exp}B). These patterns of growth are because \texttt{grlBWT}, \texttt{pfp-ebwt}, \texttt{r-pfpbwt} exploit the text repetitions, while \texttt{ropeBWT2} does not. 

Regarding memory peak, \texttt{grlBWT} is also the most efficient tool, with a mean of 10.52 GB versus 18.05, 204.62, and 587.05 GBs for \texttt{ropebwt2}, \texttt{pfp-ebwt} and \texttt{r-pfpbwt}, respectively. Although \texttt{grlBWT} outperforms \texttt{ropebwt2} on average, their memory functions have the same pattern: both grow smoothly with the input size. In contrast, the memory consumption of \texttt{pfp-ebwt} and \texttt{r-pfpbwt} is considerable, although they still have a smooth pattern of growth (see Figure~\ref{fig:hga_exp}A). We did not expect \texttt{pfp-ebwt} and \texttt{r-pfpbwt} to have a high memory consumption as they also exploit the text repetitions. Still, we acknowledge this result could be because we did not choose suitable input parameters or because the implementations of \texttt{pfp-ebwt} and \texttt{r-pfpbwt} are still incomplete. It might also be possible that the parsing scheme of \texttt{grlBWT} is better than PFP at capturing text repetitions. 

\begin{figure}[t]
\centering
\includegraphics[width=\textwidth]{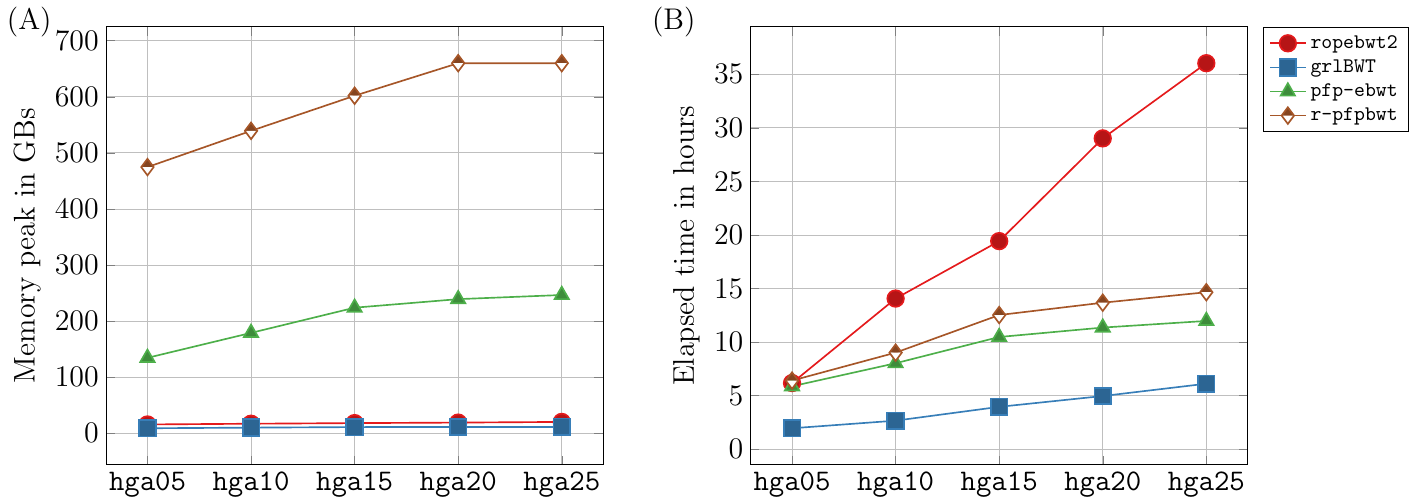}
\caption{Memory peak usage (GBs) and elapsed time (in hours) for the small human pangenomes.}
\label{fig:hga_exp}
\end{figure}

\subsection{E. coli Pangenome}

The performance of \texttt{grlBWT} is considerably better than that of \texttt{ropebwt2}, \texttt{pfp-ebwt}, and \texttt{r-pfpbwt} in highly-repetitive inputs (\ecoli). Using one thread, the elapsed time of \texttt{grlBWT} with \ecoli{} was 0.93 hours, while the elapsed times of \texttt{ropebwt2}, \texttt{pfp-ebwt}, and \texttt{r-pfpbwt} were 5.56, 1.08, and 1.19 hours, respectively. Thus, the average speed of \texttt{grlBWT} was 0.18 $\mu$secs per symbol, while the average speed of the other tools was 1.06, 0.21, and 0.23 $\mu$secs per symbol (respectively). We were also the most space-efficient method, with a memory peak of 0.82 GB (0.35 bits per symbol) for \texttt{grlBWT} versus memory peaks of 10.57, 9.95, and 17.30 GBs for \texttt{ropeBWT2}, \texttt{pfp-ebwt}, and \texttt{r-pfpbwt}, respectively. Our experiments on \ecoli{} also showed we could greatly improve our running time if we use parallelization. Four threads were enough to reduce \texttt{grlBWT}'s running time by more than half, from 0.93 hours to 0.34 hours (around 20 minutes). This improvement in the speed had a negligible impact on the memory peak as it increased from 0.82 GB to 0.99 GB. The only other tool that improved its performance significantly with parallelism was \texttt{ropebwt2}. It decreased its running time from 5.56 hours to 2.75 hours without changing its memory peak. However, its results were far from what we obtained with \texttt{grlBWT}. See Figure~\ref{fig:ecoli_exp} for more details on the E.~coli experiments.

\begin{figure}[t]
\centering
\includegraphics[width=0.60\textwidth]{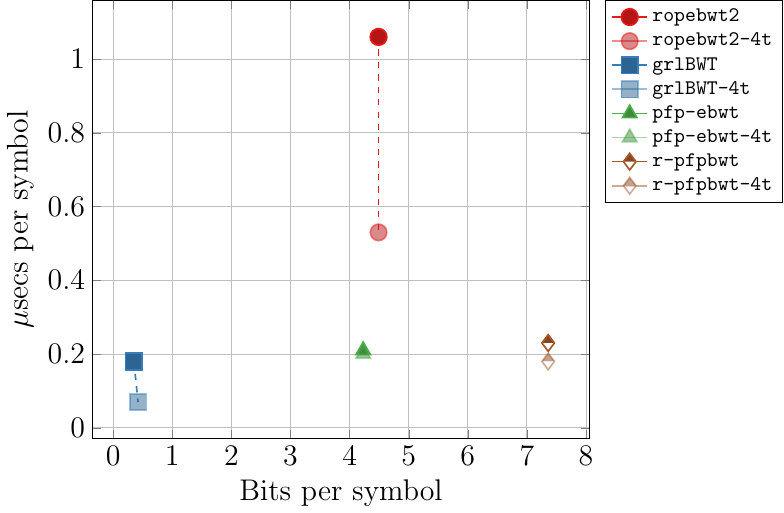}
\caption{Time-space tradeoffs for constructing the BWT of \ecoli{} (E.~coli pangenome). Time is given in microseconds per symbol and space in bits per symbol. The transparent points represent the instances using four threads (\texttt{-4t} suffix in the legend).}
\label{fig:ecoli_exp}
\end{figure}

\subsection{Big Human Pangenome}

The complete execution of \texttt{grlBWT} with the big human pangenome (file \hg{400}) took 41.21 hours and had a memory peak of 118.83 GB. Further inspection of this execution showed that the parsing phase of our algorithm (Section~\ref{ssec:parpha}) contributed to 96.1\% of the total running time of \texttt{grlBWT}, while the induction phase (Section~\ref{ssec:indpha}) contributed to the remaining 3.94\% (see Figure~\ref{fig:hg400_res2}A). Additionally, the first three parsing rounds contributed 93.2\% of the total running time of the parsing phase, with the first parsing round contributing more than 50\% (see Figure~\ref{fig:hg400_res2}B). These results indicate the bottleneck of the execution was in these rounds. A closer examination of the steps of the parsing rounds one, two, and three shows that transforming $T^{1}$ into $T^{2}$ is the most expensive step in the whole execution of \texttt{grlBWT} (see Figure~\ref{fig:hg400_res3}).

\begin{figure}[t]
\centering
\includegraphics[width=0.7\textwidth]{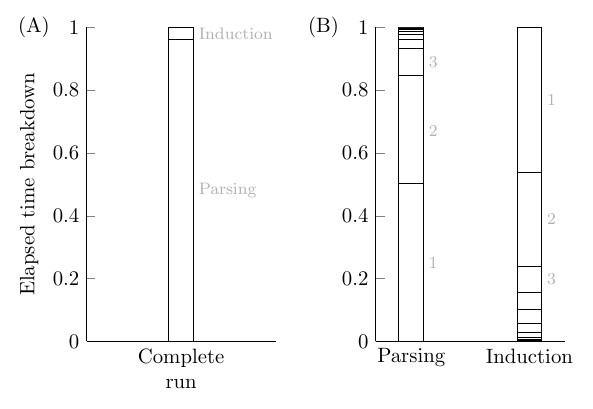}
\caption{Elapsed time breakdown of \texttt{grlBWT} using the file \hg{400}. (A) Breakdown of the phases. The bottom box is the parsing phase, and the upper box is the induction phase. The y-axis denotes the fraction of the total running time. (B) Breakdown of the rounds. Each box denotes one round in a phase. The y-axis, in this case, is the fraction of the total phase's running time. The rounds in the parsing phase (left bar) are read bottom-up, while the rounds in the induction phase (right bar) are read top-bottom. The numbers in grey to the right of the bars highlight the three most time-consuming rounds.}
\label{fig:hg400_res2}
\end{figure}

We believe this problem arises due to poor management of the page caches. Our tool \texttt{grlBWT} keeps $T^1$ and $T^{2}$ mostly on disk, loading small chunks of them (pages) into main memory to produce $T^{2}$ in a semi-external way. As explained in Section~\ref{sec:imp_det}, the Linux kernel speeds up disk accesses to these files by keeping recently-accessed pages cached in free RAM sections so they are available for future disk accessions. However, the problem in \hg{400} arises because $T^{1}$ and $T^{2}$ do not fit the page cache, so \texttt{grlBWT} triggers disk operation frequently due to page faults, making the parallel semi-external scans of $T^{1}$ and $T^{2}$ extremely slow. Despite the problem with $T^{1}$ and $T^{2}$, we note that the steps of \texttt{grlBWT} that operate over compressed data are remarkably efficient in terms of both time and space (see Figures~\ref{fig:hg400_res2} and~\ref{fig:hg400_res3}). At the end of this section, we propose an alternative solution to parse $T^{i}$ under page cache constraints.


The memory peak in the execution of \texttt{grlBWT} is dominated by the buffer of the parallel hash tables we use to construct the dictionary from the text (Section~\ref{sec:imp_det}). Notice the peak is 118.83 GB because we defined a buffer for these hash tables that uses at most $10\%$ of \hg{400} (about 120 GB). Put another way, the memory peak of \texttt{grlBWT} is a user-defined parameter when we process a large file in parallel.

\begin{figure}[t]
\centering
\includegraphics[width=0.8\textwidth]{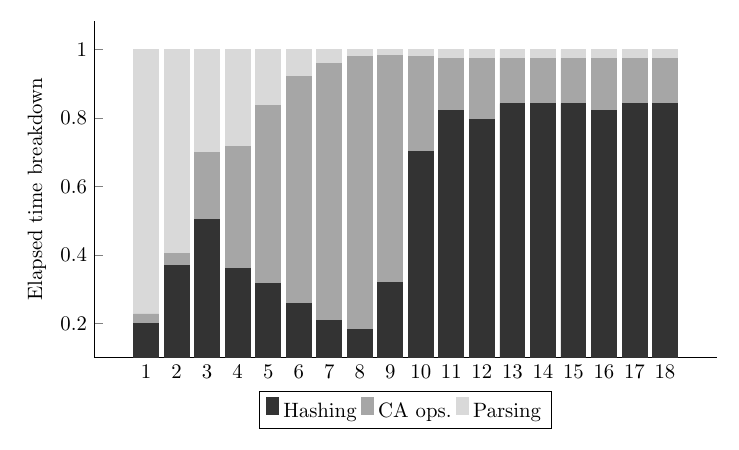}
\caption{Elapsed time breakdown of the parsing phase of \texttt{grlBWT} when executed with the input \hg{400} (i.e., the big human pangenome). Each $ith$ bar (x-axis) represents the time breakdown of the $ith$ round of parsing. ``Hashing'' (black box) refers to scanning $T^{i}$ and inserting its LMS phrases into a hash table. ``CA ops.'' (grey box) denotes the time spent performing compression-aware operations. That is, producing the dictionary's suffix array and the preliminary BWT, compressing the dictionary, and assigning symbols to the dictionary phrases. Finally, ``Parsing'' (light grey box) is the time spent transforming $T^{i}$ into $T^{i+1}$.}
\label{fig:hg400_res3}
\end{figure}

In practice, however, we are interested in the \emph{real} memory peak of our implementation. That is, the memory footprint produced by the data structures we keep in the heap (except for the aforementioned buffer, whose sole purpose is to enable a parallel execution). In every parsing round $i$, these data structures are the dictionary $D^{i}$ and its suffix $\gsame$ (plus some other minor data structures). On the other hand, during every induction phase $i$, the most relevant structures are $P^{i}$, the vector where we insert the symbols of $BWT^{i}_{bcr}$ that we induce from $BWT^{i+1}_{bcr}$ (see Section~\ref{sec:indbwtalgo}), and $G^{i}$, the grammar-like encoding of the expanded parsing set \exppset.

A close inspection of the memory footprint of the compressed data structures showed that the real memory peak of \texttt{grlBWT} is remarkably low compared to the input size: 21.05 GB ($1.7\%$ of \hg{400}'s size) during the first parsing round, and then another smaller peak of 14.09 GB ($1.1\%$ of \hg{400}'s size) during the construction of $BWT^{1}_{bcr}$ from $BWT^{2}_{bcr}$ in the last induction round (see Figure~\ref{fig:hg400_res1} for more details).

\begin{figure}[t]
\centering
\includegraphics[width=0.95\textwidth]{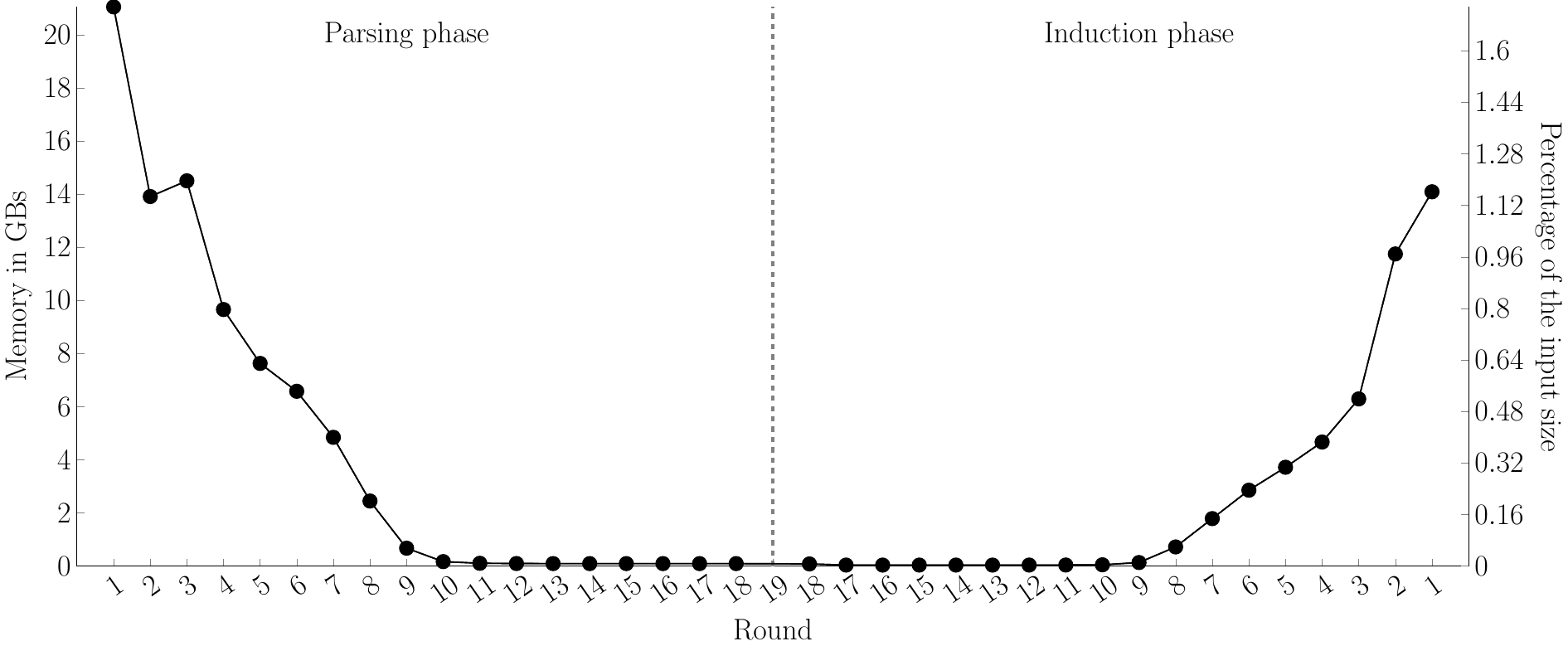}
\caption{Heap memory usage for the intermediate data structures when running \texttt{grlBWT} with \hg{400}. The x-axes are the iterations of $
\mathsf{grlBWT}$. The vertical dashed line marks the transition from the parsing phase to the induction phase. The left y-axis is the usage in GBs, while the right y-axis is the usage as a \emph{percentage} of \hg{400} in plain format. That is, the total bytes of the intermediate data structures divided by the bytes of the input (one byte per input symbol). The induction iterations are numbered backwards to match the way in which $\mathsf{grlBWT}$ works. 
The heap usage of every parsing iteration $i$ considers $\gsame$, $V^{i}$, and $D^{i}=(R^{i},L^{i}, N^{i})$ (Section~\ref{ssec:parpha}) plus other minor data structures. The heap usage of every induction iteration $i$ considers $V^{i}$, $P^{i}$, and $G^{i}$~(Section~\ref{sec:indbwtalgo}).}
\label{fig:hg400_res1}
\end{figure}

\paragraph{Parsing the Text Under Page Cache Constraints} We could tackle the problem of parsing a string $T^{i}$ that does not fit the page cache using a parallel procedure that implements a producer-consumer pattern. As before, assume we are allowed to use $p$ processes. We first initialize in main memory a set of $b>p$ buffers of $d$ bits each ($b$ and $d$ being parameters). Then, we create a producer process $t_{p}$ that reads chunks of $T^{i}$ semi-externally (and linearly) and puts them in the available buffers. After $t_p$ fills a buffer, it appends it into a queue $I$ flagged as ``ready to be processed''. On the other hand, we create a set of consumer processes $t_{c,1}, \ldots, t_{c,p-1}$ that actively check the state of $I$ to see if there are available chunks. When a consumer process $t_{c, j}$ pops a buffer from $I$, it parses its text using LMS parsing and then appends the consumed buffer into another queue $O$ that keeps the already processed data. Thus, once $t_{p}$ uses all the available buffers, it pops elements of $O$ to recycle buffers for new chunks of $T^{i}$, which appends into $I$ and the cycle begins again. As the consumer processes parse $T^{i}$ in parallel, they insert the phrases into one hash table $H$ that uses lock-free CPU instructions to support concurrent queries. We have to  choose $b$ and $d$ carefully so $t_{p}$ is always reading from disk, while the processes $t_{c,1},\ldots,t_{c, p-1}$ constantly consume chunks of $T^{i}$. This idea is more efficient than the scheme we presented in Section \ref{sec:imp_det} as it almost removes the need for a page cache. No matter how many disk accesses the producer process performs, the consumer processes do not remain idle.

\subsection{Effect of Super Phrases} Our heuristic of super phrases (Section~\ref{sec:prac_super_phrases}) has a notorious impact between parsing iterations three and five. In \texttt{hg10}, the number of phrases in $\mathcal{F}^{3}$, $ \mathcal{F}^{4}$, and $\mathcal{F}^{5}$ reduced by 14.4\%, 28.2\%, and 36.7\% (respectively) when using our heuristic (top-left plot in Figure~\ref{fig:super_phrases}). In \texttt{ill1}, the reductions in those iterations were 17.8\%, 32.8\%, and 6.6\%, respectively (top-right plot in Figure~\ref{fig:super_phrases}). However, our heuristic fails in iteration two (the one producing the largest $\mathcal{F}^{i}$) as it achieves a negligible reduction in both inputs. The reason could be that $T^{2}$ still has several repeated symbols, so our simple method, which relies on symbol frequencies to capture super phrases, does not work. Text $T^{3}$ is more likely to have unique symbols, so our heuristic probably works better than in $T^{2}$. This result is relevant as $\mathcal{F}^{3}$ is the second-largest parsing set in both \texttt{hg10} and \texttt{ill1}. Additionally, we noticed that the reduction of $\mathcal{F}^{3}$ is better in \texttt{ill1} than in \texttt{hg10} (17.8\% versus 14.4\%). A possible explanation is that \texttt{ill1} is less repetitive than \texttt{hg10}, so our heuristic becomes more efficient. Regarding the number of symbols in each $\mathcal{F}^{i}$ (second row of Figure~\ref{fig:super_phrases}), super phrases do not have a relevant impact. In \texttt{hg10}, the number of symbols in $\mathcal{F}^{3}, \mathcal{F}^{4}$, and $\mathcal{F}^{5}$ reduced by 2.4\%, 13.3\%, and 21.9\%, respectively. In contrast, in \texttt{ill1}, the reductions were 3.6\%, 22.8\%, and 34.6\%, respectively (slightly better than in \texttt{hg10}). The compression of $\mathcal{F}^{5}$ seems remarkable ($21.9\%$ and $34.6\%$), but keep in mind that this parsing set is considerably smaller than $\mathcal{F}^{2}$ and $\mathcal{F}^{3}$ in both inputs, so the overall space reduction is not big after all. We expected these results, as  we remove $p$ symbols from $\mathcal{F}^{i}$ for each sequence of $p$ consecutive parsing phrases we merge into one single super phrase, which is not much.

\begin{figure}[t]
\centering
\includegraphics[width=0.7\textwidth]{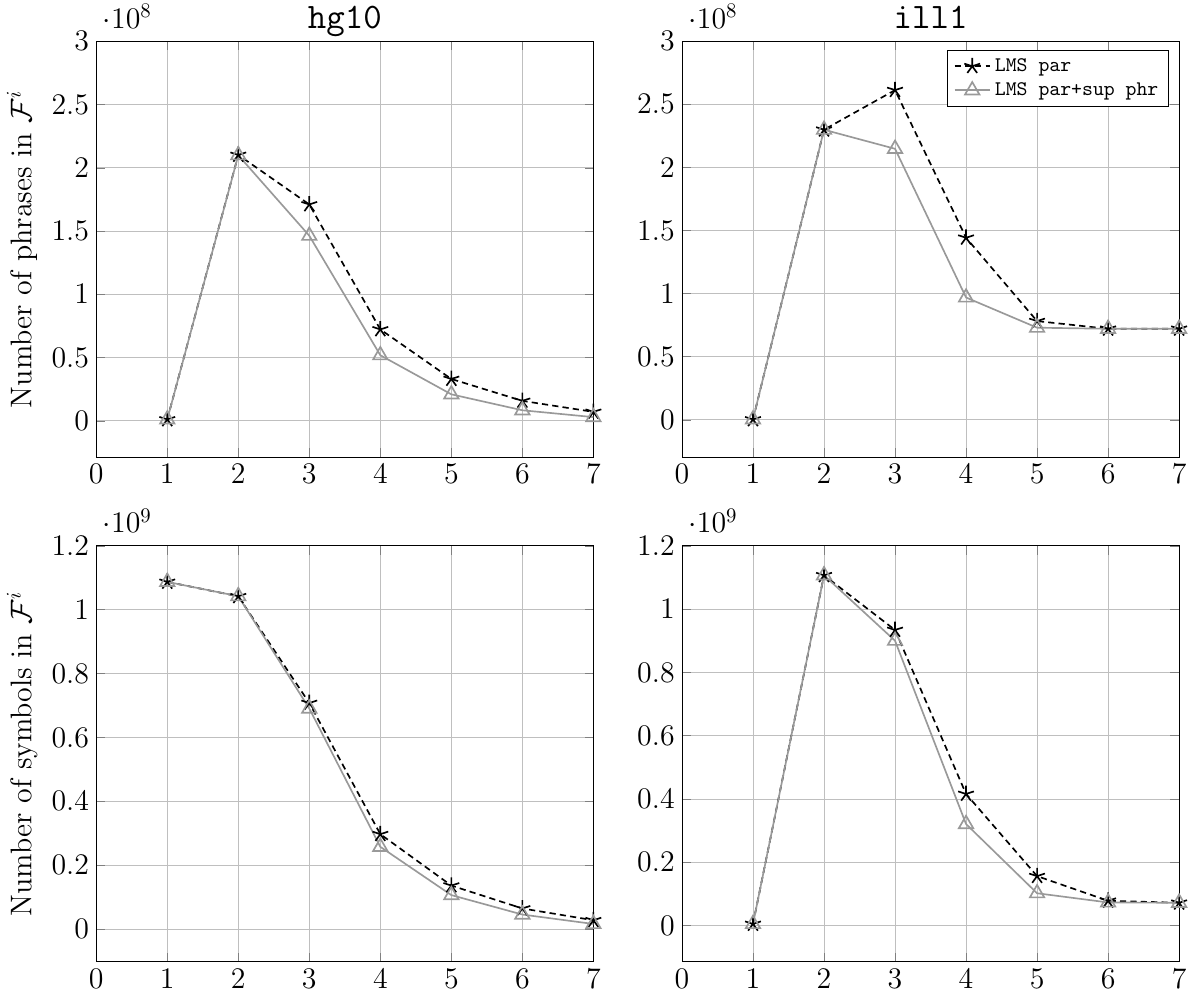}
\caption{Effect of super phrases in the first seven iterations of LMS parsing. The left column shows the results for \texttt{hg10} and the right column shows the results for \texttt{ill1}. In all the plots, the x-axis is the parsing iteration (e.g., the information of $\mathcal{F}^{3}$ is in $x=3$). The y-axes in the first row show the number of phrases in $\mathcal{F}^{i}$, with $1\leq i \leq 7$. The plots in the second row show the total number of symbols $||\mathcal{F}^{i}||$. The dashed line with star shapes is LMS parsing without super phrases (\texttt{LMS par}), and the grey line with triangle shapes is the LMS parsing with super phrases (\texttt{LMS par+sup phr}).}
\label{fig:super_phrases}
\end{figure}
\section{Concluding Remarks}

We introduced a method for building the BCR BWT that maintains the data of intermediate stages in compressed form. The representation we chose reduces not only working memory but also computation time. Our experimental results showed that our algorithm is competitive with the state-of-the-art tools under not-so-repetitive scenarios and greatly reduces the computational requirements when the input becomes more repetitive. This last feature proved an efficient solution for processing terabytes of redundant data under limited computational resources. For now, the hard drive is the main aspect that degrades our performance in large inputs and prevents us from running even more significant collections. However, we are confident we can solve these problems with a more careful implementation of our algorithm.

An important observation is that our framework enables the construction of the $r$-index~\cite{g2018op} for large collections in practice, as it is possible to obtain this data structure in $O(r)$ bits using the BWT as input. This idea certainly facilitates the indexation of large-scale pangenomes. However, more is needed to make the $r$-index a practical solution for pangenomes as it does not support all the relevant queries necessary for Genomics analyses.

We believe it is possible to use our repetition-aware strategy for other operations. For instance, update and merge multiple BWTs. The intuition to merge BWTs is that if we have several texts, we first run the parsing round of \textsf{grlBWT} independently in each of them to produce a list of dictionaries (one dictionary set for each text). Then, we combine the dictionaries in one set, and finally, we run the induction phase of \textsf{grlBWT} over the combined dictionary set. Our experiments showed that manipulating dictionary sets and running the induction phase of \textsf{grlBWT} is fast, even in terabytes of data. Thus, the merge of the BWTs should be fast too. The update of a BWT should work similarly. We first produce an initial BWT by running \textsf{grlBWT} over a text collection and save its dictionary set. Then, if we need to append more sequences to this BWT, we run the parsing phase of \textsf{grlBWT} on the new sequences to produce a new dictionary set, which we combine with the one we previously saved. As before, we run the induction phase over the combined dictionary set to produce the updated BWT. We can keep the combined dictionary set again if we need to append more sequences in the future. These ideas (merge and update) could enable the efficient construction of huge BWTs in distributed systems.

There are also other applications for our compression-aware technique we would like to explore, not just BWT-related topics. For instance, computing all-vs-all maximal exact matches in string collections, grammar or Lempel-Ziv compression, self-indexes, and approximate or multiple alignments, among other things.

\section*{Acknowledgements}

Funded in part by Basal Funds FB0001, ANID, Chile.



\bibliographystyle{elsarticle-num} 
\bibliography{references.bib}



\end{document}